\documentclass[10pts]{IEEEtran}
\usepackage{dsfont}
\usepackage{times}
\usepackage{graphicx}
\usepackage{graphics,color,oplotsymbl,bbding} % for pdf, bitmapped graphics files
\usepackage{rotating}
\usepackage{algorithmic}
\usepackage{comment}
\usepackage{algorithm,amssymb}
\usepackage{epsfig}
\usepackage{xspace}
\usepackage{flushend}
\usepackage[cmex10]{amsmath}
\usepackage{url}
\usepackage{hyperref}
\usepackage{kpfonts}
\usepackage{bm} 
\usepackage{cite}
\usepackage{textcomp}
\usepackage{grffile}
\usepackage{amsmath,amssymb,amsthm}
\usepackage{algorithm}
\usepackage{algorithmic}
\usepackage{nicefrac}
\usepackage{colortbl}
\usepackage{overpic}
\usepackage{tabularx}

\newcommand {\R} {{\mathbb R}}

\newcommand {\diag} {\mbox{diag}}

\newcommand{\cR}{\mathcal{R}}
\newcommand{\cA}{\mathcal{A}}
\newcommand{\cI}{\mathcal{I}}
\newcommand{\cO}{\mathcal{O}}
\newcommand{\cG}{\mathcal{G}}

\newcommand{\OGD}{\texttt{OGD}\xspace} 
\newcommand{\DA}{\texttt{DA}\xspace} 
\newcommand{\DAQ}{\texttt{DAQ}\xspace} 
 
\newcommand{\RRM}{\texttt{RRM}\xspace}
\newcommand{\BR}{\texttt{BR} \xspace}

\DeclareMathOperator*{\argmax}{arg\,max}

\newcommand {\SDSC}{{\bf SDSC}\xspace}

\newtheorem{lemma}{Lemma}

\newtheorem{corollary}{Corollary}
\newtheorem{theorem}{Theorem}
\newtheorem{definition}{Definition}

\newtheorem{remark}{Remark}
\newtheorem{assumption}{Assumption}

\usepackage{subcaption}
\usepackage{float}
\usepackage{booktabs}

\usepackage{array,multirow}

\usepackage{xcolor}

\definecolor{lightgray}{gray}{0.9}
\definecolor{color1}{RGB}{255, 204, 204}
\definecolor{color2}{RGB}{255, 247, 188} % Jaune pâle
\definecolor{color3}{RGB}{210, 235, 255} % Bleu clair
\definecolor{color4}{RGB}{235, 215, 250} % Violet pastel
\definecolor{color5}{RGB}{255, 220, 220} % Rouge pastel
\definecolor{color6}{RGB}{220, 255, 220} % Vert pastel
\definecolor{color7}{RGB}{220, 220, 255} % Bleu pastel
\definecolor{color8}{RGB}{255, 240, 200} % Orange pastel
\definecolor{color9}{RGB}{245, 222, 179} % Beige pastel
\definecolor{color10}{RGB}{230, 230, 250} % Lavande pastel

\definecolor{propositiongray}{gray}{0.85} % Un peu plus foncé
\definecolor{lemmagray}{gray}{0.8} % Moyen clair
\definecolor{corollarygray}{gray}{0.75} % Le plus 

\usepackage{mdframed}

\newmdtheoremenv[
  linecolor=black,
  backgroundcolor=color5,
  linewidth=2pt,
  innerleftmargin=5pt,
  innerrightmargin=5pt,
  innertopmargin=5pt,
  innerbottommargin=5pt
]{theoremSp}{Theorem}

\newmdtheoremenv[
  linecolor=black,
  backgroundcolor=color3,
  linewidth=2pt,
  innerleftmargin=5pt,
  innerrightmargin=5pt,
  innertopmargin=5pt,
  innerbottommargin=5pt
]{lemmaSp}{Lemma}

\newmdtheoremenv[
  linecolor=black,
  backgroundcolor=color2,
  linewidth=2pt,
  innerleftmargin=5pt,
  innerrightmargin=5pt,
  innertopmargin=5pt,
  innerbottommargin=5pt
]{corollarySp}{Corollary}

\newmdtheoremenv[
  linecolor=black,
  backgroundcolor=color3,
  linewidth=2pt,
  innerleftmargin=5pt,
  innerrightmargin=5pt,
  innertopmargin=5pt,
  innerbottommargin=5pt
]{propositionSp}{Proposition}

\newmdtheoremenv[
  linecolor=black,
  backgroundcolor=color6,
  linewidth=2pt,
  innerleftmargin=5pt,
  innerrightmargin=5pt,
  innertopmargin=5pt,
  innerbottommargin=5pt
]{assumptionSp}{Assumption}

\newmdtheoremenv[
  linecolor=black,
  backgroundcolor=color9,
  linewidth=2pt,
  innerleftmargin=5pt,
  innerrightmargin=5pt,
  innertopmargin=5pt,
  innerbottommargin=5pt
]{definitionSp}{Definition}

\IEEEoverridecommandlockouts 

\begin{document}
\title{Learning in Proportional Allocation\\
       Auctions Games}

%\titlerunning{Abbreviated paper title}
% If the paper title is too long for the running head, you can set
% an abbreviated paper title here
%
\author{Younes Ben Mazziane$^{1}$, Cleque-Marlain Mboulou Moutoubi$^{1}$, \\ Eitan Altman$^{1,2}$ and Francesco De Pellegrini$^{1}$ \thanks{$^{1}$LIA, Avignon university, Avignon, France; $^{2}$INRIA, Sophia Antipolis, France.}}

% \author{Y. Ben Mazziane$^{1}$, C. Mboulou$^{1}$, F. De Pellegrini$^{1}$ and E. Altman$^{1,2}$\thanks{$^{1}$LIA, Avignon university, Avignon, France; $^{2}$INRIA, Sophia Antipolis, France.}}
\maketitle  
% typeset the header of the contribution
\thispagestyle{empty}
\begin{abstract}
The \textit{Kelly} or \textit{proportional allocation} mechanism is a simple and efficient auction-based scheme that distributes an infinitely divisible resource proportionally to the agents' bids. %When agents are aware of the allocation rule, their interactions form a game, that has been extensively studied. 
When agents are aware of the allocation rule, their interactions form a game extensively studied in the literature. This paper examines the less explored repeated Kelly game, focusing mainly on utilities that are logarithmic in the allocated resource fraction. We first derive this logarithmic form from fairness–throughput trade-offs in wireless network slicing, and then prove that the induced stage game admits a unique Nash equilibrium (NE). For the repeated play, we prove convergence to this NE under three behavioral models: (i) all agents use Online Gradient Descent (\texttt{OGD}), (ii) all agents use Dual Averaging with a quadratic regularizer (\texttt{DAQ}) (a variant of the Follow-the-Regularized leader algorithm), and (iii) all agents play myopic best responses (\texttt{BR}). Our convergence results hold even when agents use personalized learning rates in \texttt{OGD} and \texttt{DAQ} (e.g., tuned to optimize individual regret bounds), and they extend to a broader class of utilities that meet a certain sufficient condition. Finally, we complement our theoretical results with extensive simulations of the repeated Kelly game under several behavioral models, comparing them in terms of convergence speed to the NE, and per-agent time-average utility. The results suggest that \texttt{BR} achieves the fastest convergence and the highest time-average utility, and that convergence to the stage-game NE may fail under heterogeneous update rules. 
\end{abstract} 
\begin{IEEEkeywords}
Kelly mechanism, auctions, game theory, learning in games, no-regret learning. 
\end{IEEEkeywords}

%%%%%%%%%%%%%%%%%%%%%%%%%%%%%%%%%%%%%%%%%%%%%%%%%%%%%%%%%%%%%%%%%%%%%%%%%%%%%%%%%%%%%%%%%%%%%%%%%%%%
%%%%%%%%%%%%%%%%%%%%%%%%%%%%%%%%%%%%%%%%%%%%%%%%%%%%%%%%%%%%

\section{Introduction}\label{sec:intro}
% Context of auction based decentralized allocation

% \YBnote{Perhaps, we can write a small paragraph citing these papers.\begin{itemize}
%     \item This is a TON paper studying best-response dyanmics in repeated Kelly games: https://arxiv.org/pdf/1612.08446

%     \item Another TON paper for network slicing and proportional allocation rule: https://arxiv.org/pdf/1705.00582 

%     \item Another TON paper, for mean-field games for resource sharing in cloud based networks: https://ieeexplore.ieee.org/document/7015619

%     \item TON paper for network slicing games: https://ieeexplore.ieee.org/document/10107421
% \end{itemize}}
Decentralized resource allocation in large-scale systems is a fundamental problem extensively studied in network economics~\cite{johari_thesis2004}. In this context, a resource owner seeks to distribute resources among multiple agents to optimize an objective, such as maximizing \textit{social welfare}, namely, the aggregate net benefit of the agents, or their own revenue. It is standard to assume that the resource owner may have partial or lack information about the agents' utilities or preferences. Instead, they depend on signals~\cite{basar_JSAC_PoA1} provided by the agents, such as declared valuations, willingness to pay, or other indirect indicators of agents' preferences. Moreover, agents often act selfishly and strategically in order to maximize their benefits. This problem is prevalent in various technological domains, including bandwidth allocation in communication networks~\cite{TuffinInfocom04BandwidthAuction}, task scheduling in cloud computing~\cite{Trans_SC_auction_schedulingTasksCloud}, energy distribution in smart grids~\cite{SaadW_auction_smartGrid}, and pricing mechanisms in shared transportation systems~\cite{Auction_RideSharingICDE19}.
\begin{figure}[t]
    \centering
    \includegraphics[width=1\linewidth]{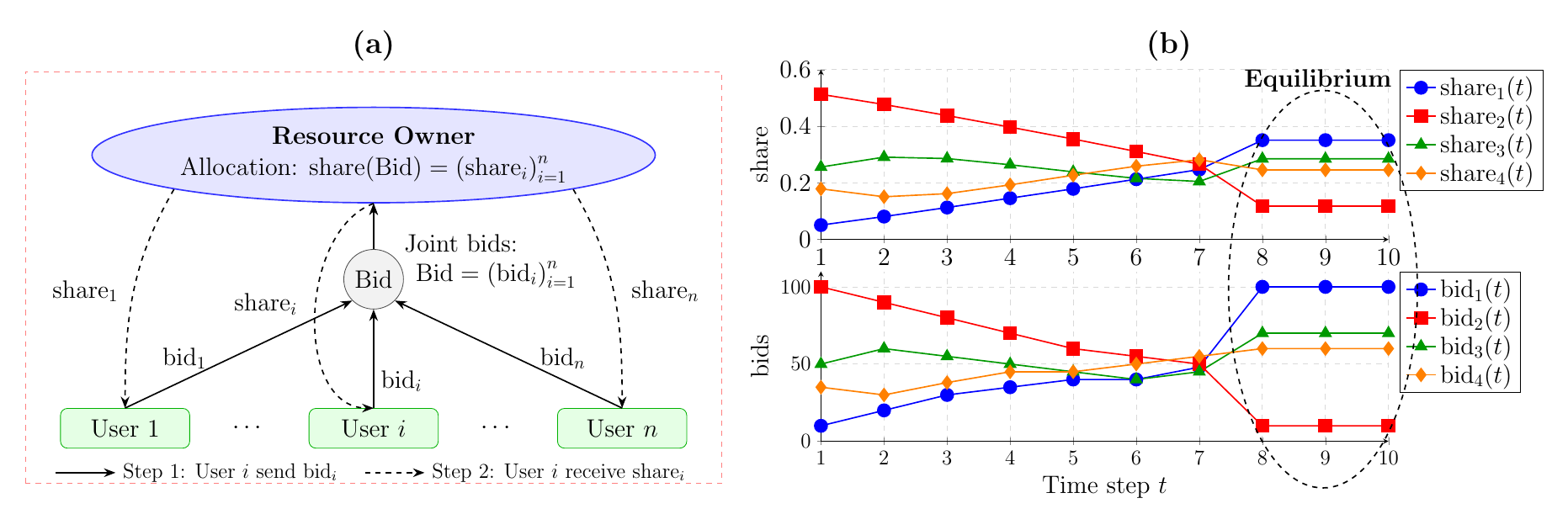}
{\captionsetup{aboveskip=2pt,belowskip=0pt}
    \caption{Repeated resource allocation game.}
    \label{fig:example_repeated_Kelly}}
\end{figure}

% Description of the Kelly
The \textit{Kelly} or \textit{proportional allocation} mechanism stands out among decentralized resource allocation mechanisms for its simplicity and efficiency~\cite{Kelly_Original97,johari_thesis2004}. In its basic form, agents submit bids to secure shares of a finite, infinitely divisible resource, with allocations distributed proportionally to their bids. In a generalized formulation \cite{basar_JSAC_PoA1}, each user’s allocation is determined by a \textit{weighting function} of their bid, enabling diverse allocation strategies. In particular, the classic Kelly mechanism arises when this weighting function is simply the identity for all agents. 

Many works have shown that the Kelly mechanism enjoys strong social-welfare optimality guarantees across several settings: (i) agents with unlimited budgets who are either \textit{price takers} (unaware of the allocation rule)~\cite{Kelly_Original97} or \textit{price anticipator} (aware of it)~\cite{JohariEfficiencyLoss_Math_Op_Res04,basar_JSAC_PoA1,VCG_Kelly_Hajek_JSAC_07}, and (ii) environments in which price anticipator agents have budget constraints~\cite{KellyBudgetStoc_SW_LSW13,LPoa_Constant_Caragiannis16, CaragiannisV18_EC_Constraint_Bids}. More specifically, price anticipator agents induce a competitive game with continuous action sets, and the guarantees in terms of social welfare hold exclusively at a Nash Equilibrium (NE) of this game, that we refer to as the \textit{Kelly game} in the sequel. 

In practice, however, agents are not necessarily aware of the utilities of other agents.
A more realistic setting is when agents know only their own utilities but adapt their bids over repeated synchronous rounds based on feedback from previous rounds, e.g., the aggregate bid. This motivates the study of the repeated Kelly game. Figure~\ref{fig:example_repeated_Kelly} illustrates this setting, where at each round, agents compete over a new resource by submitting bids based on outcomes from previous rounds. They then receive a fraction of the resource according to the Kelly mechanism. In this setting, rational agents aim to maximize their time-average utility.

To our knowledge, only a few works have examined the repeated Kelly game~\cite{even2009convergence,Mertikoupoulous_ComNet_17,elkind2024_FP_NE}, and they focus on the case where agents’ utilities are linear in the fraction of the allocated resource. This case coincides with \textit{Tullock} (rent-seeking) contests~\cite{perez1992_rentseeking}, also known as lottery contests~\cite{elkind2024_FP_NE}. In particular, \cite{even2009convergence} shows that if every player uses any \textit{no-regret} bidding algorithm, then each player’s average utility converges to their stage-game NE utility. \cite{Mertikopoulos_potential_MW_2017} proves that under a specific bidding rule used by all agents, the sequence of actions converges to a NE of the stage game. \cite{elkind2024_FP_NE} establishes convergence to this equilibrium when all agents employ fictitious-play updates. On the other hand, we study the repeated Kelly game with a general class of utilities that include logarithmic ones.

%%%%%%%%%%%%%%%%%%%%%%%%%%%%%%%%%%%%%%%%%%%
\subsection{Contributions}
%%%%%%%%%%%%%%%%%%%%%%%%%%%%%%%%%%%%%%%%%%%

Our contributions are summarized as follows:
\begin{enumerate}
    \item  We show that a practical scenario of interest induces a repeated Kelly game with logarithmic utilities.

    \item  We derive a tractable sufficient condition ensuring that the stage game satisfies Rosen’s Diagonal Strict Concavity (\textbf{DSC}) with some vector $\bm{r}\succ \bm{0}$ ($\bm{r}$-DSC)~\cite{rosen}, equivalently, $\bm{r}$-monotonicity, and thus admits a unique Nash equilibrium. The condition reduces to verifying negativity of a scalar function, and we show it holds for logarithmic utilities.

    \item For repeated Kelly games satisfying our sufficient $\bm r$-DSC condition, and with utilities that differ only by multiplicative factors, we prove convergence when all agents use either Online Gradient Descent ($\OGD$) or Dual Averaging with a quadratic regularizer ($\DAQ$).
    
    % whose stage utilities satisfy~$\bm{r}$-DSC and differ only by multiplicative factors, In particular, this yields convergence for the logarithmic-utility instance.

    \item We establish convergence of best-response dynamics in the repeated Kelly game under logarithmic utilities. 

    \item We conduct extensive numerical simulations to validate our theoretical results, and complement them with additional scenarios in which agents run heterogeneous learning dynamics.

\end{enumerate}

\medskip 

% \noindent \textbf{Log-utility model from practice.} While prior work on the repeated Kelly game has largely focused on utilities that are linear in the allocated share, we primarily study 
% this game when this dependence is logarithmic. 
% Section~\ref{ss:Example_log_utility}
% shows that in the problem of bandwidth allocation for network slicing, 
% the Kelly mechanism naturally induces utilities which are logarithmic in the allocated fraction. There, the objective of an agent, e.g., a slice tenant, is to balance throughput and fairness across a group of users.

% More broadly, concave utilities such as those of the logarithmic type are meant to express a diminishing marginal utility with additional resources.

We provide more details about our contributions. 

\medskip 

\noindent\textbf{Rosen's $\bm r$-DSC and uniqueness of the Nash equilibrium.}
A standard way to establish \textbf{DSC} is to show that a certain $n\times n$ matrix (with $n$ the number of agents) is negative definite over the action set. 
In general, checking negative definiteness requires  $\cO(n^{2})$ memory and  $\cO(n^{3})$ time. In contrast, Theorem~\ref{thm:SDSC_log} exploits the structure of the repeated Kelly game to reduce this verification to $\cO(n)$ time and $\cO(1)$ memory. This tractable condition also enables proving that there exists a vector $\bm{r}$ for which $\bm{r}$-\textbf{DSC} holds under logarithmic utilities. Moreover, proving \textbf{DSC} extends prior uniqueness guarantees of the NE to arbitrary convex action sets, which accommodates budget constraints and Kelly mechanisms with general weighting functions.

% Establishing Rosen’s Diagonal Strict Concavity is a key step toward proving convergence of no-regret learning dynamics. It also implies uniqueness of the Nash equilibrium under convex actions set, which let us accommodate budget constraints and Kelly mechanisms with general weighting functions, which is new with respect to previous results. More generally, DSC is a useful structural property even in games with coupled constraints~\cite{rosen}. 

\medskip 

\noindent \textbf{Convergence of no-regret learning to the NE.} Previous results show that convergence of \texttt{OGD} is guaranteed when the stage game satisfies $\bm r$-\textbf{DSC} for some $\bm r>\bm 0$~\cite{zhou_hal_OGD_convergence}, whereas convergence of \texttt{DAQ} requires the stronger condition $\bm 1$-\textbf{DSC}~\cite{Mertikopoulos_VI_2019_MP}. Under our $\bm r$-\textbf{DSC} sufficient condition---which holds for logarithmic utilities---convergence of \texttt{OGD} follows immediately. However, these previous results impose a common learning rate across agents, and our $\bm 1$-\textbf{DSC} sufficient condition holds only for homogeneous logarithmic utilities. 
We show in Theorem~\ref{thm:OGD_convergence_log} that, under affine heterogeneity in utilities (e.g., utilities share the same logarithmic form but differ by agent-specific multiplicative factors), \texttt{OGD} still converges to the stage-game NE when agents use regret-optimal learning rates. Under the same heterogeneity model, Theorem~\ref{cor:convergence_DAQ} further establishes that, assuming our $\bm r$-\textbf{DSC} sufficient condition (for some $\bm r>\bm 0$), \texttt{DAQ} also converges under personalized regret-optimal learning rates.

% The \texttt{OGD} guarantee therefore applies under our $\bm{r}$-DSC result, but previous results impose a common learning rate across agents. This requirement is generally incompatible with heterogeneous utilities: even when all players share the same underlying logarithmic form, their utilities may differ by an agent-dependent positive scaling, leading to different regret-optimal learning rates. We prove in Theorem~\ref{thm:OGD_convergence_log} that, under this affine heterogeneity, \texttt{OGD} still converges when each agent uses its own utility-tuned learning rate. On the other hand, establishing $\bm{1}$-DSC under heterogeneous utilities is more challenging and is not covered by our condition. 

\medskip 

\noindent \textbf{Convergence of best response dynamics.}
When agents use best response dynamics, we model the iterates of agents as fixed point iterations. We then derive closed form expressions of the Jacobian of the fixed point operator, and we prove that it is a contraction. Leveraging this,  Theorem~\ref{thm:BR_converges_NE} proves convergence of the system to the NE of the stage game and shows that the convergence speed is linear.

\medskip

\noindent \textbf{Numerical simulations.} We simulate the bidding algorithms under both \textit{homogeneous dynamics}, where agents use the same update rule, and \textit{heterogeneous dynamics}, where two update rules coexist in the population. Under \textit{homogeneous dynamics}, the simulations confirm our theoretical convergence results to the stage game NE, and indicate that, in terms of both convergence speed and time-average utility, $\BR$ performs best, followed by $\OGD$, and then $\DAQ$. Under \textit{heterogeneous dynamics}, the results suggest that convergence to the stage game NE may fail. However, the resulting time-average utilities remain similar across algorithms and close to the NE ones, with $\BR$ consistently better in the considered settings.

% We complement our theoretical results with an extensive numerical study of the repeated Kelly game under logarithmic utilities.  The simulations serve two main purposes.  First, they validate the predicted convergence properties of no-regret learning and best-response dynamics, both in utility and action spaces, and illustrate the differences in convergence speed across algorithms. Second, they quantify the impact of heterogeneity both in agents’ valuations and in learning dynamics on efficiency, stability, and equilibrium selection. In particular, we investigate hybrid populations in which two different algorithms coexist, revealing algorithmic externalities and regimes where aggressive dynamics, improve players' satisfaction at the cost of efficiency or stability.
% \color{black}

%\color{black}
%\subsection*{Road map} 
\subsection{Paper outline} 

The rest of the paper is organized as follows. Section~\ref{sec:problem} formally introduces the Kelly mechanism and the induced game. Section~\ref{sec:main_results} derives the repeated Kelly game with $\alpha$-fair utilities for bandwidth allocation in wireless networks, presents the proposed bidding algorithms, and establishes their convergence guarantees. Section~\ref{sec:simulations} complements the theoretical results with numerical simulations. Section~\ref{sec:conclusion} concludes the paper.

%%%%%%%%%%%%%%%%%%%%%%%%%%%%%%%%%%%%%%%%%%%%%%%%%%%%%%%%%%%%%%%%%%%%%%%%%%%%%%%%%%%%%%%%%%%%%%%%%%%%
%%%%%%%%%%%%%%%%%%%%%%%%%%%%%%%%%%%%%%%%%%%%%%%%%%%%%%%%%%%%
\section{Problem Formulation}\label{sec:problem}
We consider a repeated allocation of a unit-sized divisible resource among~$n$ agents over~$T$ rounds according to the general allocation mechanism proposed in~\cite{basar_JSAC_PoA1}, which extends the Kelly mechanism introduced in~\cite{Kelly_Original97}.

\noindent \textbf{Bidding.} At each step $t$, each agent $i$ submits a bid $b_{i,t}$ that must be at least a fixed positive constant $\tilde \epsilon_i$, i.e., $b_{i,t}\geq \tilde \epsilon_i>0$. It must also respect the budget constraint, i.e., $b_{i,t}\leq  \tilde c_i$, where~$\tilde c_i$ is the budget of agent $i$ at each round $t$. This bid is based on previously submitted bids $\bm{b}_1, \ldots, \bm{b}_{t-1}$, where $\bm{b}_s= (b_{i,s})_{i\in \mathcal{I}}$ and $\mathcal{I}$ is the set of agents.    

\noindent \textbf{Allocation:} Based on the bids of each round $t$, the resource owner allocates fractions $x_{i,t}(\bm{b}_t)$ of the resource according to
   \begin{align}
    x_{i,t}(\bm{b}_t) =
    \begin{cases}
    \frac{w_i(b_{i,t})}{\sum_{j=1}^{n} w_j(b_{j,t}) + \delta}, & \text{if } w_i(b_{i,t}) > 0,\\
    0, & \text{otherwise},
    \end{cases}
    \end{align}
where $w_i: \mathbb{R}^{+} \to \mathbb{R}^{+}$ are continuous, increasing functions governing how resources are distributed, and $\delta \geq 0$ is a reservation parameter \cite{basar_JSAC_PoA1}. If $w_i$ is the identity function, this mechanism reduces to the classic Kelly mechanism. 

Each agent~$i$ has a valuation function~$V_i: [0,1] \to \mathbb{R}_{\geq 0}^{n}$, where~$V_{i}(x_i)$ quantifies monetary benefit of acquiring a fraction~$x_i$ of the resource. The utility of agent $i$ at each step $t$ is determined by the function $\varphi_i$, defined as the value the agent derives from the allocated fraction minus the payment, i.e., $\varphi_i(\bm{b}_t)=V_{i}(x_{i,t}(\bm{b}_t)) - b_{i,t}$. The objective of each agent is devise an online bidding strategy~$b_{i,1}, \ldots, b_{i,T}$ to maximize their aggregate utility, i.e., $\sum_{t=1}^{T} \varphi_{i}(\bm{b}_t)$.

Following~\cite{basar_JSAC_PoA1}, define the change of variable $z_{i,t} = w_i(b_{i,t})$, and the function $p_i: \mathbb{R}^{+} \to \mathbb{R}^{+}$ as the inverse of $w_i$, i.e., $p_i(z_i) \triangleq w_i^{-1}(z_i)$. We refer to $p_i$ as the payment function for agent $i$. Under this change of variable, the allocation rule and the utility function become,
\begin{align}\label{e:allocation_mechanism}
    &x_{i,t}(\bm{z}_t) =
    \begin{cases}
    \frac{z_{i,t}}{\sum_{j=1}^{n} z_{j,t} + \delta} & \text{if } z_{i,t} > 0, \\ 
    0 & \text{otherwise}.
    \end{cases} \\ \label{e:utility_auction_game}
   & \varphi_i(\bm{z}) \triangleq V_{i}(x_i(\bm{z})) - p_{i}(z_i).
\end{align}
Note that both formulations are equivalent in the sense that the allocated fraction corresponding to a given payment is the same in each setting. In this paper, we will focus on the second formulation using $\bm{z}$. 

% which can be viewed as the signal submitted by the agents, with the required payment given by the function $p_i: \mathbb{R}^{+} \to \mathbb{R}^{+}$ defined as
% \begin{align}\label{e:cost}
%     p_i(z_i) \triangleq w_i^{-1}(z_i),
% \end{align}
% where $w_i^{-1}$ is the inverse of $w_i$. 

% We study the budgeted Kelly mechanism, where each agent has a budget constraint $c_i$, i.e., $p_i(z_i)\leq c_i$.

When agents are aware that the Kelly mechanism governs resource allocation, the interaction between them forms a competitive repeated game. We define $\mathcal{G}$ as the stage game arising from this competition, where the set of players is $\mathcal{I}$ with utility functions $\boldsymbol{\varphi} = (\varphi_i)_{i \in \mathcal{I}}$ and action space constrained by budgets, denoted $\mathcal{R}$, and given by the cartesian product of $\mathcal{R}_i$ for $i\in \mathcal{I}$, where $\mathcal{R}_{i} \triangleq [\epsilon_i, c_i]$, where $\epsilon_i = p_i^{-1}(\tilde \epsilon_i)$ and $c_i = p_i^{-1}(\tilde c_i)$. 

% with utility functions $\boldsymbol{\varphi} = (\varphi_i)_{i \in [n]}$ and an action space constrained by budgets, denoted $\mathcal{R}=\mathcal{R}_1 \times \ldots \times \mathcal{R}_n$, %  
%\begin{align}\label{e:stateSpace}
%\mathcal{R} \triangleq \varprod_{i \in [n]} \mathcal{R}_i: \; \mathcal{R}_i \triangleq  [\tilde \epsilon_i_i, \tilde c_i],
% \{ z_i: \; \epsilon \leq p_i(z_i) \leq c_i \},
%\end{align}

We make the following assumptions about the functions~$V_i$ and~$p_i$. 
\begin{assumption}\label{assum:V_properties}
Over the domain $[0,1]$, $V_i$ is strictly increasing, concave, and twice continuously differentiable ($V_i \in \mathcal{C}^2([0,1])$). Over the domain $\mathbb{R}^n_{\geq 0}$, $p_i$ is convex, increasing with respect to $z_i$ for any $i$, and twice continuously differentiable, i.e., $\bm{p} \in \mathcal{C}^2(\mathbb{R}^n_{\geq 0})$. 
\end{assumption}

Note that the above assumption is standard~\cite{basar_JSAC_PoA1,johari_thesis2004}. Moreover, it is natural for $V_i$ and $p_i$ to be increasing. Agents gain larger utility from receiving a larger share of the resource.

Under Assumption~\ref{assum:V_properties}, the utility function~$\varphi_i$ is concave with respect to the $i$-th component and thus the \textit{best response operator} is a function, that we denote as $\textbf{BR}:~\mathbb{R}^{n} \mapsto \mathbb{R}^{n}$. %Note that the utility function of each agent in~\eqref{e:utility_auction_game} depends only on their own bid, (i.e., $\mathcal{G}$ is an aggregative game), and thus by abuse of notation, we can write $\varphi_i(z_i,z_{-i})= \varphi_i(z_i,s_{i}(\bm{z}))$ such that $s_i(\bm{z}) \triangleq \sum_{j\neq i} z_j+\delta$. 
Note that $\mathcal{G}$ is an aggregative game because the utility function of each agent in~\eqref{e:utility_auction_game} depends only on their own bid and on the sum of the bids of others. Thus, by abuse of notation, we can write $\varphi_i(z_i,z_{-i})= \varphi_i(z_i,s_{i}(\bm{z}))$ such that $s_i(\bm{z}) \triangleq \sum_{j\neq i} z_j+\delta$ and $(z_i, \bm{z}_{-i})$ denotes the vector where agent $i$ submits a bid $z_i$, while the other agents submit bids $\bm{z}_{-i}$. 

The best response of a player $i$, denoted $\text{BR}_i: \mathbb{R}_+ \mapsto \mathbb{R}_{+}$, is defined as, 
\begin{align}\label{eq:BR_def}
    \text{BR}_i(s)= \arg\max_{z_i\in\mathcal{R}_i}\varphi_i(z_i,s),
\end{align}
and it holds that $\textbf{BR}(\bm{z}) =(\text{BR}_i(s_{i}(\bm{z})))_{i\in [n]}$.
% Let $(z_i, \bm{z}_{-i})$ denote the vector where agent $i$ submits a bid $z_i$, while the other agents submit bids $\bm{z}_{-i}$.
\begin{definition}[Nash Equilibrium] A strategy profile $\bm{z}^* = (z_1^*, z_2^*, \dots, z_n^*) \in \mathcal{R}$ is a \emph{Nash Equilibrium (NE)} of $\mathcal{G}$ if and only if, for every player $i \in \mathcal{I}$,
\begin{align}
\varphi_i(z_i^*, \bm{z}_{-i}^*) \geq \varphi_i(z_i, \bm{z}_{-i}^*) \quad \forall z_i \in \mathcal{R}_i
\end{align}
\end{definition}

As a consequence of Assumption~\ref{assum:V_properties}, the function $\varphi_i$ is concave in its $i$-th component and twice continuously differentiable on $\mathbb{R}_{\geq 0}^n$, and the actions set $\mathcal{R}$ is non empty, closed, bounded, and convex. Existence of a Nash Equilibrium (NE) of the game $\mathcal{G}$ follows by~\cite[Thm. 1]{rosen}. 
\begin{theorem}
    The set of Nash equillibria of $\mathcal{G}$, denoted $NE(\mathcal{G})$ is non-empty, i.e., $NE(\mathcal{G})\neq \emptyset$. 
\end{theorem}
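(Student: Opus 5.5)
The plan is to apply Rosen's existence theorem \cite[Thm. 1]{rosen} directly. It guarantees a pure-strategy NE for any concave $n$-person game: each player's action set is a nonempty, compact, convex subset of a Euclidean space, each payoff $\varphi_i$ is jointly continuous on $\mathcal{R}$, and each $\varphi_i$ is concave in the player's own variable $z_i$ for fixed $\bm{z}_{-i}$. The proof therefore reduces to verifying these three hypotheses for $\mathcal{G}$.

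\emph{Action sets.} Each $\mathcal{R}_i=[\epsilon_i,c_i]$ is a compact interval. It is nonempty because $p_i$ is increasing and $\tilde\epsilon_i\le\tilde c_i$, so $\epsilon_i\le c_i$. Hence $\mathcal{R}=\prod_i\mathcal{R}_i$ is nonempty, compact and convex. Since $\epsilon_i>0$ (it corresponds to the strictly positive minimum bid $\tilde\epsilon_i$), every profile in $\mathcal{R}$ has $z_i>0$ for all $i$. Therefore only the first branch of \eqref{e:allocation_mechanism} is ever active, and $x_i(\bm{z})=z_i/(\sum_j z_j+\delta)$ on all of $\mathcal{R}$.

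\emph{Continuity.} On $\mathcal{R}$ the denominator satisfies $\sum_j z_j+\delta\ge\sum_j\epsilon_j>0$, so $x_i$ is $\mathcal{C}^\infty$ there and takes values in $(0,1]$. Composing with $V_i\in\mathcal{C}^2([0,1])$ and subtracting $p_i\in\mathcal{C}^2$ shows that $\varphi_i$ is continuous, indeed $\mathcal{C}^2$, on $\mathcal{R}$.

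\emph{Own-concavity.} This is the only step requiring computation. Fix $\bm{z}_{-i}$ and set $s=s_i(\bm{z})>0$, so that $x_i=z_i/(z_i+s)$ depends only on $z_i$. Then
\[
\frac{\partial x_i}{\partial z_i}=\frac{s}{(z_i+s)^2}>0,\qquad \frac{\partial^2 x_i}{\partial z_i^2}=-\frac{2s}{(z_i+s)^3}<0,
\]
so $x_i$ is increasing and concave in $z_i$. Since $V_i$ is concave and nondecreasing, $V_i\circ x_i$ is concave. Explicitly,
\[
\frac{\partial^2}{\partial z_i^2}V_i(x_i)=V_i''(x_i)\Big(\frac{\partial x_i}{\partial z_i}\Big)^2+V_i'(x_i)\frac{\partial^2 x_i}{\partial z_i^2}\le 0 .
\]
Because $p_i$ is convex, $-p_i$ is concave, so $\varphi_i(\cdot,\bm{z}_{-i})$ is concave on $\mathcal{R}_i$.

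With all three hypotheses established, \cite[Thm. 1]{rosen}, whose proof rests on Kakutani's fixed-point theorem applied to the best-reply correspondence, gives $NE(\mathcal{G})\neq\emptyset$. I expect no serious obstacle. The one point that needs care is the discontinuity of the allocation rule at $z_i=0$, and the lower bounds $\epsilon_i>0$ rule this out on $\mathcal{R}$.
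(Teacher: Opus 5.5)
Your proposal is correct and takes the same route as the paper. The paper invokes Rosen's existence theorem after noting that $\mathcal{R}$ is nonempty, closed, bounded and convex and that each $\varphi_i$ is continuous and concave in $z_i$ under Assumption~\ref{assum:V_properties}; your chain-rule check of own-concavity and your remark that $\epsilon_i>0$ keeps the allocation rule away from its discontinuity at $z_i=0$ simply make explicit what the paper leaves implicit.
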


\medskip 

\noindent \textbf{Notation.} We use $ \dot{V}_i(\cdot)$ and  $\ddot{V}_i(\cdot)$ to denote the first and second derivatives of $V_i$, respectively. We use a similar notation for $p_i$. We use $\partial_{j} \varphi_i$ to designate the partial derivative of $\varphi_i$ with respect to the bid of agent $j$, and $\partial_{j,k}^{2} \varphi_i$ to designate the second order mixed derivatives of $\phi$ with respect to the bid of agents $j$ and $k$.

\section{Repeated Kelly Game}
\label{sec:main_results}

\subsection{Motivation: Bandwidth allocation in wireless networks}  
\label{ss:Example_log_utility}
\begin{figure}[H]
    \centering
    \includegraphics[width=\linewidth]{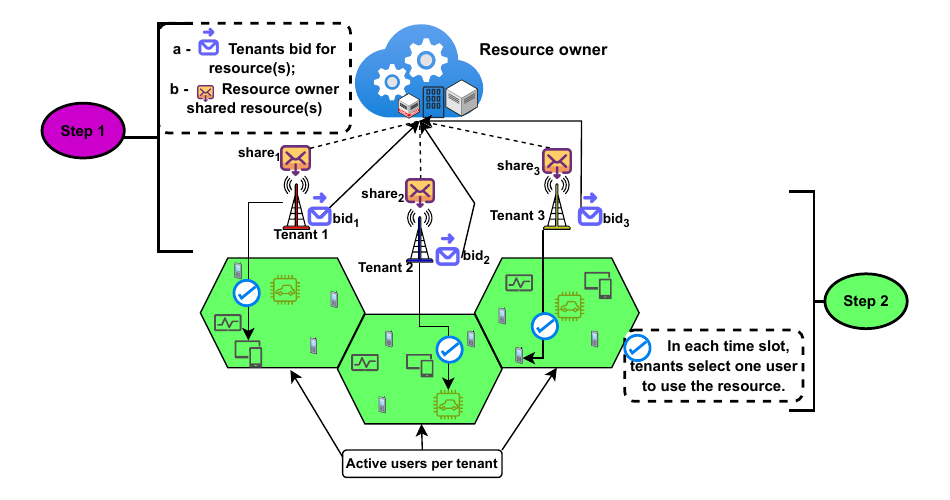}
    \caption{Bandwidth Allocation between tenants and users}
    \label{fig:slicing_drawio}
\end{figure}

In this section, we show how a repeated Kelly game with logarithmic valuations $V_i$ arises in bandwidth allocation among multiple \textit{tenants} (e.g., virtual operators or service providers), each serving its own set of users. 
Over rounds $t\in \{1,\ldots , T\}$, an infrastructure provider allocates a total bandwidth $B$ according to the Kelly mechanism; given bids $z_{j}(t)$, tenant $j$ receives bandwidth:  
   \begin{align}
            B_{j}(t) = \frac{z_{j}(t)}{\sum_{k\in \cI} z_{k}(t) + \delta}, 
     \end{align}
where $\cI$ denotes the set of tenants and $\delta\geq 0$. 

Within round $t$, time is divided into slots. At each slot, tenant $j$ schedules exactly one user from its set $\cI_{j}$. %; unscheduled users are silent. 
Let~$S_j^{\tau}(t)\in \cI_{j}$ denote the scheduled user at slot~$\tau$. When a user $i$ is scheduled, its  transmission rate, denoted $r_{j,i}^{\tau}(t)$, is proportional to the allocated bandwidth,
\begin{align}\label{eq:inst-rate}
        r_{j,i}^{\tau}(t) = \gamma_{j,i}^{\tau}(t) B_{j}(t) \mathds{1}\left(S_{j}^{\tau}(t) = i \right),
\end{align}
where $\gamma_{j,i}^{\tau}(t)>0$. For instance in~\cite{Mandar_Wiopt20}, $\gamma_{j,i}^{\tau}(t)=  \ln \left(1+\frac{p_{j,i}h_{j,i}^{\tau}(t)}{N_0}\right)$, where~$p_{j,i}$ is the transmission power, $h_{j,i}^{\tau}(t)$ is the channel state, and $N_0$ is the noise power\footnote{For the sake of simplicity, we consider a basic AWGN channel model and a single user scheduler; with due modifications, same game extends to more advanced channels and multi-user scheduling.}. 

A standard objective in this setting is the \textit{Proportional-fair} metric~\cite{kim2005proportional}. Optimizing this objective enables balancing the overall throughput and fairness across users. We use $\mathrm{PropFair}_j(t)$ to denote the proportional fair metric of tenant~$j$ at round~$t$, and it is expressed as, 
\begin{align} \label{eq:pf-utility}
  \mathrm{PropFair}_{j}(t) 
&\triangleq
   \sum_{i\in \cI_{j}}\ln\left(\sum_{\tau} r_{j,i}^{\tau}(t)  \right)  \\ 
&\hskip-10mm = N_{j} \ln(B_{j}(t)) +    \sum_{i\in \cI_{j}}\ln\left(\sum_{\tau} \gamma_{j,i}^{\tau}(t) \mathds{1}\left(S_{j}^{\tau}(t) = i \right)  \right),
\end{align}
where $N_j$ is the number of users served by tenant $j$, i.e., $N_{j}=|\cI_{j}|$. This decomposition makes the roles of bidding and scheduling transparent: the term $N_j \ln(B_j(t))$ depends only on the bidding process, while the second term is controlled by the scheduling policy and channel states, and is independent of the bids. 
Using the quasi-linear utility model~\cite{johari_thesis2004}, the %stage 
utility of tenant $j$ at round $t$ writes  
  \begin{align}\nonumber
        \varphi_{j}(z_j(t), z_{-j}(t)) = &   N_{j} \ln \left(\frac{z_j(t)B}{\sum_{k\in \cI_{j}} z_k(t)}\right)\\ \label{eq:log-util}
        & \quad +   \sum_{i\in \cI_{j}}\ln\left(\sum_{\tau} \gamma_{j,i}^{\tau}(t) \mathds{1}\left(S_{j}^{\tau}(t) = i \right)  \right) - z_j.
\end{align}
Therefore, the bidding interaction induced by the scheme just described is a repeated Kelly game with logarithmic $V_i$'s.

\subsection{Single-Agent Formulation of the Online Bidding Problem} 

At round~$t$ of the repeated Kelly game, agent~$i$ faces uncertainty about others’ aggregate bid~$s_{i}(\bm{z}(t))=\delta + \sum_{j\ne i} z_j(t)$. Before bidding, agent $i$ only knows the history $(s_{i}(\bm{z}(1)),\ldots,s_{-i}(\bm{z}(t-1)))$. A bidding algorithm~$\cA_i$ maps this history to a bid $z_i^{\cA_i}(t)\in\cR_i$, and earns the payoff $\varphi_i(z_i^{\cA_i}(t),s_i(\bm{z}(t)))$. Under Assumption\ref{assum:V_properties}, this yields an Online Convex Optimization (OCO)~\cite{hazan2016introduction} problem: in each of the~$T$ rounds, the agent chooses $z_i^{\cA_i}(t)\in \cR_i$, then a concave reward function $u_i^{t}:\cR_i \mapsto \mathbb{R}$, defined as $u_i^{t}(z_i) \triangleq \phi_i(z_i,s_{i}(\bm{z}(t)))$ is revealed, and the agent receives $u_i^{t}\big(z_i^{\cA_i}(t)\big)$. The objective is then to maximize the aggregate reward over rounds. In this framework, the main performance metric of an algorithm $\cA_i$ is the \textit{regret}, denoted as $\mathrm{Reg}_{T}^{(i)}(\mathcal{A}_i)$, and defined as the gap between the cumulative reward of the best fixed bid in hindsight and the agent’s cumulative reward, i.e., 
\begin{align}\label{eq:regret}
\mathrm{Reg}_T^{(i)}(\mathcal{A}_i) \triangleq \max_{z_i\in\mathcal{R}_i}\sum_{t=1}^{T} u_i^{t}(z_i) - \sum_{t=1}^{T} u_i^{t}\big(z_i^{\cA_i}(t)\big).
\end{align}
Define the constants $D_i$ and $G_i$ as upper bounds on the diameter of the decision set $\cR_i$, and the derivatives of $u_i^{t}$'s for any $t$, i.e., 
\begin{align}\label{eq:bounds}
    D_i\geq  c_i-\epsilon_i, \quad G_i \geq \sup_{\bm{z}\in \cR}  \left |  \partial_i \varphi_i(z_i, s_i(\bm{z})) \right|.
\end{align}
Because $\varphi_i$ is continuous over the closed set $\cR$, the constants $D_i$ and $G_i$ exists. Thus, standard OCO methods achieve sublinear regret, $\mathrm{Reg}_T^{(i)} = o(T)$. Consequently, for any sequence of opponent aggregates $s_{i}(\bm{z}(t))$, the agent's time-average reward approaches that of the best fixed bid appearing in~\eqref{eq:regret}.

\subsection{Bidding algorithms}
\label{ss:Bidding_algorithms}

We consider four bidding algorithms. Two of them are adaptations of classical no-regret methods to the repeated Kelly game, namely Online Gradient Descent (\OGD)~\cite{Zinkevich_OGD_03} and Dual Averaging (\DA)~\cite{McMahan_FTRL_JMLR_17}, an instance of the Follow-The-Regularized-Leader (FTRL) family of algorithms. The third algorithm is an instance of Regularized-Robbins–Monro (\RRM) family of algorithms, recently studied in the context of repeated games~\cite{Mertikopoulos_VI_2019_MP,Unified_SA_Games_MP_24}, and encompassing \DA as a special case. The fourth algorithm is a myopic best-response scheme. 

\OGD, \DA, and \RRM are first-order methods: they only require the derivative of the stage utility with respect to the agent’s bid. Specifically, at each step $t$, algorithm~$\cA_i\in\{\OGD,\DA,\RRM\}$ for agent~$i$ uses the gradient of the utility function evaluated at its current bid $z_i^{\cA_i}(t)$, namely,
\begin{align}
    g_t^{(i),\cA_i}\triangleq \partial_i u_i^t(z_i^{\cA_i}(t)) = \varphi_i(z_i^{\cA_i}(t), s_i(\bm{z}(t))).
\end{align}
The algorithm also employs a learning rate (or step-size) $\eta_t^{(i)}>0$ that is tuned at each step~$t$.

\medskip 

\noindent \textbf{Online Gradient Descent \OGD.} When $\cA_i=\OGD$, agent~$i$ updates their bid by moving along the gradient/derivative of the reward function~$u_i^t$ at the current bid, then projects back to the feasible set~$\cR_i$ (minimum bid and budget constraints). Formally, the update at step~$t+1$ is given by, 
\begin{align}\label{e:OGD_update}
z_i^{\texttt{OGD}}(t+1) = \Pi_{\mathcal{R}_i} \left( z_i^{\OGD}(t) + \eta_{t+1}^{(i)} g_t^{(i),\OGD}    \right),
\end{align}
where $\Pi_{\cR_i}$ is the euclidean projection over~$\cR_i$, which reduces to clipping, $\Pi_{\cR_i}(z) =~ \max(\min(z,c_i), \epsilon_i)$. Taking~$\eta_t^{(i)}=D_i/(G_i\sqrt{t})$, yields, $\mathrm{Reg}_T^{(i)}(\texttt{OGD}) \leq~\frac{3}{2} G_i D_i \sqrt{T}$~\cite{hazan2016introduction}[Thm. 3.1]. Similar regret guarantees hold when $\eta_t^{(i)}$ is constant over time; taking $\eta_t^{(i)} \equiv \eta^{(i)} = D_i/(G_i\sqrt{T})$ 
leads to, $\mathrm{Reg}_T^{(i)}(\texttt{OGD}) \le G_i D_i \sqrt{T}$~\cite{Orseau2025_OGDM}. If $u_i^{t}$'s are $\gamma_i$–strongly convex on~$\mathcal{R}_i$, then choosing a more aggressive learning rate~$\eta_t^{(i)}=1/(\gamma_i t)$ yields, $\mathrm{Reg}_T^{(i)}(\texttt{OGD}) \leq  \frac{G_i^2}{\gamma_i}\big(1+\log T\big)$.
% \begin{align}
% \mathrm{Reg}_T^{(i)}(\texttt{OGD}) \leq \frac{3}{2} G_i D_i \sqrt{T}. 
% \end{align}

\medskip 

\noindent \textbf{Dual Averaging (DA)}. This algorithm employs a \textit{regularizer}, i.e., a continuous strongly convex function, $h_i:\mathcal{R}_i\mapsto \mathbb{R}$. Let $g_{1:t}^{(i),\DA}$ designates the sum of the gradients up to time $t$, i.e., $g_{1:t}^{(i),\DA}= \sum_{s=1}^{t} g_s^{(i), \DA}$. \DA's update selects the bid $z$ that maximizes~$\left(z*g_{1:t}^{(i),\DA} - \frac{1}{\eta_t^{(i)}}h_{i}(z)\right)$. In particular, if $h_i(z)= z^{2} /2$, then the update, denoted \DAQ, is given by,  
\begin{align}\label{e:DA_update}
z_{i}^{\DAQ}(t+1) &= \Pi_{\cR_i} \left( \eta_{t+1}^{(i)} g_{1:t}^{(i), \DAQ} \right).
\end{align}
This update is also known as the lazy version of \texttt{OGD}, while~\eqref{e:OGD_update} is known as the agile one. Taking an adaptive learning rate~$\eta_t^{(i)}=D_i/(2G_i\sqrt{t})$ yields, $\mathrm{Reg}_T^{(i)}(\DAQ)\leq~\sqrt{2}G_iD_i \sqrt{T}$~\cite{McMahan_FTRL_JMLR_17}[Sec. 3.1]. If the number of rounds~$T$ is apriori known, then taking $\eta_t^{(i)}=D_i/(G_i\sqrt{T})$ yields, $\mathrm{Reg}_T^{(i)}(\DAQ)\leq G_iD_i \sqrt{T}$~\cite{McMahan_FTRL_JMLR_17}[Sec. 3.2].  

\begin{remark}\label{remark:OCO_constants_log}
In general, \OGD and \DAQ require an upper bound $G_i$ on the gradient of the agent's utility (see~\eqref{eq:bounds}), for tuning $\eta_t^{(i)}$. The constant $G_i$ depends on the budgets of the other players and may therefore be unknown in practice. In the case of logarithmic utilities, i.e., $V_i(\cdot) = a_i \ln(\cdot) + d_i $, and $p_i$ is the identity function, we can write,
\begin{align}\label{e:gradient_bound}
   \left| \partial_i \varphi_i(\bm{z}) \right| = \left|a_i \left( \frac{1}{z_i} - \frac{1}{\sum_{j} z_j + \delta} \right) - 1\right|  \leq \frac{a_i}{z_i} + 1 \leq \frac{a_i}{\epsilon_i} +1. 
\end{align}
for every $\bm{z}\in \cR$. Thus taking~$G_i = \frac{a_i}{\epsilon_i}+1$ provides a bound that is independent of the other agents' budgets, which simplifies the use of these bidding algorithms in practice. Indeed, the learning rate for each agent~$i$, using either $\OGD$ or $\DAQ$, can be tuned as $\eta_{t}^{(i)}=\cO\left(\frac{c_i \epsilon_i}{a_{i} \sqrt{T}}\right)$, leading to a regret $\cO\left(\frac{c_i\, a_i}{\epsilon_i}\sqrt{T}\right)$. 

\begin{comment}
where $\mathcal{S}_i$ is the set of possible values for $s_{-i}(t)$ for any $t$, i.e., 
\begin{align}
\mathcal{S}_i \triangleq \left [\delta + \sum_{j\neq i } \epsilon_j, \delta + \sum_{j\neq i} c_j  \right].
\end{align}
\end{comment}

\end{remark}

\medskip

\noindent \textbf{Regularized-Robbins Monro (\texttt{RRM}).} In the context of the repeated Kelly, an agent $i$ using \texttt{RRM} maintains a cumulative weighted sum of gradients at each step $t$, denoted $y_{i}^{\RRM}(t)$, which is then converted to the bid for that iteration, denoted as $z_i^{\RRM}(t)$. Initially, $y_{i}^{\RRM}(0)=0$. At any step $t\geq 1$, 
\begin{align}\label{eq:RRM_update}
\begin{cases}
 &y_i^{\RRM}(t) = y_i^{\RRM}(t-1) + \eta_t^{(i)} g_t^{(i),\RRM} , \\
 &z_i^{\RRM}(t) = Q_i(y_i^{\RRM}(t-1)): \\
 &Q_i(y) \triangleq \argmax_{z_i\in \mathcal{R}_i} \left( z_iy - h_i(z_i) \right), 
\end{cases}
\end{align}
\noindent where $h_i:\mathcal{R}_i\mapsto \mathbb{R}$ is a continuous and $K_i$-strongly convex function for some $K_i>0$. 
In particular, if $h_i(z)=\frac{z^2}{2\lambda_i}$, then the update, denoted $\texttt{RMQ}$, is given by, 
   \begin{align}\label{e:update_RMQ}
      &z_i^{\texttt{RMQ}}(t) = \Pi_{\cR_i}\left(  \lambda_i y_i^{\texttt{RMQ}}(t)\right), 
    \end{align}
In particular, if $\eta^{(i)}_t$ is constant over time and $\lambda_i=1$ for all agents, then \texttt{RMQ} and \texttt{DAQ} yield the same update. Thus \texttt{RMQ} in this case has sublinear regret guarantees. 

\medskip

\noindent \textbf{Best-response ($\BR$).} When $\cA_i= \BR$, at each round $t$, agent~$i$ selects the bid that maximizes their payoff function~$\varphi_i$, when the aggregate bid of the other agents is equal to its value in the previous round, $s_{i}(\bm{z}(t-1)) \triangleq \sum_{j\neq i} z_j$. Formally, 
\begin{align}\label{e:BR_update}
z_i^{\text{BR}}(t)=\text{BR}_i \left(s_{i}\left(\bm{z}^{\BR}(t-1)\right)\right).
\end{align}
While $\BR$ lacks in general the no-regret guarantees of~$\DAQ$ and~$\OGD$, it is simpler to implement: it uses the observed aggregate $s_{i}\left(\bm{z}^{\BR}(t-1)\right)$ and the agent’s own constraints, and requires no knowledge or estimation of other agents’ budgets. 

\begin{remark}\label{remark:BR_log}

When the $V_i$'s are logarithmic, i.e., $V_i(\cdot) = a_i \ln(\cdot) + d_i $ with $a_i>0$, and $p_i(z)=z$, straightforward calculations yield a closed-form expression for the best response operator, 
% In general, when each agent’s payoff $\varphi_i(\cdot,s_{-i})$ is concave in $z_i$, the best-response problem
% \eqref{eq:BR_def} admits a unique solution, which can be computed efficiently using standard convex optimization methods.
\begin{equation}\label{eq:BR_log}
    \text{BR}_i(s)
=\Pi_{\mathcal{R}_i}\left(\frac{-s+\sqrt{s^2 + 4a_i\, s}}{2}\right). 
\end{equation}
More generally, \cite{mboulou2025best} derives closed form expressions for the best-response operator for $V_i$'s of the $\alpha$-fair type with~$\alpha\in~\{ 0,1,2\}$.

% In the present setting, for logarithmic valuations $V_i(\cdot) = a_i \ln(\cdot) + d_i $, KKT conditions yield a closed-form expression,

% Closed-form best responses can also be derived for other commonly used concave utilities, such as linear valuations, i.e., $V_i(\cdot) = a_i (\cdot) + d_i $ or inverse utilities, i.e., $V_i(\cdot) =  -\frac{a_i}{(\cdot)} + d_i $ (see,~\cite{mboulou2025best}).
\end{remark}

% While $\BR$ lacks in general the no-regret guarantees of~$\DAQ$ and~$\OGD$, it is simpler to implement: it uses the observed aggregate $s_{-i}(t)$ and the agent’s own constraints, and requires no knowledge or estimation of other agents’ budgets. 

\medskip 

% We first focus on the case where the~$V_i$'s are logarithmic in the allocated resource. This model of utilities is motivated by the technological context in network slicing of Section~\ref{ss:Example_log_utility}. In the following section, we prove a series of properties of the stage game~$\mathcal{G}$, which will be then useful for studying the evolution of the system under the different bidding strategies considered. 

\subsection{Convergence guarantees}

% Establishing the uniqueness of the NE of the game and the convergence of DA dynamics to this point builds on a notion that we call \emph{Strong Diagonal Strict Concavity} (\SDSC). The rest of this section is organized as follows: Section~\ref{ss:SDSC_log} derives sufficient conditions for the \SDSC, Section~\ref{ss:Uniq_NE_log} shows the uniqueness of the NE, Section~\ref{ss:convg_DA_log} provides sufficient conditions on the heterogeneity of the agents and the weighting functions to guarantee the convergence of DA dynamics to the NE, Section~\ref{ss:convg_BR_log} presents a sufficient condition on the minimum bid for the convergence of the BR dynamics, Section~\ref{ss:extensions_general_V} extends the convergence results, for both  and BR, to more general utility functions, and Section~\ref{ss:comp_BR_DA} discusses the choice of DA or BR from a selfish agent's perspective. 

In this section, we focus utilities of the form, namely $V_i(\cdot)=a_iV(\cdot)+d_i$ with $a_i>0$. Our results hold in particular when $V(\cdot)=\ln(\cdot)$. This model is motivated by the network-slicing setting described in Section~\ref{ss:Example_log_utility}. We first provide a sufficient condition for the stage game~$\mathcal{G}$ to satisfy \textit{Strong Diagonal Strict Concavity} (\SDSC), which implies Rosen’s \textit{Diagonal Strict Concavity}~\cite{rosen}, or equivalently \textit{monotonicity}~\cite{Zhou_OGD_NE_OR_21}.
As a consequence, the Nash equilibrium is unique. This property also serves as a key ingredient to establish convergence to the equilibrium under \OGD and \DAQ dynamics. Finally, we prove convergence of $\BR$ via a contraction argument.

For a vector $\bm{r} \in \mathbb{R}_{> 0}^n $, \SDSC is defined in terms of the $ n \times n $ matrix $ \bm{H}_{\bm{r}}(\bm{z}) $, whose $(i,j)$-entry is given by,
\begin{align}
(\bm{H}_{\bm{r}}(\bm{z}))_{i,j} \triangleq r_i \, \partial^2_{i,j} \varphi_i(\bm{z}) + r_j \, \partial^2_{j,i} \varphi_j(\bm{z}),
\end{align}
where the partial derivative is taken with respect to the actions of agent $i$ and $j$, i.e., $z_i$ and $z_j$, respectively. 
\begin{definition}[Strong Diagonal Strict Concavity]\label{def:SDSC}
The game $\mathcal{G}$ satisfies \textit{Strong Diagonal Strict Concavity} in $\bm{r}$ if and only if the matrix $\bm{H}_{\bm{r}}(\bm{z})$ is negative definite for all $\bm{z} \in \mathcal{R}$, %; that is,
\begin{align}
%\forall \bm{z} \in \mathcal{R}, \quad 
\max_{\bm{v} : \| \bm{v} \| = 1} \left\{ \bm{v}^{\intercal} \bm{H}_{\bm{r}}(\bm{z}) \bm{v} \right\} < 0.
\end{align}
In this case, we write that $\mathcal{G}$ satisfies $\SDSC(\bm{r})$.
\end{definition}

\SDSC appears in Rosen’s paper~\cite{rosen} as a sufficient condition for \textit{diagonal strict concavity}, or equivalently for the game to be \textit{monotone}~\cite{Zhou_OGD_NE_OR_21}. This assumption is particularly useful for analyzing more general Kelly-type games with coupled action sets and for proving convergence of continuous-time dynamics. \SDSC is also one of the main conditions for the convergence of \RRM updates in a multi-agent setting to a NE~\cite{Mertikopoulos_VI_2019_MP,Unified_SA_Games_MP_24}.

We prove in Theorem~\ref{thm:SDSC_log} that the Kelly game satisfies \SDSC when utilities scale logarithmically with the allocated resource. First, we introduce the necessary notation. Define the functions $f_i$, $g_i$, and $\psi_{\bm{r},\bm{V}}$ as
\begin{align}\label{e:f_expression}
        &f_i(x) = (1 - x)^2 \ddot{V}_i(x) - 2(1 - x)\dot{V}_i(x) , \\ \label{e:g_expression}
        &g_i(x) = -  x(1 - x) \ddot{V}_i(x)+  (2x - 1) \dot{V}_i(x),  \\ \label{eq:sufCond} 
        & \psi_{\bm{r},\bm{V}}(\bm{x}) \triangleq \left( \sum_{i\in \cI}\frac{r_ig_i(x_i)^2}{k_i(x_i)} \right) \left(\sum_{i\in \cI}\frac{1}{r_ik_i(x_i)} \right),
\end{align}
where $k_i(x) =  g_i(x) - f_i(x) + \delta^2 L_i$ and $L_i= \min_{z_i \in \mathcal{R}_i} \ddot{p}_i(z_i)$. Further define the set $\Delta = \{ \bm{x}>\bm{0}: \;\sum_{i\in \cI} x_{i}\leq \frac{\sum_{k\in \cI} c_k}{\sum_{k\in \cI} c_k + \delta}  \}$.

\begin{theorem}\label{thm:SDSC_log}
    The following holds, 
    \begin{enumerate}
        \item If there exists a vector $\bm{r}> \bm{0}$ such that $\psi_{\bm{r},\bm{V}}(\bm{x})$ is strictly smaller than $1$, then $\cG$ is $\SDSC(\bm{r})$. Formally, 
        
        \begin{align}\label{e:suff_condition_SDSC}
               \exists \bm{r} > \bm{0} : \; \sup_{\bm{x}\in \Delta} \psi_{\bm{r},\bm{V}}(\bm{x}) < 1 \implies \mathcal{G} \text{ is } \SDSC(\bm{r}).
        \end{align}
        % then $\mathcal{G}$ is $\SDSC(\bm{r})$. 

        \item If $V_i(\cdot)= a_i \ln (\cdot) + d_i$, then the condition~\eqref{e:suff_condition_SDSC} is satisfied for $r_i = 1/a_i$.
    \end{enumerate}
\end{theorem}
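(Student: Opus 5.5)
The plan is to compute the Hessian entries of the stage game explicitly, compare $\bm{H}_{\bm r}(\bm z)$ with a matrix of the form "negative diagonal plus a rank-two symmetric perturbation", and reduce negative definiteness to a Cauchy–Schwarz inequality whose constant is exactly $\psi_{\bm r,\bm V}$.

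\textbf{Step 1: second derivatives.} Write $S=\sum_k z_k+\delta$ and $x_i=z_i/S$, so that $\partial_j x_i=(\mathds{1}(i=j)-x_i)/S$. We have $\partial_i\varphi_i=\dot V_i(x_i)(1-x_i)/S-\dot p_i(z_i)$. Differentiating once more gives
\begin{align*}
\partial^2_{i,i}\varphi_i(\bm z)=\frac{f_i(x_i)}{S^2}-\ddot p_i(z_i),\qquad \partial^2_{i,j}\varphi_i(\bm z)=\frac{g_i(x_i)}{S^2}\quad (j\neq i),
\end{align*}
with $f_i,g_i$ as in \eqref{e:f_expression}--\eqref{e:g_expression}. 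These are routine product and chain rule computations. Hence
\begin{align*}
S^2\bm{H}_{\bm r}(\bm z)=\diag\big(2r_i(f_i-g_i)-2r_iS^2\ddot p_i\big)+\bm u\bm 1^\intercal+\bm 1\bm u^\intercal,
\end{align*}
where $u_i=r_ig_i(x_i)$. Here I have used that $\bm u\bm 1^\intercal+\bm 1\bm u^\intercal$ has diagonal $2r_ig_i$, which is why $g_i$ is subtracted on the diagonal.

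\textbf{Step 2: bounding the diagonal.} Since $S\ge\delta$ and $\ddot p_i\ge L_i\ge0$, we have $S^2\ddot p_i\ge\delta^2L_i$. Therefore the diagonal is at most $-2r_ik_i(x_i)$, and
\begin{align*}
S^2\bm{H}_{\bm r}\preceq -2\bm D+\bm u\bm 1^\intercal+\bm 1\bm u^\intercal,\qquad \bm D=\diag(r_ik_i).
\end{align*}
This requires $k_i>0$ on the relevant range. I take this to be implicit in the hypothesis, since $\psi$ is otherwise ill-defined as a bound; for the logarithmic case it is verified directly below.

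\textbf{Step 3: the rank-two perturbation.} Put $\bm w=\bm D^{1/2}\bm v$, $\bm a=\bm D^{-1/2}\bm u$ and $\bm b=\bm D^{-1/2}\bm 1$. Then
\begin{align*}
\bm v^\intercal(\bm u\bm 1^\intercal+\bm 1\bm u^\intercal)\bm v=2(\bm a\cdot\bm w)(\bm b\cdot\bm w)\le 2\|\bm a\|\|\bm b\|\|\bm w\|^2.
\end{align*}
Moreover $\|\bm a\|^2\|\bm b\|^2=\big(\sum_i r_ig_i^2/k_i\big)\big(\sum_i 1/(r_ik_i)\big)=\psi_{\bm r,\bm V}(\bm x)$. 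It follows that
\begin{align*}
\bm v^\intercal S^2\bm H_{\bm r}\bm v\le -2\big(1-\sqrt{\psi_{\bm r,\bm V}(\bm x)}\big)\|\bm w\|^2<0
\end{align*}
for every $\bm v\neq\bm 0$. Finally, for $\bm z\in\cR$ the allocation vector satisfies $\sum_i x_i=(S-\delta)/S\le\sum_k c_k/(\sum_k c_k+\delta)$, so $\bm x\in\Delta$ and the hypothesis applies pointwise. By compactness of $\cR$ and continuity, this gives $\SDSC(\bm r)$, which proves part 1.

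\textbf{Step 4: the logarithmic case.} Let $V_i=a_i\ln(\cdot)+d_i$, so $\dot V_i=a_i/x$ and $\ddot V_i=-a_i/x^2$. Then
\begin{align*}
f_i=-\frac{a_i(1-x^2)}{x^2},\qquad g_i=a_i,\qquad k_i\ge \frac{a_i}{x^2}>0.
\end{align*}
With $r_i=1/a_i$, each term satisfies $r_ig_i^2/k_i\le x_i^2$ and $1/(r_ik_i)\le x_i^2$, with equality when $L_i=0$. Hence $\psi\le\big(\sum_ix_i^2\big)^2$.

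\textbf{Main obstacle: strictness of the supremum.} For $n\ge2$ and $x_i>0$ we have $\sum_ix_i^2<(\sum_ix_i)^2\le1$. However, when $\delta=0$ the supremum of this bound over $\Delta$ approaches $1$ at the corners, so the strict inequality in \eqref{e:suff_condition_SDSC} is delicate. I will handle this by noting that only attainable allocations matter. These satisfy $x_i\ge\epsilon_i/(\sum_kc_k+\delta)$ for every $i$, so every coordinate is bounded away from zero. On this compact set $\sum_ix_i^2\le1-\eta$ for some $\eta>0$, which gives a uniform bound $\psi\le(1-\eta)^2<1$.

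When $\delta>0$, the condition $\sum_ix_i<1$ already yields a uniform gap directly.
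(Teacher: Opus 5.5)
Your proposal is correct and follows essentially the same route as the paper. It uses the same splitting of $S^2\bm H_{\bm r}(\bm z)$ into a diagonal part plus $\bm u\bm 1^\intercal+\bm 1\bm u^\intercal$, the same bound $S^2\ddot p_i\ge\delta^2L_i$ and $\bm D^{1/2}$-weighted Cauchy--Schwarz step, and the same logarithmic computation $g_i=a_i$, $k_i=a_i/x^2+\delta^2L_i$, $\psi\le(\sum_i x_i^2)^2$.

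Two small remarks. First, positivity of $k_i$ should not be taken as implicit, since $\psi<1$ alone does not force it. It does follow at once from Assumption~\ref{assum:V_properties}, exactly as the paper argues: $k_i=\dot V_i(x_i)+(x_i-1)\ddot V_i(x_i)+\delta^2L_i$ with $\dot V_i>0$, $\ddot V_i\le 0$ and $x_i<1$. Second, the paper's logarithmic case simply assumes $\delta>0$ and uses $(\sum_i x_i^2)^2\le(\sum_i x_i)^4\le\big(\sum_k c_k/(\sum_k c_k+\delta)\big)^4<1$. Your observation that $\sup_{\Delta}\psi=1$ when $\delta=0$ is therefore correct. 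Your attainable-set argument is a genuine small strengthening that still yields \textbf{SDSC}$(\bm r)$ in that case, though not the literal condition over $\Delta$.
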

The proof of Theorem~\ref{thm:SDSC_log} is presented in the supplementary material. In general, a negative definiteness numerical test for an~$n\times n$ matrix  requires~$\cO(n^{3})$ time and~$\cO(n^{2})$ memory. Theorem~\ref{thm:SDSC_log} exploits the structure of the matrix~$\bm{H}_{\bm{r}}(\bm{z})$ in the Kelly game to significantly reduce the verification, for a fixed~$\bm{z}$, to~$\cO(n)$ time and~$\cO(1)$ memory via the condition~\eqref{e:suff_condition_SDSC}. Moreover, this reduction turns \SDSC test into bounding the maximum of a closed-form function, which is easier to deal with analytically; in particular, it enables the proof of \SDSC when the $V_i$'s are logarithmic. 

\begin{comment}
\begin{IEEEproof}[Sketch of the proof]
The proof is presented in Section~\ref{sec:SDSC}. 
Lemma~\ref{lem:H_computation} simplifies the quadratic form of the matrix $\bm{H}_{\bm{r}}(\bm{z})$, Lemma~\ref{thm:SDSC} leverages this simplification to show that \eqref{e:suff_condition_SDSC} is a sufficient condition for $\mathcal{G}$ to be \SDSC$(\bm{r})$, and Lemma~\ref{lem:logSDSC} shows that the aforementioned condition holds for logarithmic $V_i$'s for $r_i=1/a_i$. 

    % Section~\ref{sec:SDSC} develops a more general analysis for arbitrary~$V_i$ in the game $\mathcal{G}$. Specifically, it presents 3 lemmas: Lemma~\ref{lem:H_computation} simplifies the quadratic form of the matrix $\bm{H}_{\bm{r}}(\bm{z})$, Lemma~\ref{thm:SDSC} leverages this simplification to provide a sufficient condition for $\mathcal{G}$ to be \SDSC$(\bm{r})$ under general $V_i$'s, and 
    % and provides a sufficient condition under which~$\mathcal{G}$ is~\SDSC(1); the logarithmic case is recovered as a particular case, yielding~
\end{IEEEproof}
\end{comment}

% \subsection{Uniqueness of Nash equilibrium} 
% \label{ss:Uniq_NE_log} 

% The condition~\eqref{e:suff_condition_SDSC} is also sufficient for t
% Theorem~\ref{thm:SDSC_log} with 

\begin{corollary}\label{thm:unique_NE}

The condition~\eqref{e:suff_condition_SDSC} is sufficient for the uniqueness of the Nash equilibrium of $\cG$. We denote this unique equilibrium as $\bm{z}^{*}$.   
\end{corollary}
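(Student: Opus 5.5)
By Theorem~\ref{thm:SDSC_log}, condition~\eqref{e:suff_condition_SDSC} implies that $\cG$ is $\SDSC(\bm r)$ for some $\bm r>\bm 0$. It therefore suffices to show that $\SDSC(\bm r)$ implies Rosen's diagonal strict concavity (DSC) and that DSC yields uniqueness, following~\cite[Thms.~2 and~6]{rosen}. Existence is already known from the earlier existence theorem, so only uniqueness has to be argued.

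\textbf{From $\SDSC(\bm r)$ to DSC.} Define the pseudo-gradient $\bm g_{\bm r}(\bm z)\triangleq (r_i\,\partial_i\varphi_i(\bm z))_{i\in\cI}$, which is $\mathcal C^1$ on a neighborhood of $\cR$ under Assumption~\ref{assum:V_properties}; note that $z_i\ge\epsilon_i>0$, so the allocation is smooth there. Its Jacobian $\bm J(\bm z)$ has entries $r_i\partial^2_{i,j}\varphi_i$, so that $\bm H_{\bm r}(\bm z)=\bm J(\bm z)+\bm J(\bm z)^\intercal$. Fix distinct $\bm z^0,\bm z^1\in\cR$ and set $\bm z(\theta)=\theta\bm z^1+(1-\theta)\bm z^0$, which stays in $\cR$ by convexity. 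The fundamental theorem of calculus then gives
\begin{align*}
(\bm z^1-\bm z^0)^\intercal\big(\bm g_{\bm r}(\bm z^1)-\bm g_{\bm r}(\bm z^0)\big)=\tfrac12\int_0^1 (\bm z^1-\bm z^0)^\intercal \bm H_{\bm r}(\bm z(\theta))(\bm z^1-\bm z^0)\,d\theta<0 ,
\end{align*}
where the strict inequality holds because the integrand is continuous in $\theta$ and strictly negative for every $\theta$. This is exactly DSC, i.e., strict monotonicity of $-\bm g_{\bm r}$.

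\textbf{Uniqueness.} Suppose $\bm z^0\neq\bm z^1$ are both Nash equilibria. Each $\varphi_i(\cdot,\bm z_{-i})$ is concave on the interval $\cR_i$, so each equilibrium satisfies the first-order optimality condition $\partial_i\varphi_i(\bm z^k)(z_i-z_i^k)\le 0$ for all $z_i\in\cR_i$. Take $z_i=z_i^{1-k}$, multiply by $r_i>0$, and sum over $i$ and over $k\in\{0,1\}$. This gives
\[
(\bm z^1-\bm z^0)^\intercal\big(\bm g_{\bm r}(\bm z^0)-\bm g_{\bm r}(\bm z^1)\big)\le 0,
\]
which contradicts the strict inequality established above. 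Hence $\bm z^0=\bm z^1$, and we denote the unique equilibrium by $\bm z^*$.

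\textbf{Main obstacle.} The only delicate point is regularity. $\SDSC$ must hold on all of $\cR$, including boundary points, so that the integrand is negative along the entire segment, and $\bm H_{\bm r}$ must be continuous there. Theorem~\ref{thm:SDSC_log} gives negativity for all $\bm z\in\cR$, since every such $\bm z$ maps to an $\bm x$ in $\Delta$. Smoothness follows from $z_i\ge\epsilon_i>0$ together with the $\mathcal C^2$ hypotheses on $V_i$ and $p_i$. With these two facts the rest of the argument is routine.
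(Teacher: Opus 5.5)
Your proposal is correct and follows the paper's route: Theorem~\ref{thm:SDSC_log} gives $\SDSC(\bm{r})$, which yields Rosen's diagonal strict concavity, and that in turn gives uniqueness via \cite[Thms.~6 and~2]{rosen}. The only difference is that you prove Rosen's two results inline, by integrating $\bm{H}_{\bm r}$ along the segment and then summing the first-order optimality conditions, whereas the paper simply cites them.
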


Corollary~\ref{thm:unique_NE} follows directly from Theorem~\ref{thm:SDSC_log} using~\cite{rosen}. 
% \begin{IEEEproof}
%   Owing to Theorem~\ref{thm:SDSC_log}, there exists~$\bm{r}$ such that~$\mathcal{G}$ satisfies~$\textbf{SDSC}(\bm{r})$. Thus~$\mathcal{G}$ is \textit{diagonally strictly concave} with respect to $\bm{r}$~\cite[Thm. 6]{rosen}, and this guarantees a unique Nash equilibrium~\cite[Thm. 2]{rosen}.
% \end{IEEEproof} 
Uniqueness of the Nash equilibrium has been established under various conditions: when the $V_i$'s satisfy Assumption~\ref{assum:V_properties} with identity payment function and no budget constraints~\cite[Thm.~2.2]{johari_thesis2004}, or more generally when the $p_i$'s satisfy Assumption~\ref{assum:V_properties} but are identical across agents~\cite[Prop.~2]{basar_JSAC_PoA1}. These results do not account for constraints. Under budget constraints, uniqueness was shown in~\cite[Thm.~1]{NetGCoop16} for a common linear weighting function, while~\cite{R_Ma_Kelly_price_differentiation} allows heterogeneous linear coefficients but no budget constraints. By contrast, Theorem~\ref{thm:unique_NE} and Corollary~\ref{thm:unique_NE} establish uniqueness for logarithmic $V_i$, admits any $p_i$ satisfying Assumption~\ref{assum:V_properties} (possibly heterogeneous), and incorporates budget constraints, thereby unifying and extending the above results for logarithmic~$V_i$. 

Leveraging the \SDSC property of the game~$\mathcal{G}$, Theorem~\ref{thm:OGD_convergence_log} establishes the convergence of \OGD to the unique NE when the utilities scale logarithmically in the allocated resource.

\begin{theorem}[Convergence of \texttt{OGD}]\label{thm:OGD_convergence_log}
Assume that the condition~\eqref{e:suff_condition_SDSC} holds and that the $V_i$'s are of the form $V_i(\cdot)= a_i V (\cdot) + d_i$, with $a_i>0$ and $V$ satisfying Assumption~\ref{assum:V_properties}. If each agent updates their bid using \texttt{OGD}, i.e., $\cA_i = \texttt{OGD}$, $\forall i \in \cI$, with $\eta_t^{(i)} = \alpha_i \eta_t^{(0)}$, $\alpha_i>0$, $\sum_{t=1}^{\infty} \eta_t^{(0)} = \infty $, and $\sum_{t=1}^{\infty} \left(\eta_t^{(0)}\right)^{2} < \infty$, then the sequence of play~$\bm{z}^{\texttt{OGD}}(t)=(z_i^{\OGD}(t))_{i\in \cI}$ converges to the unique NE of the stage game, i.e., $\lim_{t\to \infty}\bm{z}^{\texttt{OGD}}(t) = \bm{z}^{*}$.
\end{theorem}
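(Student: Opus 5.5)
Our plan is to reduce the heterogeneous-step-size \OGD dynamics to a standard projected stochastic-approximation (Robbins--Monro) scheme for a \emph{monotone} game, via a rescaling of the agents' utilities.

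The key observation is the following. Set $\eta_t^{(i)}=\alpha_i\eta_t^{(0)}$. Then the update~\eqref{e:OGD_update} of agent~$i$ coincides with the \OGD update, run with the \emph{common} step $\eta_t^{(0)}$, of an agent whose utility is $\tilde\varphi_i \triangleq \alpha_i\varphi_i$. Indeed, $\alpha_i\partial_i\varphi_i=\partial_i\tilde\varphi_i$, and the projection onto $\cR_i$ is unaffected by this rescaling. The rescaled game $\tilde{\cG}$ has valuations $\tilde V_i=\alpha_i a_i V+\alpha_i d_i$ and payments $\tilde p_i=\alpha_i p_i$. Its best responses, and hence its Nash equilibria, are the same as those of $\cG$, since multiplying $\varphi_i$ by $\alpha_i>0$ does not change the argmax. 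Therefore $\bm z^*$ is the unique NE of $\tilde{\cG}$ as well.

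Next we would show that $\tilde{\cG}$ is $\SDSC(\tilde{\bm r})$ with $\tilde r_i=r_i/\alpha_i$. By definition, $(\bm H_{\tilde{\bm r}}^{\tilde{\cG}})_{i,j}=\tilde r_i\alpha_i\partial^2_{i,j}\varphi_i+\tilde r_j\alpha_j\partial^2_{j,i}\varphi_j=(\bm H_{\bm r}^{\cG})_{i,j}$, so the matrix is identical. Negative definiteness therefore transfers directly from Theorem~\ref{thm:SDSC_log}, given the assumed condition~\eqref{e:suff_condition_SDSC}. Thus $\tilde{\cG}$ is $\bm{\tilde r}$-monotone: for all $\bm z\neq\bm z^*$ in $\cR$,
\[
\sum_i \tilde r_i\,(z_i-z_i^*)\,\partial_i\tilde\varphi_i(\bm z)<0 .
\]
We do not need to recompute $\psi$ for the rescaled utilities; the matrix identity suffices. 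The structural assumption $V_i=a_iV+d_i$ is what keeps the rescaled game inside the class covered by the paper's framework. Alternatively, it can be used to note that the weighted condition is invariant under this rescaling.

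The convergence step uses the weighted energy $E(t)=\sum_i \tilde r_i (z_i(t)-z_i^*)^2$. Nonexpansiveness of $\Pi_{\cR_i}$ and $z_i^*\in\cR_i$ give
\[
E(t+1)\le E(t)+2\eta_{t+1}^{(0)}\sum_i\tilde r_i(z_i(t)-z_i^*)\partial_i\tilde\varphi_i(\bm z(t))+(\eta_{t+1}^{(0)})^2\sum_i\tilde r_i \alpha_i^2G_i^2 .
\]
The middle term is $\le 0$ by monotonicity, so $E(t)$ is a quasi-Fej\'er sequence. Because $\sum_t(\eta_t^{(0)})^2<\infty$, $E(t)$ converges, and
\[
\sum_t \eta_{t+1}^{(0)}\Phi(\bm z(t))<\infty, \qquad \Phi(\bm z)\triangleq-\sum_i\tilde r_i(z_i-z_i^*)\partial_i\tilde\varphi_i(\bm z)\ge 0 .
\]
Since $\sum_t\eta_t^{(0)}=\infty$, we get $\liminf_t\Phi(\bm z(t))=0$. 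Moreover, $\Phi$ is continuous on the compact set $\cR$ and vanishes only at $\bm z^*$, by strict monotonicity and uniqueness (Corollary~\ref{thm:unique_NE}). Hence some subsequence of $\bm z(t)$ converges to $\bm z^*$, so $E$ along it tends to $0$. Because $E(t)$ converges, $E(t)\to0$, i.e., $\bm z^{\OGD}(t)\to\bm z^*$.

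The main obstacle we expect is the careful justification that $\Phi$ vanishes only at $\bm z^*$. This requires integrating the negative definiteness of $\bm H_{\tilde{\bm r}}$ along the segment $[\bm z^*,\bm z]$ (Rosen's argument) to obtain strict monotonicity. We also need the variational inequality $\sum_i\tilde r_i(z_i-z_i^*)\partial_i\tilde\varphi_i(\bm z^*)\le0$ at the constrained NE. Combining the two yields $\Phi(\bm z)>0$ for $\bm z\neq\bm z^*$.
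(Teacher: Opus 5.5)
Your proposal is correct and follows the paper's proof. It uses the same auxiliary game $\tilde{\cG}$ with $\tilde\varphi_i=\alpha_i\varphi_i$, notes that common-step \OGD in $\tilde{\cG}$ reproduces the heterogeneous-step dynamics, and transfers $\SDSC$ with weights $r_i/\alpha_i$ via your entrywise identity $\bm H^{\tilde{\cG}}_{\tilde{\bm r}}=\bm H^{\cG}_{\bm r}$.

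The only difference is in the final step. The paper invokes Zhou et al.'s Theorem~4 for weighted-monotone games. You instead prove the deterministic convergence yourself, using the energy $\sum_i (r_i/\alpha_i)(z_i-z_i^*)^2$, strict monotonicity combined with the variational inequality at $\bm z^*$, and a quasi-Fej\'er/compactness argument; this part is sound.

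Your remark on the role of $V_i=a_iV+d_i$ is unnecessary: neither your argument nor the paper's actually uses the affine structure.
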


\begin{IEEEproof}
Let $\bm{r}^{*}$ be the vector for which the condition~\eqref{e:suff_condition_SDSC} is satisfied. 

   If $\eta_t^{(i)} = \eta^{(0)}_t$, then by Theorem~\ref{thm:SDSC_log} the game~$\mathcal{G}$ is $\bm{r}^{*}$-\textit{monotone}, so the convergence of \texttt{OGD} follows directly from~\cite{Zhou_OGD_NE_OR_21}[Thm.~4]. 
   
   To extend this result to heterogeneous learning rates $\eta_t^{(i)} = \alpha_i \eta_t^{(0)}$, consider the auxiliary game $\tilde{\mathcal{G}}$ with modified utilities $\tilde{\phi}_i = \alpha_i \phi_i$. The following holds, 

   \begin{itemize}
       \item The \texttt{OGD} updates in the game~$\tilde{\mathcal{G}}$ with $\tilde \eta_t^{(i)}= \eta_t^{(0)}$ coincide with the \texttt{OGD} updates in the game~$\mathcal{G}$ with~$\eta_t^{(i)} = \alpha_i \eta_t^{(0)}$. 

       \item The games~$\tilde{\mathcal{G}}$ and~$\mathcal{G}$ share the same set of Nash equilibria.  

       \item If $\mathcal{G}$ is $\bm{r}^{*}$-monotone, then the game $\tilde{\mathcal{G}}$ is $\bm{r}$-monotone with $r_i = r_{i}^{*}/ \alpha_i$. 
   \end{itemize}

   Combining the above statements with Theorem~\ref{thm:SDSC_log}, we deduce that $\tilde{\mathcal{G}}$ is $\bm{r}$-monotone with $r_i = r_{i}^{*}/ \alpha_i$. Thus, applying~\cite{Zhou_OGD_NE_OR_21}[Thm.~4] to~$\tilde{\mathcal{G}}$ yields the desired convergence result, which completes the proof.
   % Since $\partial_i \tilde{\phi}_i = \alpha_i \partial_i \phi_i$, the \texttt{OGD} updates in the repeated game~$\tilde{\mathcal{G}}$ with homogeneous step size $\eta_t^{(0)}$ coincide with the \texttt{OGD} updates in the repeated game~$\mathcal{G}$ with heterogeneous step sizes $\eta_t^{(i)} = \alpha_i \eta_t^{(0)}$. Moreover, Theorem~\ref{thm:SDSC_log} implies that $\tilde{\mathcal{G}}$ is $\bm{r}$-\textit{monotone} with $r_i = r_{i}^{*}/ \alpha_i$. 
\end{IEEEproof}

The step-size condition in Theorem~\ref{thm:OGD_convergence_log} is satisfied whenever~$\eta_t \propto t^{-\beta}$ with~$\beta \in (1/2,1]$. For logarithmic utilities, the induced optimization problem is convex, and the standard choice is $\eta_t \propto t^{-1/2}$, which yields the usual $O(\sqrt{T})$ regret but does not satisfy this condition. A simple workaround in the merely convex case is to take $\beta$ close to $1/2$, which preserves sublinear regret while complying with $\beta>1/2$. Moreover, since each agent’s action set is a compact interval $[\epsilon_i,c_i]$ with $\epsilon_i>0$, the logarithmic utilities are in fact strongly concave on this domain, with parameter $\gamma_i>0$ that goes to $0$ when $\epsilon_i\to 0$. In this strongly concave setting, one may take $\eta_t \propto 1/t$ (corresponding to $\beta=1$), which both satisfies the corollary’s condition and yields the standard logarithmic regret guarantees.

Now we turn our attention to the analysis of \DAQ and \texttt{RMQ}. Theorem~\ref{cor:convergence_DAQ} shows that \texttt{RMQ} converges to the NE of the stage game. Moreover, when the utilities scale logarithmically in the allocated resource, the theorem quantifies the convergence gap of \DAQ to the NE, via the metric~$\overline{\mathrm{Gap}}_T(\mathcal{A})$, used in~\cite{Mertikopoulos_VI_2019_MP}, and defined as, 
\begin{align}
&\overline{\mathrm{Gap}}_T(\mathcal{A})
= \frac{1}{T} \sum_{t=1}^{T} \mathrm{Gap}\left(\bm{z}^{\mathcal{A}}(t)\right):  \\ 
&\mathrm{Gap}(\bm{z})
= \sum_{i\in \cI} \frac{ \partial_i \varphi_i(\bm{z})}{a_i} \left(z^{*}_i - z_i\right).  \label{e:gap_metric}
\end{align}
Indeed, when the $V_i$'s are logarithmic, the game~$\mathcal{G}$ is~$\SDSC(\bm{r}^{*})$ with $r_i^{*}=1/a_i$, as shown in Theorem~\ref{thm:SDSC_log}. Thus, $\mathrm{Gap}(\bm{z}) > 0$ for all $\bm{z} \neq \bm{z}^{*}$, with equality if and only if $\bm{z} = \bm{z}^{*}$~\cite{Mertikopoulos_VI_2019_MP}. 

\begin{comment}
    \textcolor{blue}{
Cleque:  I believe the correct uniform gradient bound is $G_i=\frac{a_i}{\varepsilon_i}+1$ (linear payment) or $G_i=\frac{a_i}{\varepsilon_i}+\dot p_i(c_i)$ (in general) . The inequality used in~\eqref{e:gradient_bound} is not preserved under absolute values, since the gradient may be negative.
This may affect the resulting $\mathrm{Gap}$ bound, eq 33, by explicitly incorporating heterogeneity through the $a_i$'s; in particular, if $\min_i a_i$ is small, the gap bound can increase.
}
\end{comment}

\begin{theorem}[Convergence of \texttt{RMQ} and \texttt{DAQ}] \label{cor:convergence_DAQ} Assume that the condition~\eqref{e:suff_condition_SDSC} holds and that the $V_i$'s are of the form $V_i(\cdot)= a_i V (\cdot) + d_i$, with $a_i>0$ and $V$ is function that satisfies Assumption~\ref{assum:V_properties}. Further assume that~$\epsilon_i=\epsilon$ for all agents. The following holds, 
\begin{enumerate}
    \item If all agents update their bids according to \texttt{RMQ}, i.e., $\forall i$, $\cA_i=\texttt{RMQ}$, $\eta_t^{(i)} =  \alpha_i \eta_t^{(0)}$, $\alpha_i>0$, and $\sum_{t=1}^{\infty} (\eta_t^{(0)})^{2}/ \sum_{t=1}^{\infty} \eta_t^{(0)}\to 0$, then the vector of bids $\bm{z}^{\texttt{RMQ}}(t)$ converges to the unique NE of the stage game, i.e., $\lim_{t\to \infty} \bm{z}^{\texttt{RMQ}}(t)= \bm{z}^{*}$.  

    \item If all agents update their bids according to \texttt{DAQ} and $V(\cdot) = \ln(\cdot)$, with $\eta_t^{(i)}=\frac{\epsilon c_i}{a_i\sqrt{T}}$, $p_i$ is the identity function, and $a_i\geq \epsilon$, then 
        \begin{align}
                \overline{\mathrm{Gap}}_T(\DAQ) \leq \frac{1}{2\epsilon\sqrt{T}}
                 \left( \sum_{i\in \cI} c_i + 4|\cI| \, \max_{i\in \cI} c_i    \right).
        \end{align}
\end{enumerate}
 
\end{theorem}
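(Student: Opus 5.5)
For part 1, I will use the same rescaling device as in the proof of Theorem~\ref{thm:OGD_convergence_log}, and then invoke the convergence theorem for regularized Robbins--Monro dynamics in monotone games~\cite{Mertikopoulos_VI_2019_MP,Unified_SA_Games_MP_24}. That theorem asks for three things: a game that is strictly (here strongly) monotone, a common step sequence with $\sum_t(\eta_t)^2/\sum_t\eta_t\to 0$, and strongly convex regularizers. The key steps are as follows.

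\emph{Step 1: absorb the personal scalings into the utilities.} I replace agent $i$'s utility by $\tilde\varphi_i=\alpha_i\lambda_i\varphi_i$ and use the common step $\eta_t^{(0)}$ with regularizer $h_i(z)=z^2/2$. With this choice, the \texttt{RMQ} iterates of the original game coincide with the \texttt{RMQ} iterates of the auxiliary game $\tilde{\cG}$. Since the multipliers are positive, $\tilde{\cG}$ and $\cG$ have the same Nash equilibria.

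\emph{Step 2: transfer SDSC to $\tilde{\cG}$.} Under \eqref{e:suff_condition_SDSC}, Theorem~\ref{thm:SDSC_log} gives $\SDSC(\bm r^*)$ for $\cG$. Because $\bm H$ depends on $\bm r$ only through the products $r_i\partial^2\varphi_i$, the game $\tilde{\cG}$ is $\SDSC(\tilde{\bm r})$ with $\tilde r_i=r_i^*/(\alpha_i\lambda_i)$.

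\emph{Step 3: upgrade to strong monotonicity with weights $\tilde{\bm r}$.} $\cR$ is compact and $\bm H_{\tilde{\bm r}}$ is continuous, so its largest eigenvalue is bounded by some $-\mu<0$ on $\cR$. Writing the pseudo-gradient as $v$, this gives
\[\sum_i \tilde r_i\,(v_i(\bm z)-v_i(\bm z'))(z_i-z_i')\le-\tfrac{\mu}{2}\|\bm z-\bm z'\|^2 .\]

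\emph{Step 4: handle the weights in the energy function.} The cited result is stated for $\bm 1$-monotonicity, so I need the weights $\tilde{\bm r}$ to appear in the Lyapunov argument. I use the weighted Fenchel coupling $\sum_i\tilde r_i\big(h_i(z_i^*)+h_i^*(y_i)-y_iz_i^*\big)$. Its one-step drift is $\eta^{(0)}_t\sum_i\tilde r_i v_i(z_i-z_i^*)$ plus a second-order term of order $(\eta_t^{(0)})^2G^2$. This is exactly the computation in \cite{Mertikopoulos_VI_2019_MP}, with the weights carried along. Summing the drift, strong monotonicity and the step condition give $\liminf\|\bm z(t)-\bm z^*\|=0$. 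The standard argument there then yields convergence of the whole sequence; in the deterministic setting the energy is almost nonincreasing once the step is small.

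\emph{Obstacle.} I expect this weighted-coupling step to be the hardest. In particular, I need the assumption $\epsilon_i=\epsilon$ at this point, to check that the clipping $Q_i$ interacts correctly with the reciprocity of the Fenchel coupling on the boundary.

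For part 2, I will run the standard FTRL/dual-averaging gap analysis with weights $1/a_i$. By concavity-free linearization, $\mathrm{Gap}(\bm z(t))=\sum_i a_i^{-1}g_t^{(i)}(z_i^*-z_i(t))$. Hence $T\,\overline{\mathrm{Gap}}_T$ equals a weighted sum of the linearized regrets of each agent against the fixed comparator $z_i^*$.

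The lazy-projection analysis of \DAQ with constant step $\eta^{(i)}$ bounds agent $i$'s linearized regret by
\[\frac{(z_i^*)^2}{2\eta^{(i)}}+\frac{\eta^{(i)}}{2}\sum_t (g_t^{(i)})^2 .\]
I then insert the following bounds:
\begin{itemize}
\item $z_i^*\le c_i$;
\item $|g_t^{(i)}|\le a_i/\epsilon+1\le 2a_i/\epsilon$, which uses $a_i\ge\epsilon$ and \eqref{e:gradient_bound};
\item $\eta^{(i)}=\epsilon c_i/(a_i\sqrt T)$.
\end{itemize}
Dividing by $a_i$ makes the first term $c_i\sqrt T/(2\epsilon)$ and the second at most $2c_i\sqrt T/\epsilon$. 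Summing over $i$ and dividing by $T$ gives the bound of the form $\frac{1}{2\epsilon\sqrt T}\big(\sum_ic_i+4|\cI|\max_ic_i\big)$. The remaining care is in the bookkeeping of constants, namely whether the first term uses $z_i^{*2}$ or $D_i^2$.
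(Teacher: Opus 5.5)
Your proposal is correct, but both parts follow a different route from the paper.

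\textbf{Part 1.} You put the factors $\alpha_i\lambda_i$ into the utilities and keep a common regularizer. This leaves $\tilde\cG$ only $\SDSC(\tilde{\bm r})$ with $\tilde r_i=r_i^*/(\alpha_i\lambda_i)$, so you must redo the Fenchel-coupling analysis with weights. The paper makes the opposite split. It takes $\tilde\varphi_i=r_i^*\varphi_i$, so that $\tilde\cG$ is $\SDSC(\bm 1)$. It then pushes the personal factors into player-specific regularizers $\tilde h_i(z)=z^2/(2\tilde\lambda_i)$ with $\tilde\lambda_i=\lambda_i\alpha_i/r_i^*$. Since \cite[Thm.~4.6]{Mertikopoulos_VI_2019_MP} already allows a different regularizer for each player, and needs only $\SDSC(\bm 1)$ plus reciprocity, it applies verbatim.

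The two arguments are in fact the same. Your weighted coupling $\sum_i\tilde r_i F_{h_i}(z_i^*,y_i)$ equals the unweighted coupling for the regularizer $\tilde r_i h_i$ at the dual point $\tilde r_i y_i$, which is exactly the paper's proxy. So the step you flag as the main obstacle disappears with the other choice of rescaling, and your Step 3 is not needed. Your concern about $\epsilon_i=\epsilon$ is also misplaced: reciprocity of the clipped quadratic choice map holds on any interval, and equal minimum bids matter only for the constants in part 2.

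\textbf{Part 2.} The paper reuses the proxy game, with common step $1/\sqrt T$ and $\tilde\lambda_i=\epsilon c_i$. It plugs $\tilde\Omega$, $\tilde K$, $\tilde G$ into the ergodic gap bound $\frac{1}{\sqrt T}\bigl(\tilde\Omega+\tilde G^2/(2\tilde K)\bigr)$ of \cite[Thm.~6.2, Cor.~6.3]{Mertikopoulos_VI_2019_MP}. You instead write $T\,\overline{\mathrm{Gap}}_T$ as the $1/a_i$-weighted sum of per-agent linearized regrets against $z_i^*$, and apply the single-agent FTRL bound to each.

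Your route is more elementary and self-contained. Here \SDSC is used only to make $\bm z^*$ unique and $\mathrm{Gap}$ a meaningful merit function. It is also slightly sharper, because each agent's gradient bound is paired with its own step, rather than pairing the aggregate $\tilde G^2$ with the worst $\tilde\lambda_i$. You get $\frac{5}{2\epsilon\sqrt T}\sum_i c_i$, which implies the stated bound since $\sum_i c_i\le|\cI|\max_i c_i$.

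On your bookkeeping question: with the regularizer $z^2/2$ that defines \DAQ, the first term is $\bigl((z_i^*)^2-\epsilon^2\bigr)/(2\eta^{(i)})\le c_i^2/(2\eta^{(i)})$, so your constant is right. The paper's $\tilde\Omega$ yields the same contribution $\frac{1}{2\epsilon}\sum_i c_i$. Its displayed bound with $\sum_i c_i^2$ is a typo.
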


\begin{IEEEproof}
Let $\bm{r}^{*}$ be the vector for which the condition~\eqref{e:suff_condition_SDSC}. 

We first prove the convergence of \texttt{RMQ} in~\eqref{e:update_RMQ}. 

When for all agents, $\eta_t^{(i)}= \eta_t^{(0)}$, and $\sum_{t=1}^{\infty} (\eta_t^{(0)})^{2}/ \sum_{t=1}^{\infty} \eta_t^{(0)}\to 0$, convergence of the bids vector, when all agents use $\RRM$---with $\texttt{RMQ}$ as particular case---is guaranteed by~\cite{Mertikopoulos_VI_2019_MP}[Thm. 4.6] under two conditions: 1) The game $\mathcal{G}$ is $\SDSC(\bm{1})$, and 2)  For any player $i$ and for any sequence $\{y_n\}\subset \mathbb{R}$ such that $Q_i(y_n)\to \ell$,  $F_i(\ell,y_n)\to 0$, where $F_i$ is what is called the \textit{Fenchel conjugate} in~\cite{Mertikopoulos_VI_2019_MP}, and defined as, $F_i(\ell, y)=   h^{*}_{i}(y) - \tilde h_i(y,\ell)$, where $h^{*}_{i}(y)= \max_{z\in \mathcal{R}_i} \left\{ \tilde h_i(y,z)\right\}$, and $h_{i}^{*}$ is the convex conjugate of $h_i$. It is easy to prove that the second condition holds for quadratic regularizers, i.e., any $h$ such that $h(z)=\frac{z^{2}}{2\lambda}$, and $\lambda>0$~\cite{Wiopt25_BMDA}. Thus, when $\bm{r}^{*}=\bm{1}$ convergence follows directly from \cite{Mertikopoulos_VI_2019_MP}[Thm. 4.6]. 

To extend this result to arbitrary $\bm{r}^{*}>\bm{0}$ and when $\eta_t^{(i)}= \alpha_i \eta_t^{(0)}$, we define the payoff functions $\tilde \phi_i = r_i^{*} \phi_i$, and the corresponding stage game $\tilde{\cG}$. The following holds, 
\begin{itemize}
    \item The games~$\tilde{\mathcal{G}}$ and~$\mathcal{G}$ share the same set of Nash equilibria.  

    \item If the game $\cG$ is $\SDSC(\bm{r}^{*})$, then the game $\tilde{\cG}$ is $\SDSC(\bm{1})$. 

    \item  \texttt{RMQ} updates with $\tilde \eta_t^{(i)}=\eta_t^{(0)}$ and regularizer $\tilde h_i(z)= \frac{z^{2}}{2\tilde{\lambda}_i}$, with $\tilde{\lambda}_i= \frac{\lambda_i \alpha_i}{r_i^{*}}$ in the repeated $\tilde{\cG}$, coincides with \texttt{RMQ} updates in the repeated $\cG$ with $\eta_t^{(i)}=\alpha_i \eta_t^{(0)}$, and $h_i(z)=\frac{z^{2}}{2\lambda_i}$.  
\end{itemize}

The statements above combined with Theorem~\ref{thm:SDSC_log} yields the convergence of the \texttt{RMQ} updates with homogeneous $\tilde \eta_t^{(i)}$ regularizers $\tilde h_i$ in the repeated $\tilde{\cG}$. This implies the convergence of \texttt{RMQ} updates in $\cG$ with heterogeneous $\eta_t^{(i)}$'s.

\medskip 

We now prove the gap bound for \texttt{DAQ} when $V(\cdot)= \ln(\cdot)$, when~$\eta_t^{(i)}=\frac{\epsilon c_i}{a_i\sqrt{T}}$. By Theorem~\ref{thm:SDSC_log}, $r_i^{*}$ is equal to~$1/a_i$ for logarithmic $V$. Moreover, because $\eta_t^{(i)}$ is constant over time, \texttt{RMQ}, with $\lambda_i=1$, and \DAQ updates coincide. Similarly to the convergence proof of \texttt{RMQ}, we consider the proxy \texttt{RMQ} updates in $\tilde{\cG}$ with $\tilde \eta_t^{(i)}=\eta_t^{(0)}=\frac{1}{\sqrt{T}}$, for any agent $i$, $\tilde h_i(z) = \frac{z^{2}}{2 \tilde \lambda_i}$, $\tilde \lambda_i = \alpha_i a_i$, and $\alpha_i= \frac{\epsilon \,c_i}{a_i}$. Applying~\cite[Thm.~6.2]{Mertikopoulos_VI_2019_MP} and~\cite[Cor.~6.3]{Mertikopoulos_VI_2019_MP} to these \texttt{RMQ} updates in $\tilde{\cG}$ yields,        

\begin{align}\label{e:gap_expression}
    \overline{\mathrm{Gap}}_{T}(\texttt{DAQ}) 
    \leq \frac{1}{\sqrt{T}} \left(  \tilde \Omega +\frac{\tilde G^{2}}{2\tilde K}  \right),
\end{align}
where
\begin{align}\label{e:constants_gap}
    &\tilde \Omega\triangleq \max_{\bm{z}\in \mathcal{R}} \sum_{i} \tilde h_i(z_i) - \min_{\bm{z}\in \mathcal{R}} \sum_{i} \tilde h_i(z_i), \\
     &\tilde K\triangleq \min_{i\in \cI} \frac{1}{\tilde \lambda_i},
    \quad
    \tilde G \triangleq \sup_{\bm{z}\in \mathcal{R}} \left\| \left(\partial_i \tilde \varphi_i(\bm{z})\right)_{i\in \cI} \right\|_2^{2}.
\end{align}

\noindent We bound these quantities as follows, 
\begin{align}
  &  \tilde G^{2}\leq \sum_{i\in \cI} \frac{1}{a_i^{2}} G_i^{2} \leq \sum_{i\in \cI}   \frac{4a_i^{2}}{a_i^{2}\epsilon^{2}}= \frac{4|\cI|}{\epsilon^{2}},  \\
  &\frac{1}{\tilde K}=\max_{i} \tilde \lambda_i = \epsilon \max_{i} c_i, \qquad \text{and } \tilde \Omega\leq \frac{1}{2\epsilon} \sum_{i\in \cI} c_i^{2}.
\end{align}
Plugging these bounds in~\eqref{e:gap_expression} yields the target result, which finishes the proof.\end{IEEEproof}

While \texttt{DAQ} comes with standard no-regret guarantees and can therefore be viewed as a plausible behavioral model for repeated bidding, \texttt{RMQ} with adaptive step-sizes $(\eta_t^{(i)})_{t}$ does not generally enjoy regret guarantees. Nevertheless, \texttt{RMQ} acts as proxy to analyze the multi-agent behavior of \texttt{DAQ} when the step-sizes $\eta_t^{(i)}$ are time-invariant; in addition, \texttt{RMQ} can be interpreted as a distributed procedure for computing the unique Nash equilibrium of the stage game, as shown in Theorem~\ref{cor:convergence_DAQ}.

The choice of~$\eta_t^{(i)}$ optimizes the asymptotic dependency of the regret in the parameters problem when the budgets of other agents is unknown (see Remark~\ref{remark:OCO_constants_log}). Under this choice of~$\eta_t^{(i)}$, Theorem~\ref{cor:convergence_DAQ} provides an explicit finite-horizon bound of the deviation from equilibrium through the averaged gap
$\overline{\mathrm{Gap}}_T(\texttt{DAQ})$.  The established bound highlights that convergence may deteriorate when the minimum admissible bid $\epsilon$ is small, since logarithmic utilities induce large gradients near~$0$. Finally, budget constraints introduce additional variability in the updates, which also contributes to slower convergence.

\medskip

Now we study the case where all agents employ a myopic best response. This is a \textit{simultaneous} best-response update: all agents revise in parallel from the last observed profile. Such parallel BR does not necessarily converge to a Nash equilibrium in general~\cite{hart2003uncoupled}. By contrast, in \textit{unilateral} best-response updates, agents revise one at a time (cyclically or at random); in finite potential games, these dynamics converge to a pure Nash equilibrium~\cite{monderer1996potential}.

\begin{theorem}\label{thm:BR_converges_NE}
    If $V_i(\cdot)= a_i \ln (\cdot) + d_i$, $p_i(z) = z $, $\cA_{i}=\BR$ for all agents, and $\epsilon_i=\epsilon$ such that, 
    \begin{align}\label{e:Condition_CV_BR}
           \epsilon>\frac{1}{n-1}\left(\frac{(\sqrt n-1)^2}{\sqrt n}\,\max_{i\in \cI} a_i -\delta\right).
    \end{align} 
then $(\bm{z}^{\BR}(t))_t$ converges to the unique Nash equilibrium $\bm{z}^{*}$ linearly fast, i.e., $\exists \rho \in (0,1)$: $\| \bm{z}^{\BR}(t) - \bm{z}^{*}\| \leq \rho^{t} \| \bm{z}^{\BR}(0) - \bm{z}^{*}\|$.
\end{theorem}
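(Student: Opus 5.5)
The plan is to write the simultaneous best-response dynamics as a fixed-point iteration $\bm{z}^{\BR}(t)=\textbf{BR}(\bm{z}^{\BR}(t-1))$ and to show that $\textbf{BR}$ is a contraction on $\cR$ in the $\ell_\infty$ norm. We then invoke Banach's fixed-point theorem. A fixed point of $\textbf{BR}$ is by definition a Nash equilibrium, and by Corollary~\ref{thm:unique_NE} (which applies because the logarithmic case satisfies~\eqref{e:suff_condition_SDSC} by Theorem~\ref{thm:SDSC_log}) that equilibrium is $\bm{z}^{*}$. Banach then gives $\|\bm{z}^{\BR}(t)-\bm{z}^{*}\|_\infty\le \rho^t\|\bm{z}^{\BR}(0)-\bm{z}^*\|_\infty$ with $\rho$ the contraction modulus. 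Since the statement does not fix the norm, $\ell_\infty$ suffices.

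\emph{Step 1 (derivative of the unclipped best response).} By Remark~\ref{remark:BR_log}, $\text{BR}_i(s)=\Pi_{\cR_i}(h_i(s))$ with $h_i(s)=\frac{-s+\sqrt{s^2+4a_is}}{2}$. Differentiating gives
\begin{align}
h_i'(s)=\frac12\left(\frac{s+2a_i}{\sqrt{s^2+4a_is}}-1\right).
\end{align}
This is positive and decreasing in $s>0$: set $u=s/a_i$; then $(u+2)^2/(u^2+4u)=1+4/(u^2+4u)$, which is decreasing in $u$. The clipping $\Pi_{\cR_i}$ is nondecreasing and $1$-Lipschitz. Hence, for $s,s'$ in an interval $[s_{\min},\infty)$, we get $|\text{BR}_i(s)-\text{BR}_i(s')|\le h_i'(s_{\min})|s-s'|$ by the mean value theorem applied to $h_i$.

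\emph{Step 2 (Lipschitz bound for the operator).} For every $\bm{z}\in\cR$ we have $s_i(\bm{z})=\sum_{j\ne i}z_j+\delta\ge s_{\min}\triangleq (n-1)\epsilon+\delta$. Moreover, $|s_i(\bm{z})-s_i(\bm{z}')|\le (n-1)\|\bm{z}-\bm{z}'\|_\infty$. Combining this with Step 1 gives
\begin{align}
\|\textbf{BR}(\bm{z})-\textbf{BR}(\bm{z}')\|_\infty\le (n-1)\max_i h_i'(s_{\min})\,\|\bm{z}-\bm{z}'\|_\infty,
\end{align}
so we take $\rho\triangleq (n-1)\max_i h_i'(s_{\min})$. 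For fixed $s$, $h_i'(s)$ is increasing in $a_i$, because it depends only on $u=s/a_i$ and is decreasing in $u$. Therefore the maximum is attained at $a=\max_i a_i$.

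\emph{Step 3 (showing $\rho<1$ from~\eqref{e:Condition_CV_BR}).} This is the main computation. The requirement $h'(s)<\frac1{n-1}$ is equivalent to $\frac{s+2a}{\sqrt{s^2+4as}}<\frac{n+1}{n-1}$. Squaring and writing $u=s/a$ gives $(n-1)^2(u^2+4u+4)<(n+1)^2(u^2+4u)$, that is, $u^2+4u>\frac{(n-1)^2}{n}$. This in turn is equivalent to
\begin{align}
u>-2+\sqrt{4+\tfrac{(n-1)^2}{n}}=\frac{n+1}{\sqrt n}-2=\frac{(\sqrt n-1)^2}{\sqrt n}.
\end{align}
Substituting $s=s_{\min}=(n-1)\epsilon+\delta$ and $a=\max_i a_i$, this is exactly condition~\eqref{e:Condition_CV_BR}. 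Hence $\rho\in(0,1)$.

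The only delicate points are to check that the clipping does not break the mean-value argument (it is monotone and $1$-Lipschitz, so it does not) and to carry out the algebra in Step 3 carefully.
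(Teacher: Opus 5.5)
Your proposal is correct and follows the paper's argument. You write the \BR dynamics as a fixed-point iteration, bound the $\ell_\infty$-Lipschitz constant of $\textbf{BR}$ by $(n-1)\max_i h_i'(s_{\min})$ (using that $h_i'$ is decreasing and the clipping is $1$-Lipschitz), and reduce $\rho<1$ to condition~\eqref{e:Condition_CV_BR}. The differences are minor: you use a coordinatewise scalar mean-value bound instead of the Jacobian $\infty$-norm, and you derive the threshold $\frac{(\sqrt n-1)^2}{\sqrt n}a_i$ explicitly where the paper cites it; also, Banach's theorem already gives a unique fixed point, so your appeal to Corollary~\ref{thm:unique_NE} is not needed.
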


\begin{IEEEproof} 
The best-response updates are fixed point iterations with the best-response operator $\bm{BR}$. We prove that this operator is a \textit{contraction}, which yields convergence to the unique fixed point---which is also the NE of the stage game---at linear speed.

To prove that $\textbf{BR}$ is a contraction, define $\tilde{\text{BR}}_i(s) = \left(-s+\sqrt{s^2+4a_i s}\right)/2$ and $\widetilde{\mathbf{BR}}(\bm z)\triangleq\left(\widetilde{\text{BR}}_i(s_i(\bm{z}))\right)_{i\in\mathcal I}$, so that $\mathbf{BR}(\bm{z}) = \Pi_{\mathcal R}\left(\widetilde{\mathbf{BR}}(\bm{z}) \right)$ (see~\eqref{eq:BR_log}). Given that the projection map $\Pi_{\mathcal R}$ is $1$-Lipschitz, $\tilde{\text{BR}}_i$ is smooth, and using the generalized mean-value theorem~\cite[Cor.~3.2]{coleman}, a sufficient condition for the best-response operator $\textbf{BR}$ to be a contraction, is given by, 
\begin{align}
\sup_{\bm{z}\in \cR} \left\|\mathcal J_{\widetilde{\textbf{BR}}}(\bm z)\right\|_\infty<1. 
\end{align}
Direct calculations yield, 
\begin{align}\label{e:jacobi_small_1}
\left(\mathcal J_{\widetilde{\mathbf{BR}}}(\bm z)\right)_{i,j}
=
\begin{cases}
\zeta_i(s_i(\bm{z})),
& j\neq i,\\
0, & j=i.
\end{cases}
\end{align}
where $\zeta_i(s) = -\frac{1}{2} +\frac{s+2a_i}{2\sqrt{s^2+4a_i s}}$, and $s_i(\bm{z})=\sum_{j\neq i} z_j + \delta$. The function $\zeta_i$ is decreasing over $\mathbb{R}^{+}$, and thus,
\begin{align}
\left\|\mathcal J_{\widetilde{\mathbf{BR}}}(\bm z)\right\|_\infty
&=\max_{i\in\cI}\sum_{j\neq i}\zeta_i\!\left(s_i(\bm z)\right)
=\max_{i\in\cI}(n-1) \zeta_i \left(s_{\min}\right).
\end{align}
where $s_{\min}\triangleq (n-1)\epsilon + \delta$. Combining this with the fact that $(n-1)\zeta_i(s)<1$ is satisfied whenever, $s>\frac{(\sqrt n-1)^2}{\sqrt n}\, a_i$ (see~\cite{mboulou2025best}), proves that~\eqref{e:Condition_CV_BR} is indeed a sufficient condition for $\textbf{BR}$ to be a contraction. This finishes the proof. \end{IEEEproof}

\color{black}
\begin{comment}
\begin{IEEEproof}[Sketch of the proof]
We cast the dynamics as a fixed-point iteration governed by the best-response operator: at each step, each agent best-responds to the opponents’ previous actions. We derive a closed-form expression for this operator and its Jacobian, and we show that the Jacobian’s norm-one operator is strictly less than one as long as the minimum bid satisfies the condition~\eqref{e:Condition_CV_BR}. Consequently, the map is a contraction and, by Banach’s fixed-point theorem, the iterates exhibit global linear convergence to a unique fixed point, which also corresponds to the equilibrium of the game. 
The detailed proof will be presented in the appendix. 
\end{IEEEproof}
\end{comment}

Note that the lower bound on the minimum bid in~\eqref{e:Condition_CV_BR} scales inversely with the number of agents, i.e., $\epsilon = \mathcal{O}(1/n)$, and thus the convergence of best response dynamics is guaranteed with arbitrarily small minimum bid $\epsilon$ for large number of agents. Moreover, only~$\mathcal{O}(\ln(1/p))$ iterations are needed to converge to a point whose distance from the NE is smaller than~$p$.

% Finally, Theorem~\ref{thm:BR_converges_NE} shows that 

% Under the minimum-bid condition in~\eqref{e:Condition_CV_BR}, synchronous best-response dynamics converge to the Nash equilibrium. However, this theorem requires linear weighting functions (in contrast to \texttt{DAQ}). 

%%%%%%%%%%%%%%%%%%%%%%%%%%%%%%%%%%%%%%%%%%%%%%%%%%%%%%%%%%%%%%%%%%%%%%%%%%%%%%%%%%%%%%%%%%%%%%%%%%%%
%%%%%%%%%%%%%%%%%%%%%%%%%%%%%%%%%%%%%%%%%%%%%%%%%%%%%%%%%%%%

\section{Numerical Simulations}
\label{sec:simulations}

\begin{table}
\centering
\begin{minipage}{0.48\textwidth}
\centering
\captionof{table}{Homogeneous dynamics: number of convergence iterations in terms of the 
fixed-point residual $r_t$ (threshold $<10^{-5}$) under varying $\gamma$ and $n$.
}
\label{tab:convergence_residual}
\setlength{\tabcolsep}{4pt}
\begin{tabular}{|c|c|c|c|c|c|c|c|}
\hline
\textbf{$\gamma$} & \textbf{$n$} 
& \textbf{\texttt{BR}} 
& \textbf{\texttt{OGD}$_V$} 
& \textbf{\texttt{OGD}$_F$} 
& \textbf{\texttt{DAQ}$_F$} 
& \textbf{\texttt{DAQ}$_V$} 
& \textbf{\texttt{RRM}$_V$} \\
\hline
\multirow{3}{*}{\textbf{0}}
& 2   & 15 & 37 & 122 & 195 & $(r_T=3.7{\times}10^{-2})$ & 1682  \\
& 10  & 7  & 19 & 240 & 311 & $(r_T=1.05)$              & 2291 \\
& 20  & 6  & 20 & 252 & 330 & $(r_T=1.73)$              & 2361 \\
\hline

\multirow{3}{*}{\textbf{5}}
& 2   & 15 & 111 & 127 & 184 & $(r_T=1.0{\times}10^{-1})$ & 504  \\
& 10  & 7  & 40 & 206 & 274 & $(r_T=6.8{\times}10^{-1})$ & 2182 \\
& 20  & 6  & 1811 & 221 & 289 & $(r_T=7.5{\times}10^{-1})$ & 2220 \\
\hline

\multirow{3}{*}{\textbf{10}}
& 2  & 15 & 116  & 117 & 184 & $(r_T=9.7{\times}10^{-2})$ & 498   \\
& 10 & 8  & 533 & 185 & 253 & $(r_T=4.5{\times}10^{-1})$ & 1604 \\
& 20 & 8  & 533 & 194 & 259 & $(r_T=4.5{\times}10^{-1})$ & 2062 \\
\hline
\end{tabular}
\end{minipage}
\end{table}

\begin{figure}[t]
    \centering
    % Legend (optional separate graphic)
    \includegraphics[width=\linewidth]{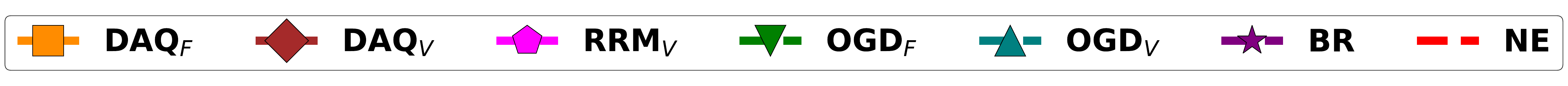}

    % First row of subfigures
    \begin{subfigure}{0.32\linewidth}
        \centering
        \includegraphics[width=\textwidth]{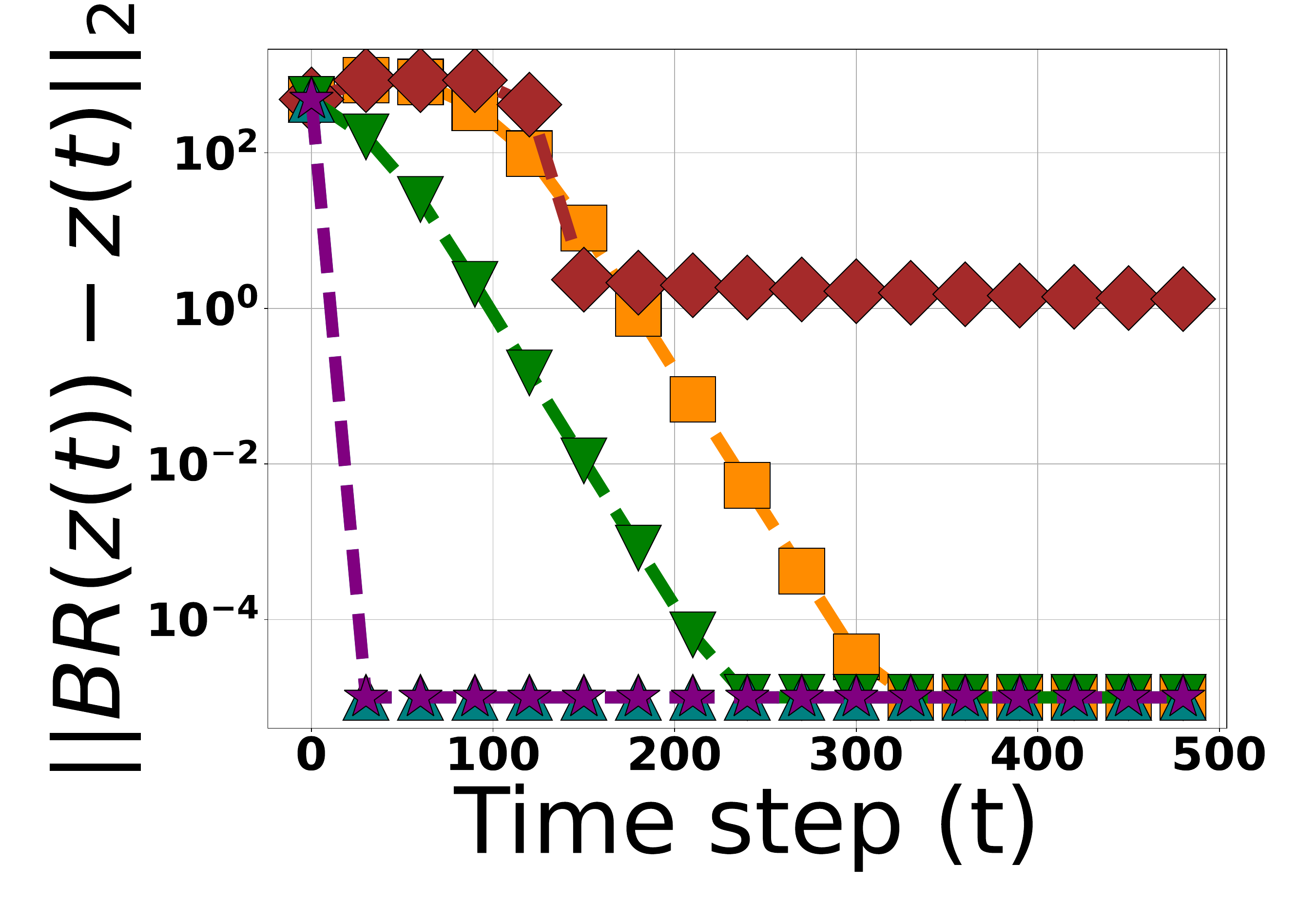}
        \caption{$\gamma = 0$}
        \label{fig:Speed_alpha1_gamma0.0_n_10}
    \end{subfigure}
    \hfill
    \begin{subfigure}{0.32\linewidth}
        \centering
        \includegraphics[width=\textwidth]{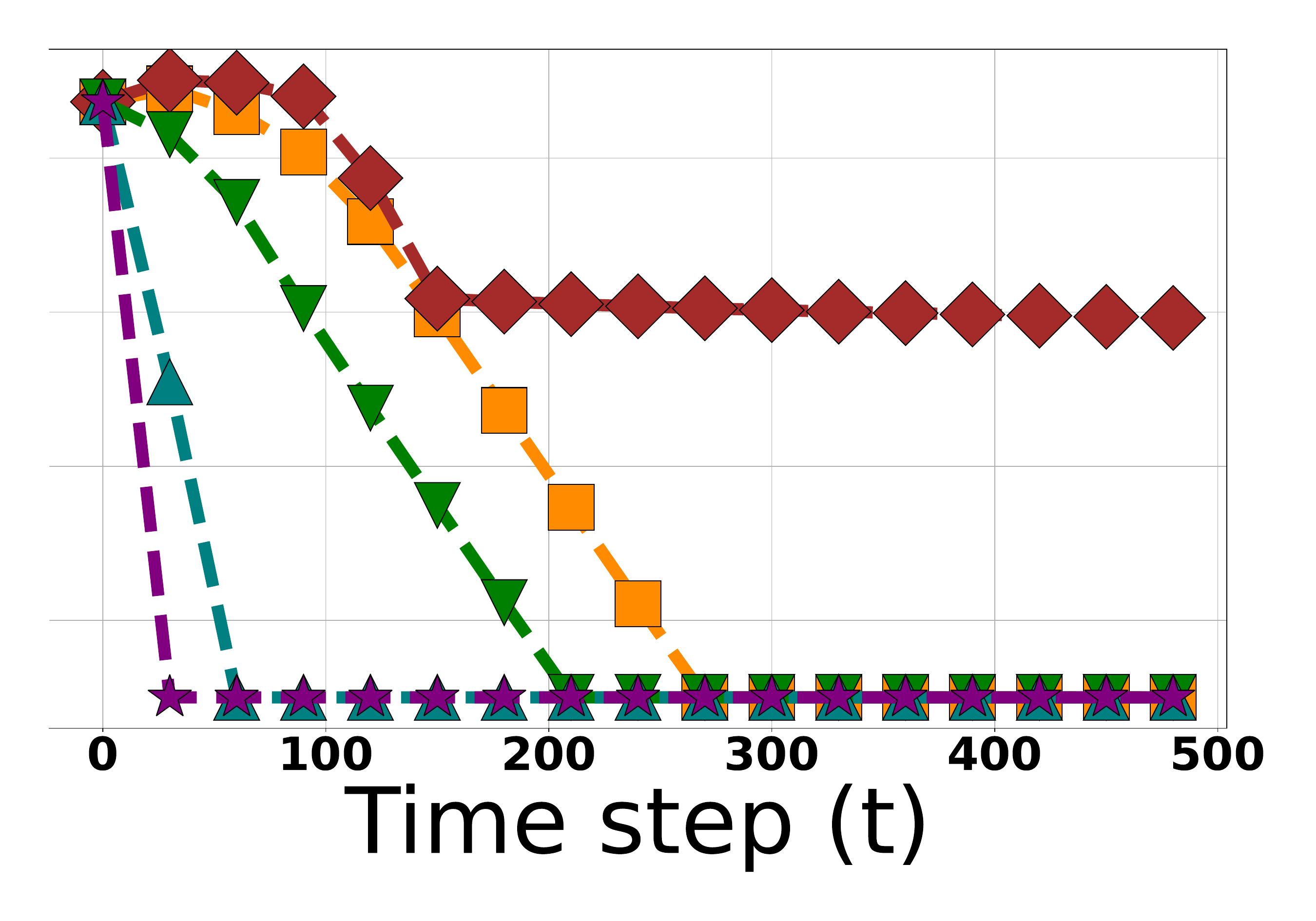}
        \caption{$\gamma = 5$}
        \label{fig:Speed_alpha1_gamma5.0_n_10}
    \end{subfigure}
    \hfill
    \begin{subfigure}{0.32\linewidth}
        \centering
        \includegraphics[width=\textwidth]{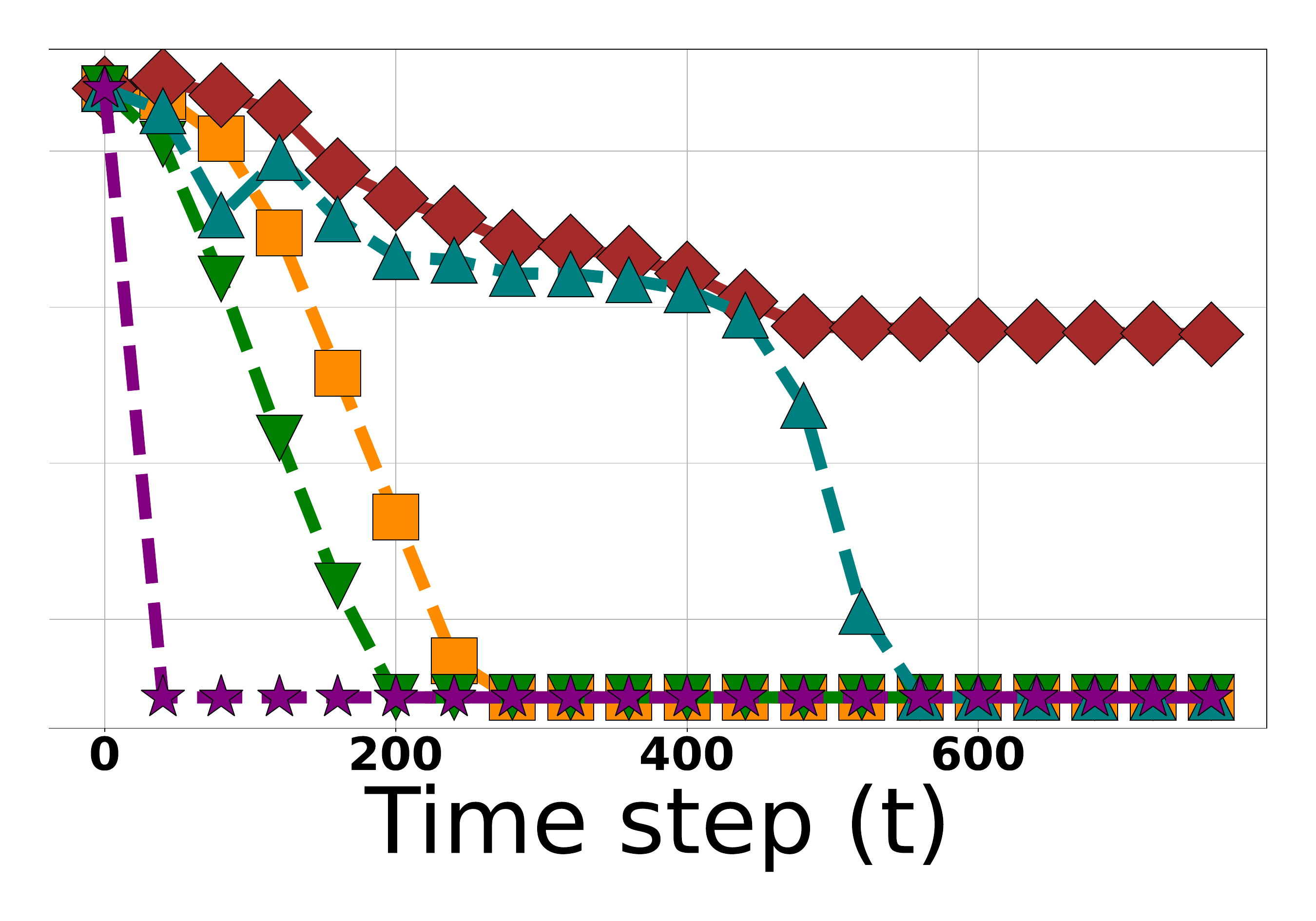}
        \caption{$\gamma = 10$}
        \label{fig:Speed_alpha1_gamma10.0_n_10}
    \end{subfigure}

    \caption{Convergence speed under homogeneous dynamics and varying payoff's heterogeneity levels.}
    \label{fig:Homogeneous_speed_subfigs}
\end{figure}

\begin{figure}[t]
    \centering
    % Legend (optional separate graphic)
    \includegraphics[width=\linewidth]{img/Non_Hybrid/Legend.pdf}

    \centering
        \begin{subfigure}{0.48\columnwidth}
        \centering
        \includegraphics[width=\linewidth]{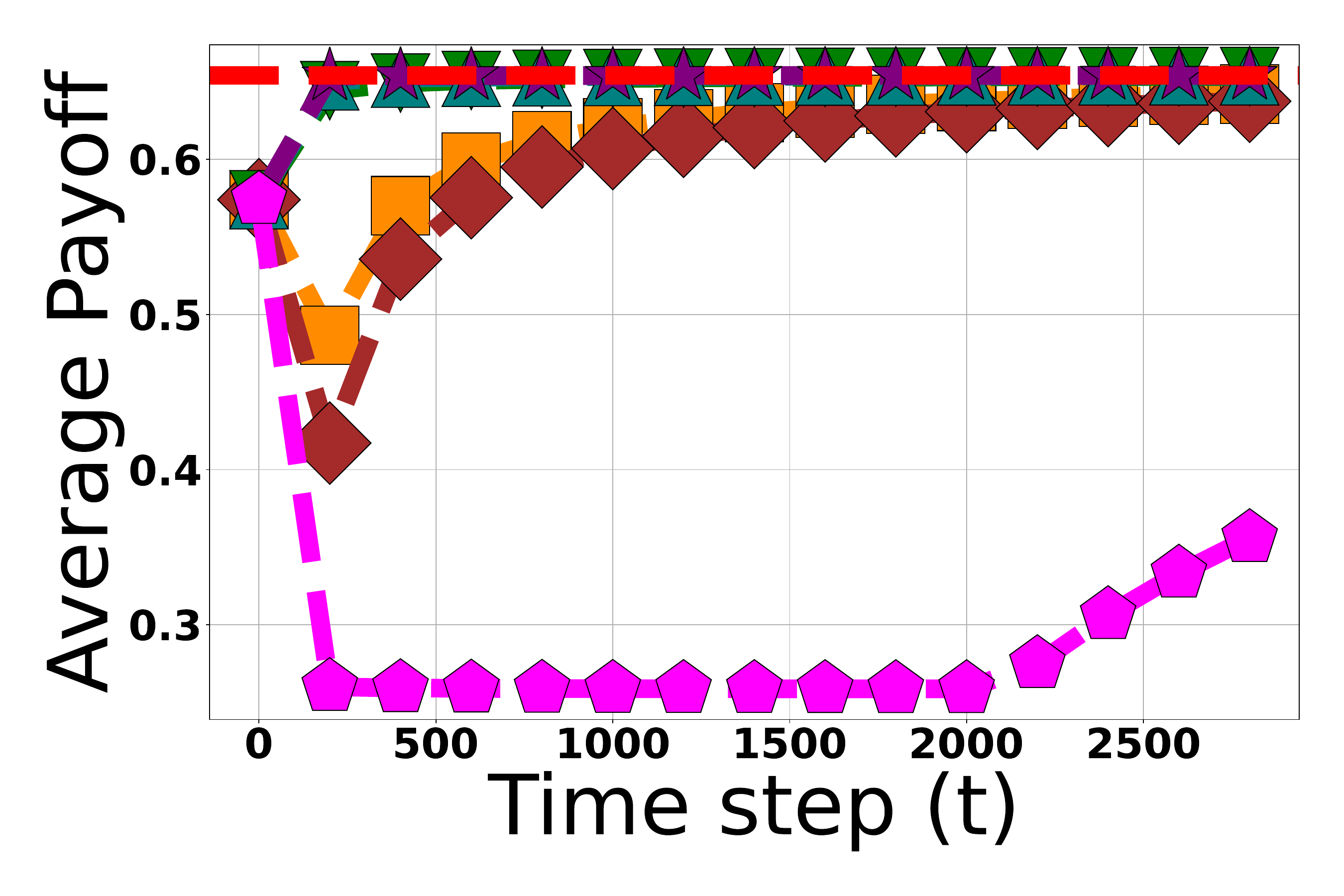}
        \caption{Time-average payoff.}\label{fig:Avg_Payoff_SBRD-DAQ_alpha1_gamma0.0_n_10}
    \label{fig:Avg_Payoff_alpha1_gamma0}
    \end{subfigure}
        \begin{subfigure}{0.48\columnwidth}
        \centering
\includegraphics[width=\linewidth]{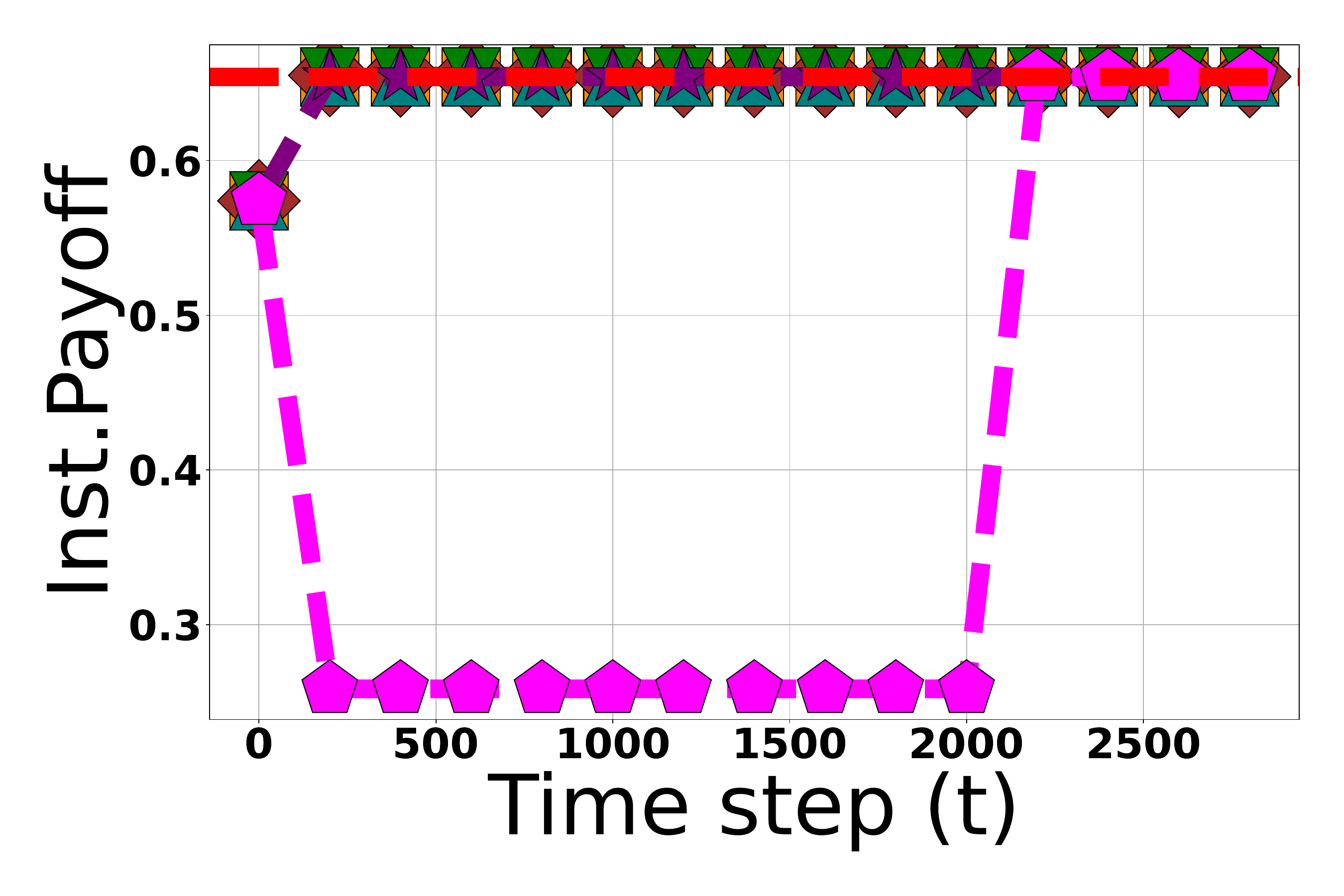}
        \caption{Instantaneous payoff.}\label{fig:Payoff_SBRD-DAQ_alpha1_gamma0.0_n_10}
    \label{fig:Payoff_alpha1_gamma0}
    \end{subfigure}

    \caption{Agent's payoff, $\gamma=0$.}
    \label{fig:non-hybrid_avg_payoff}

\end{figure}

\begin{figure*}[t]
\centering
\captionsetup{justification=centering}

% ---------------- Row 1: BR vs DAQ_F ----------------
\begin{subfigure}[t]{0.32\linewidth}
  \centering
  \includegraphics[width=\linewidth]{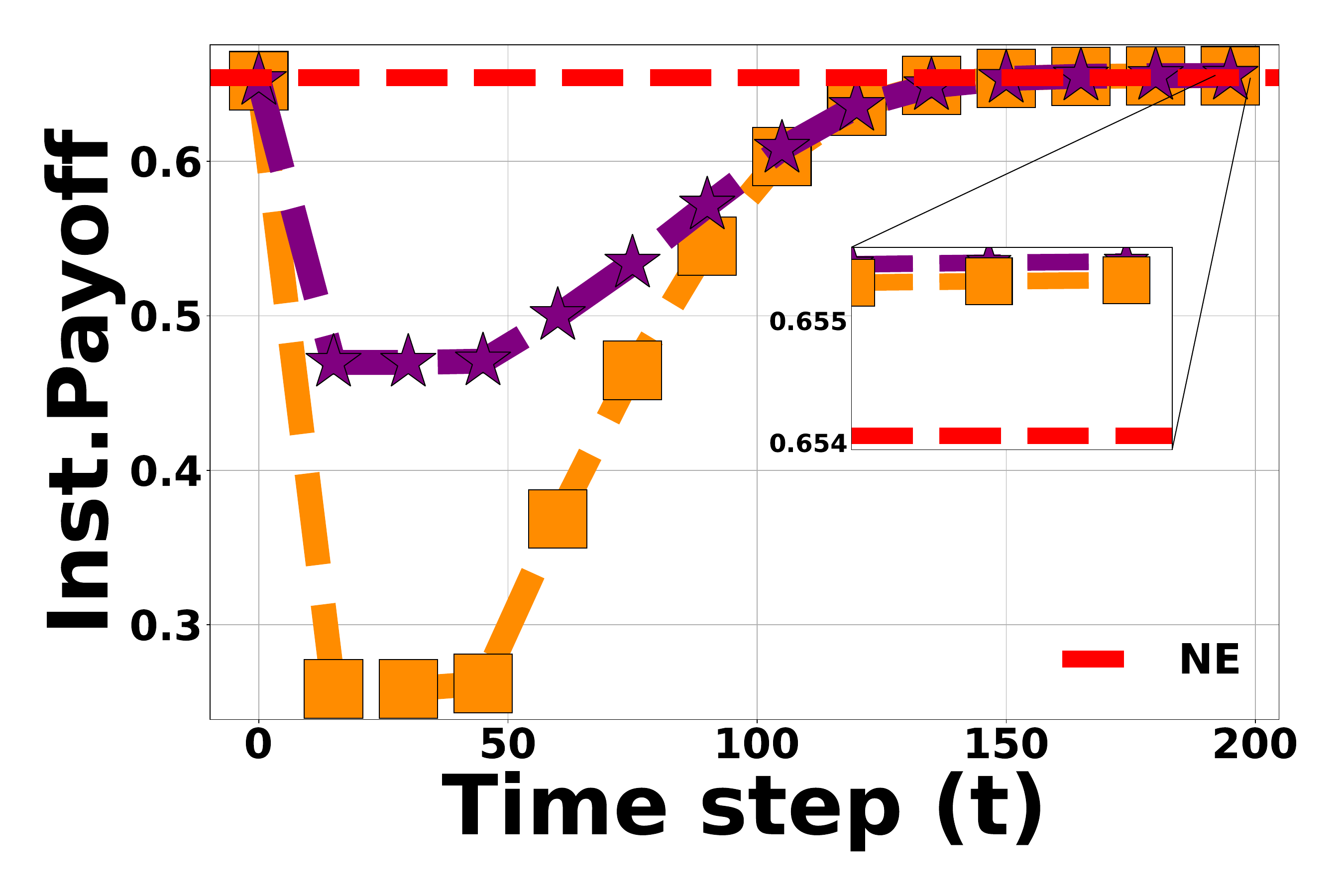}\vspace{-1mm}
  \includegraphics[width=\linewidth]{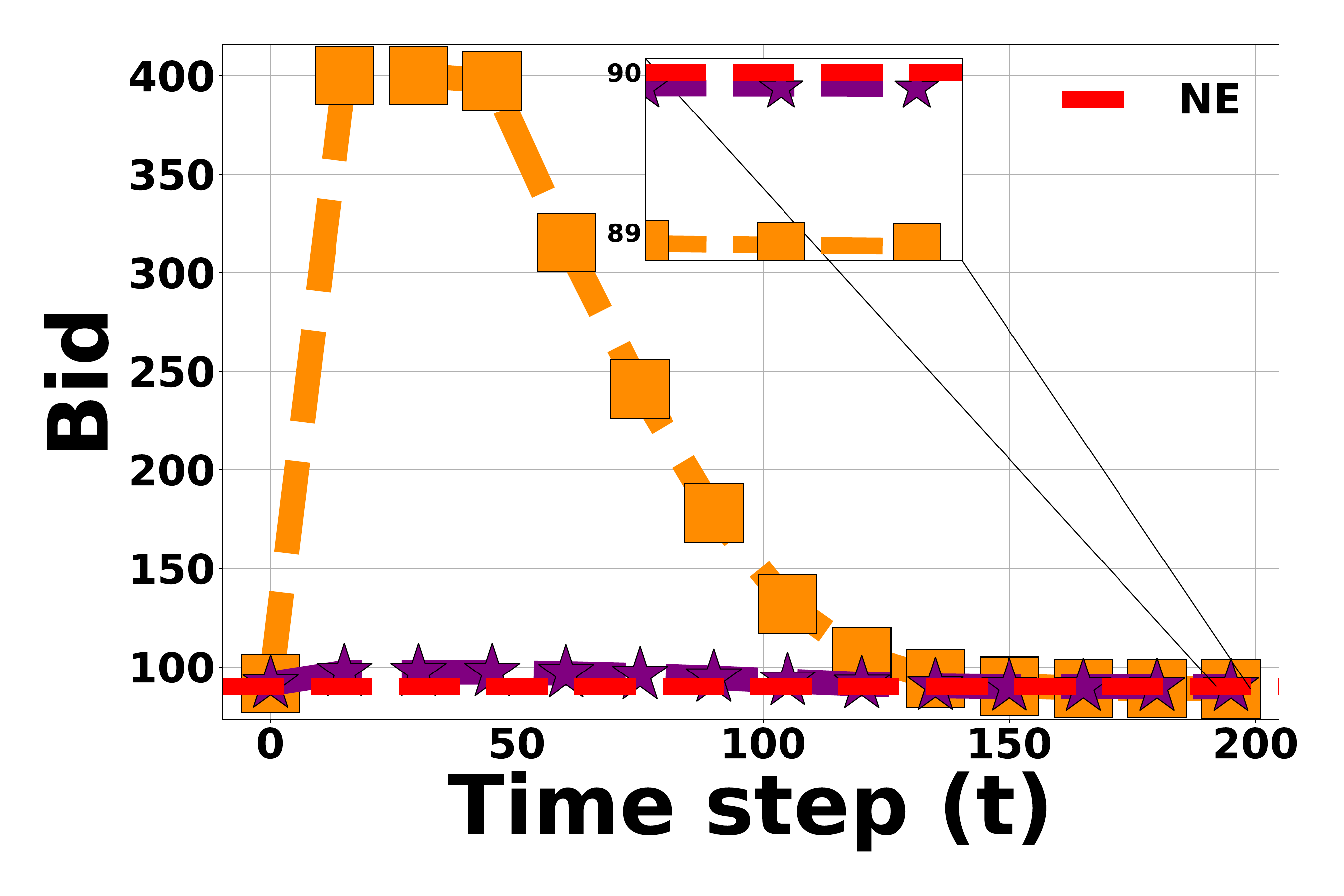}
  \caption{$\BR$ vs. $\DAQ_{\mathrm{F}}$ ($\alpha_{\BR}=10\%$)}
  \label{fig:br_daqF_10_stack}
\end{subfigure}\hfill
\begin{subfigure}[t]{0.32\linewidth}
  \centering
  \includegraphics[width=\linewidth]{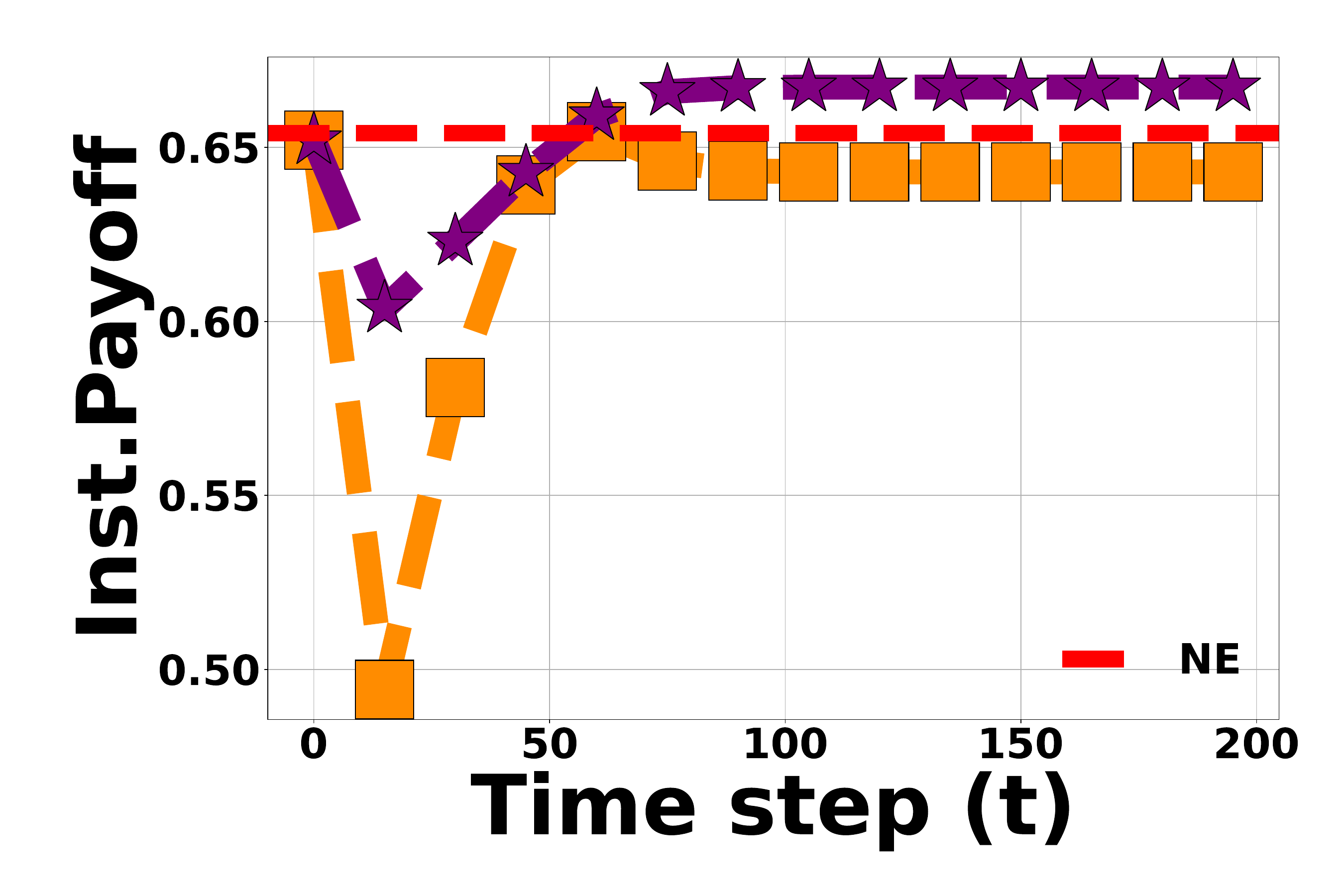}\vspace{-1mm}
  \includegraphics[width=\linewidth]{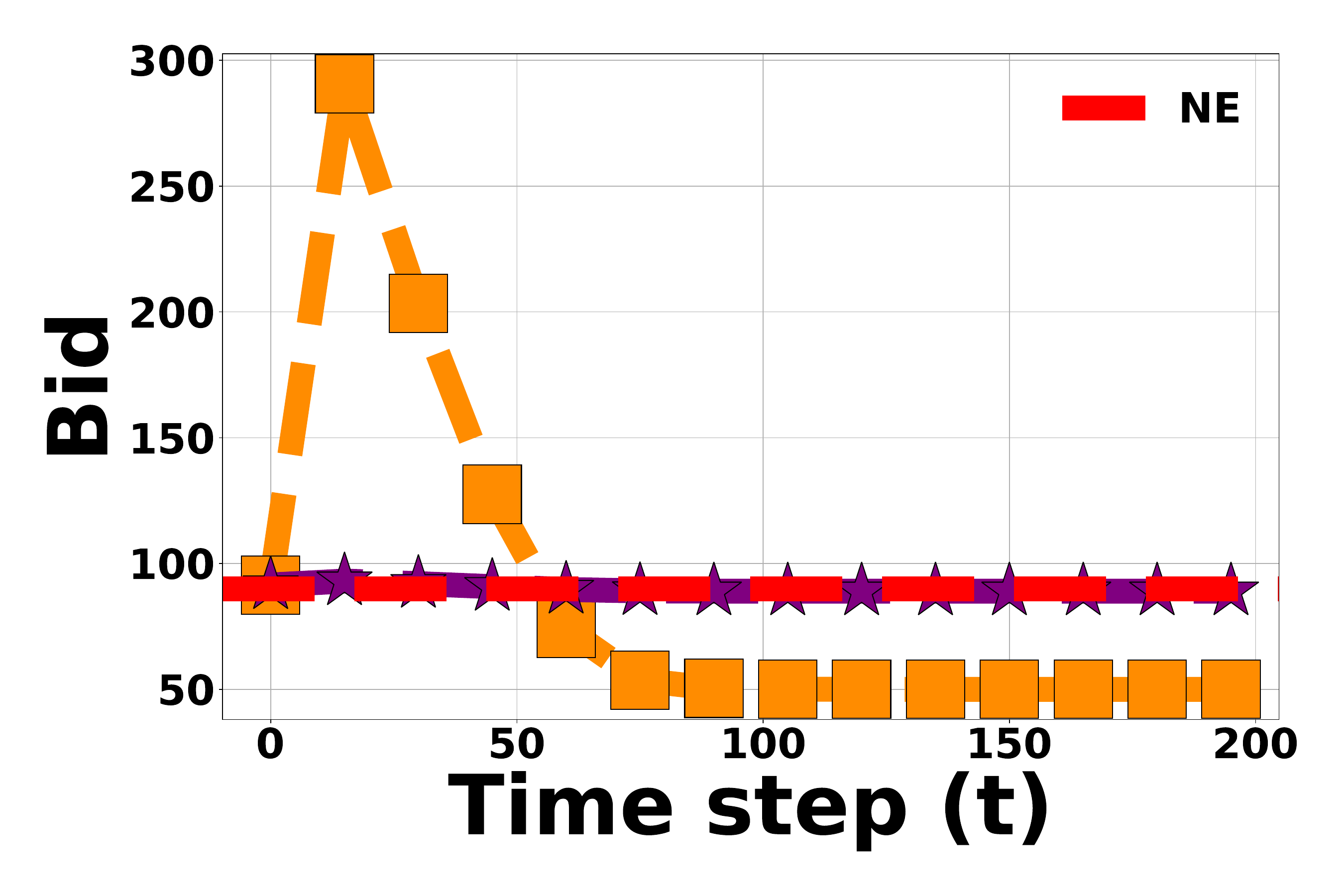}
  \caption{$\BR$ vs. $\DAQ_{\mathrm{F}}$ ($\alpha_{\BR}=80\%$)}
  \label{fig:br_daqF_80_stack}
\end{subfigure}\hfill
\begin{subfigure}[t]{0.32\linewidth}
  \centering
  \includegraphics[width=\linewidth]{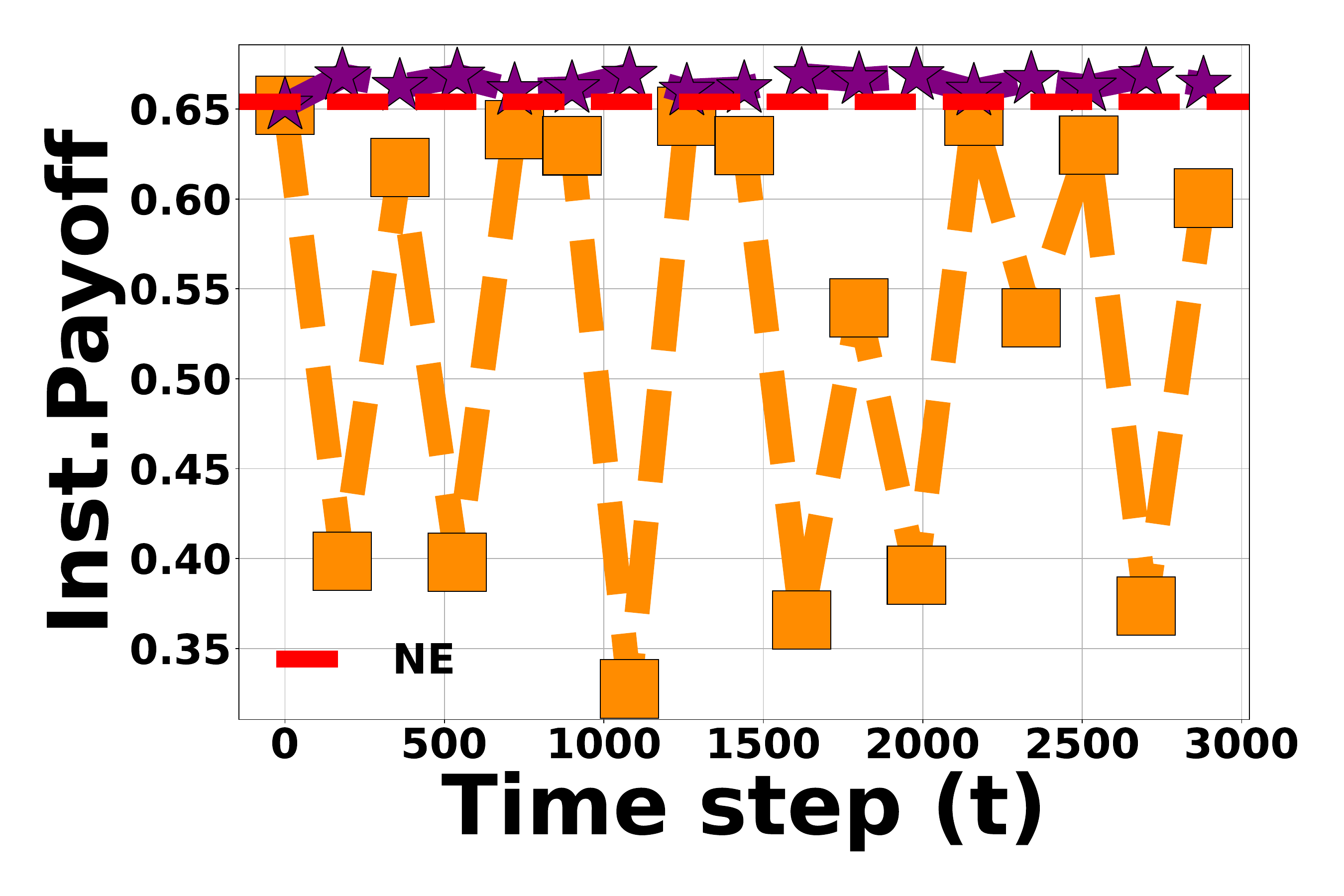}\vspace{-1mm}
  \includegraphics[width=\linewidth]{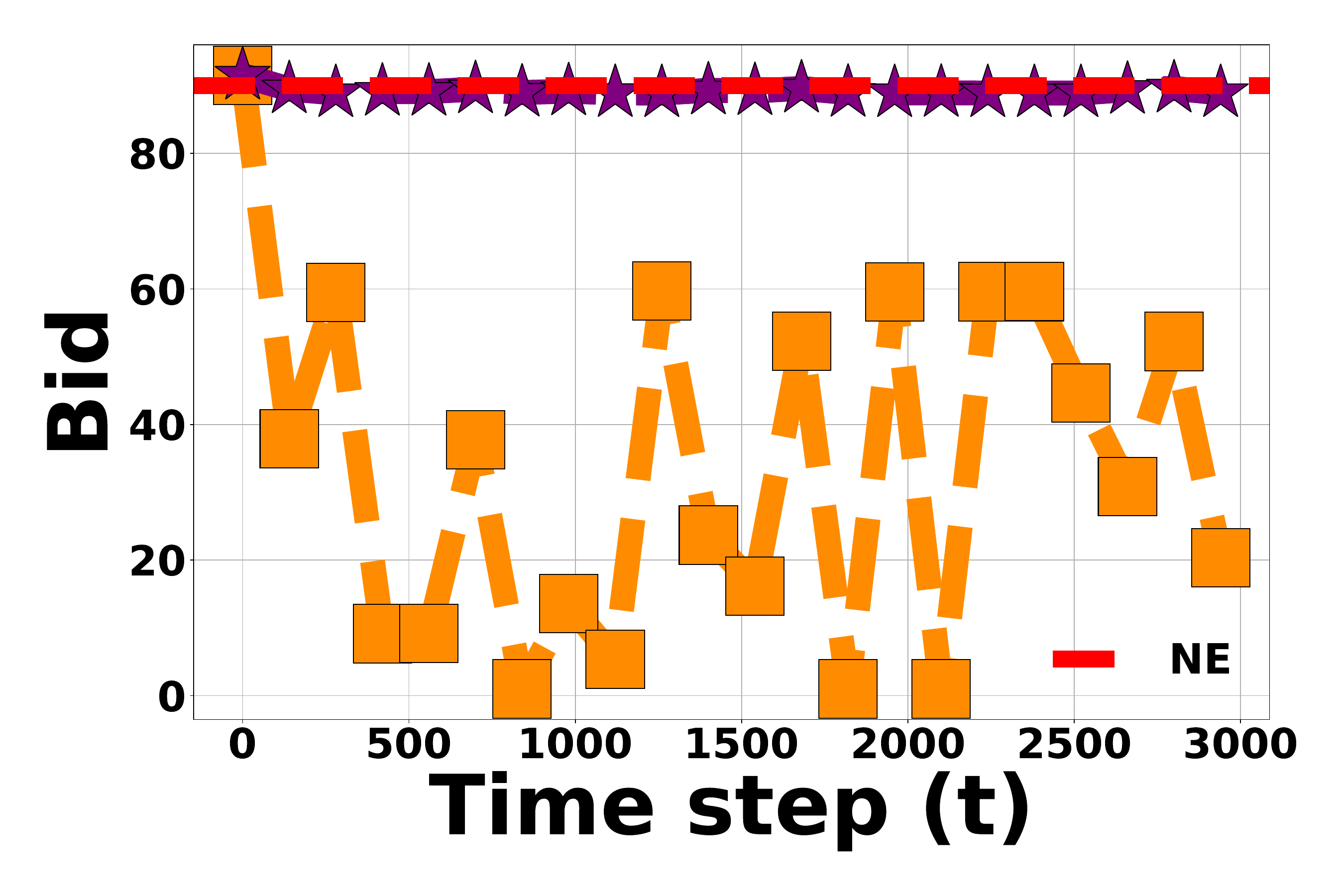}
  \caption{$\BR$ vs. $\DAQ_{\mathrm{F}}$ ($\alpha_{\BR}=90\%$)}
  \label{fig:br_daqF_90_stack}
\end{subfigure}

\vspace{2mm}

% ---------------- Row 2: BR vs OGD_F ----------------
\begin{subfigure}[t]{0.32\linewidth}
  \centering
  \includegraphics[width=\linewidth]{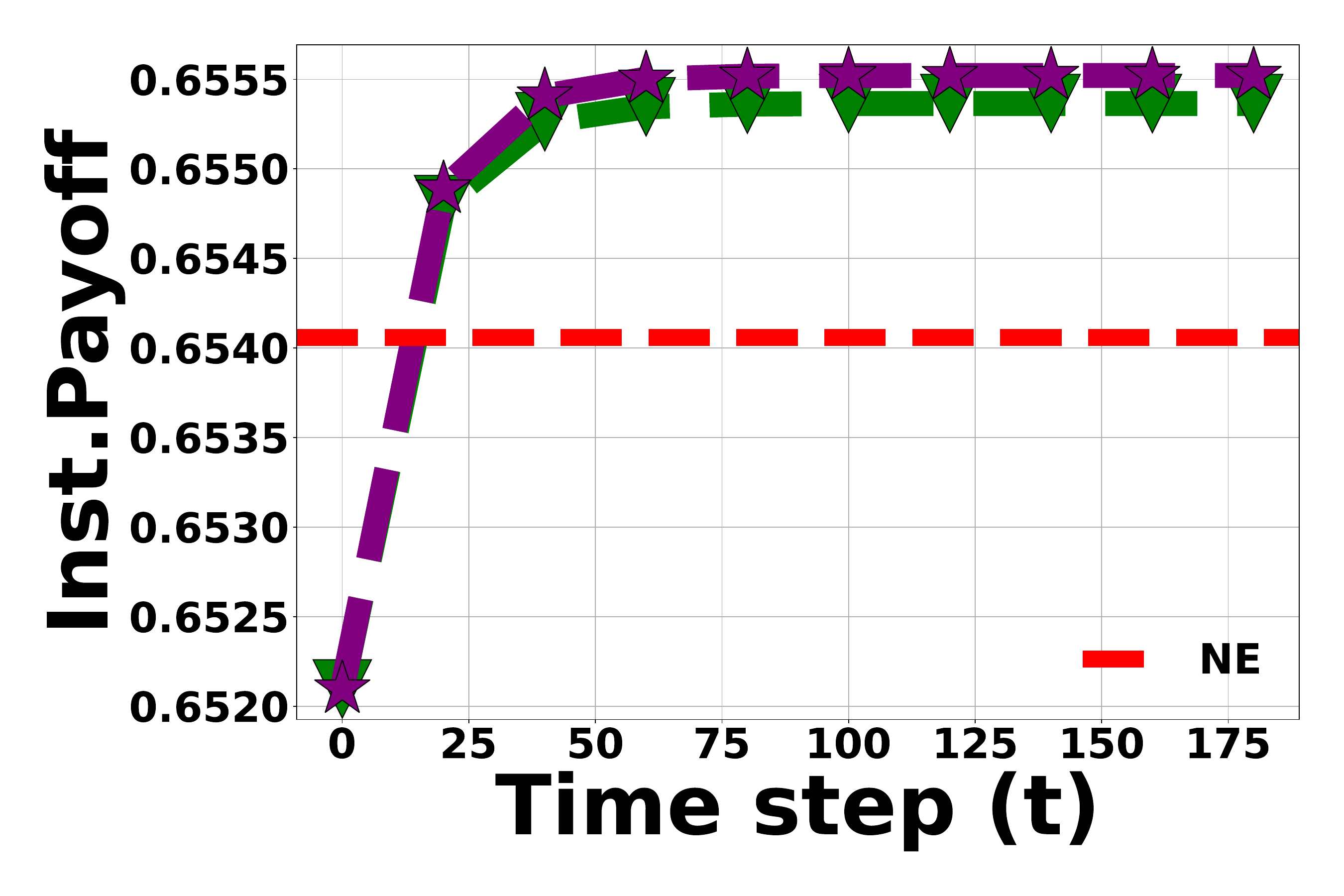}\vspace{-1mm}
  \includegraphics[width=\linewidth]{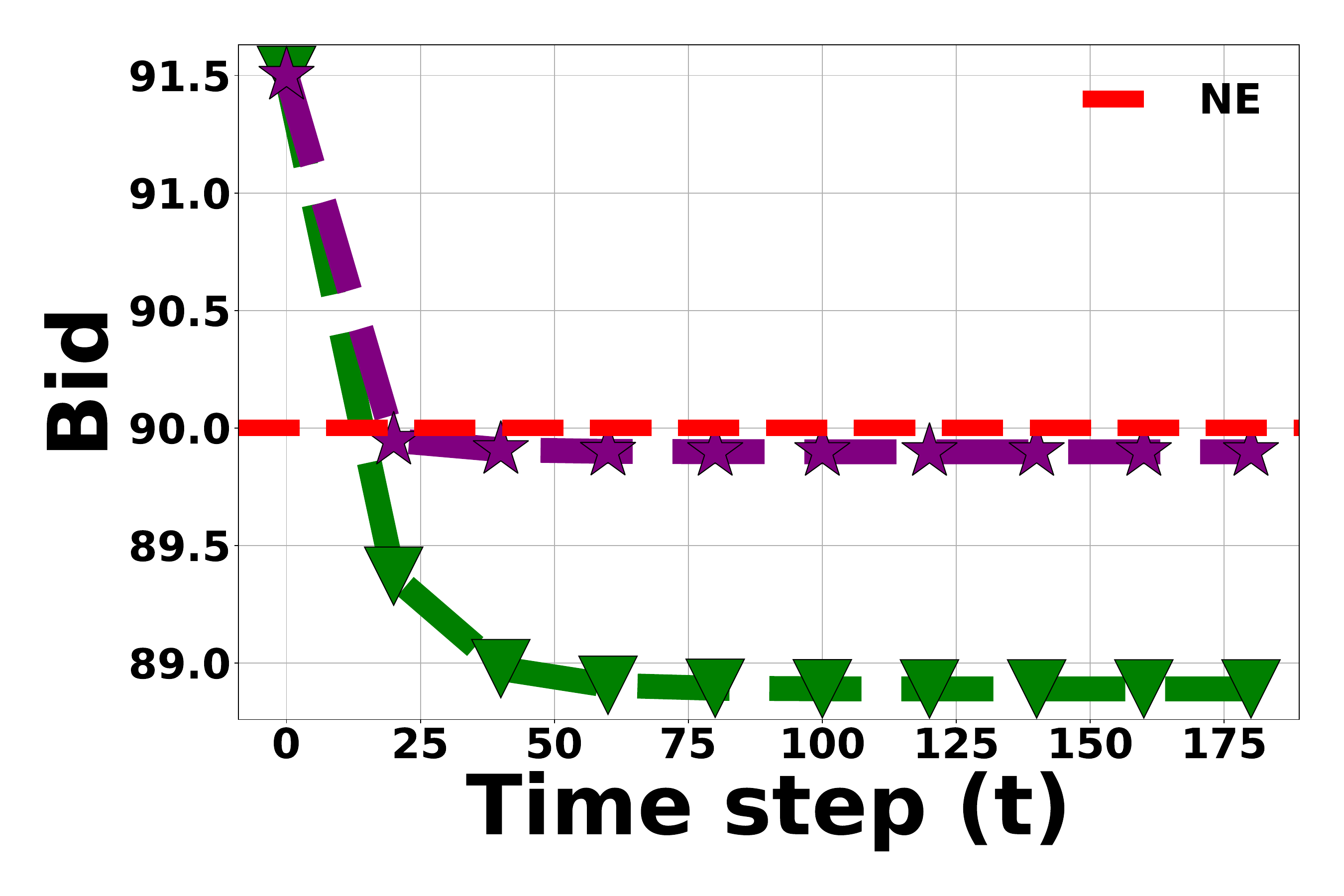}
  \caption{$\BR$ vs. $\OGD_{\mathrm{F}}$ ($\alpha_{\BR}=10\%$)}
  \label{fig:br_ogdF_10_stack}
\end{subfigure}\hfill
\begin{subfigure}[t]{0.32\linewidth}
  \centering
  \includegraphics[width=\linewidth]{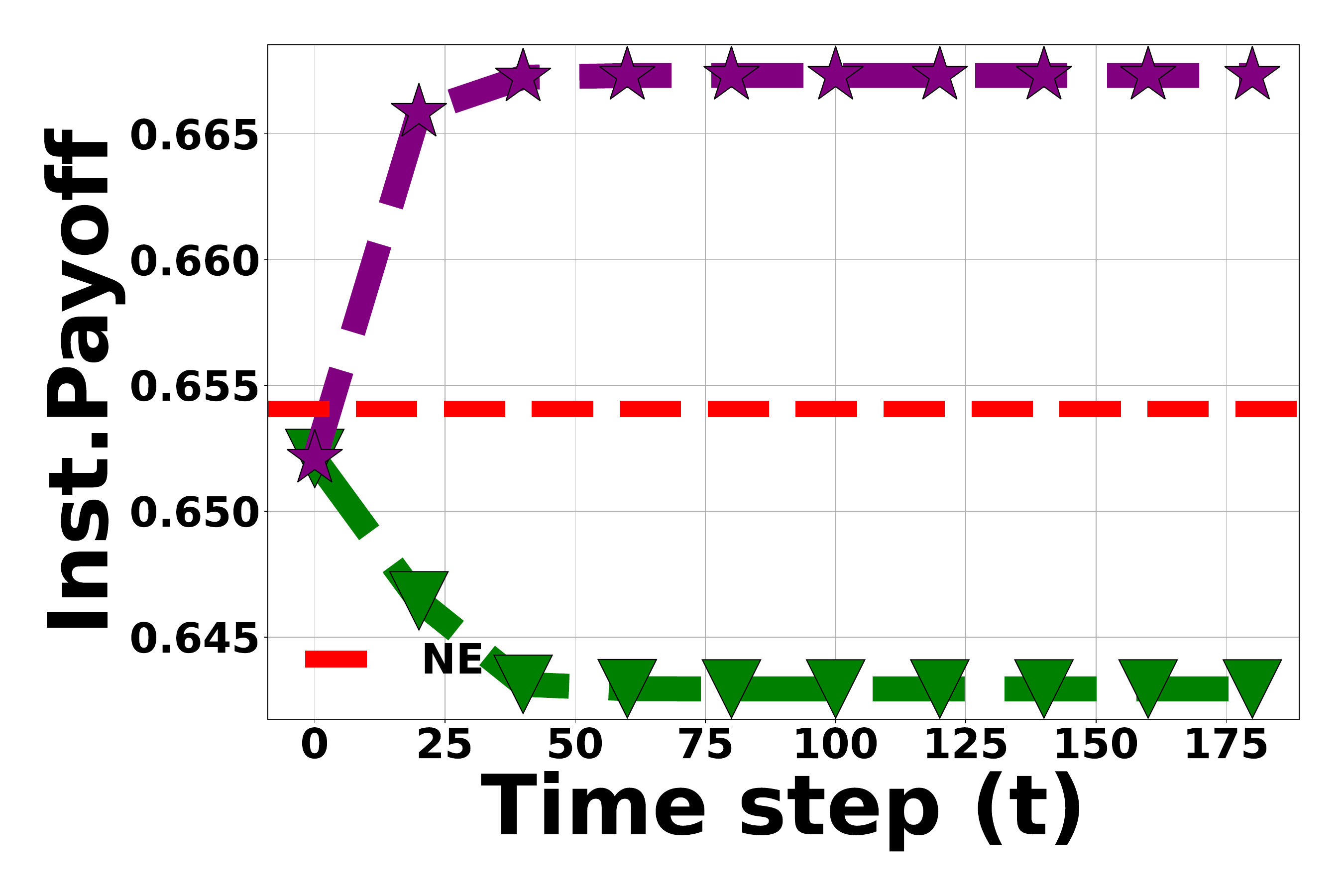}\vspace{-1mm}
  \includegraphics[width=\linewidth]{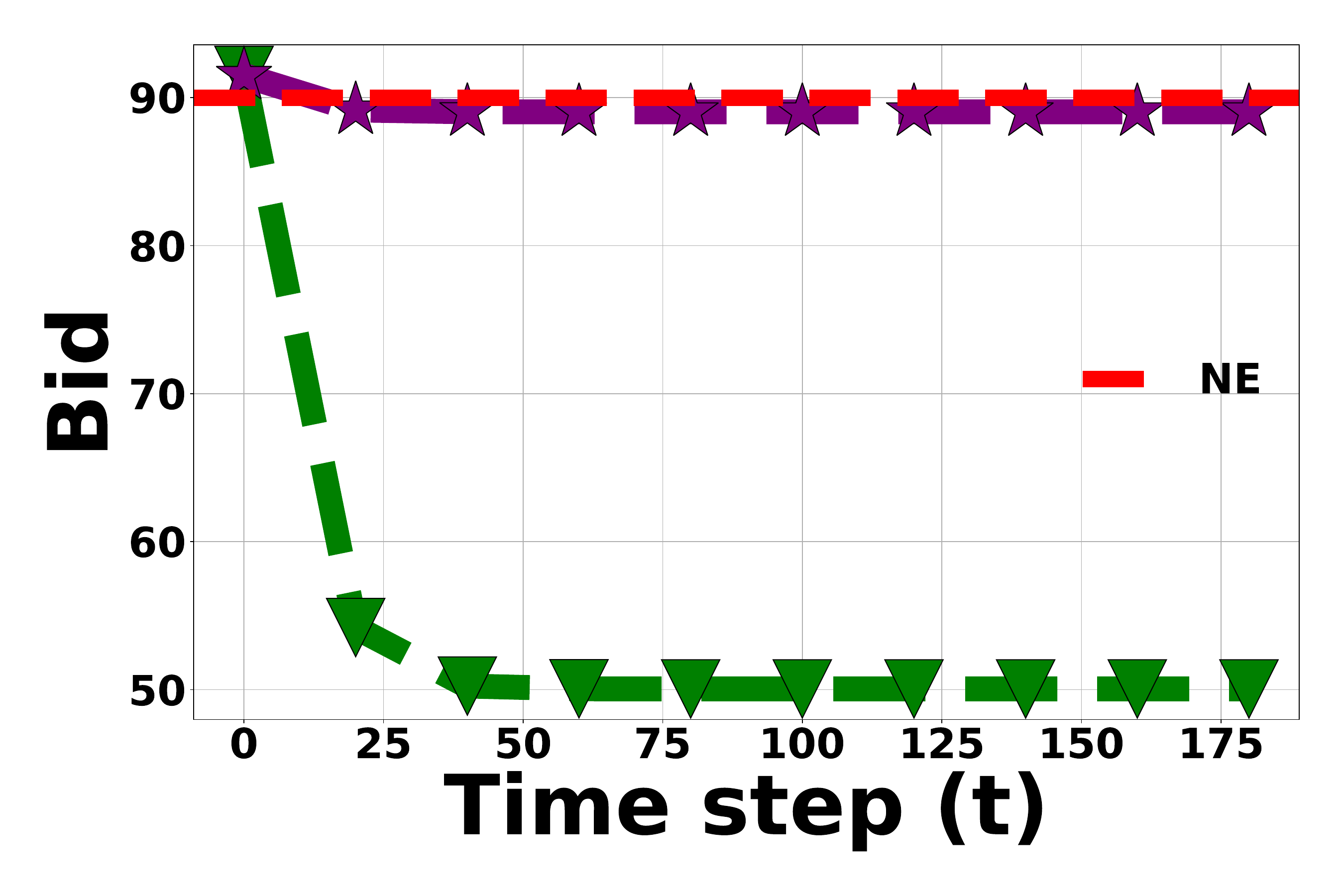}
  \caption{$\BR$ vs. $\OGD_{\mathrm{F}}$ ($\alpha_{\BR}=80\%$)}
  \label{fig:br_ogdF_80_stack}
\end{subfigure}\hfill
\begin{subfigure}[t]{0.32\linewidth}
  \centering
  \includegraphics[width=\linewidth]{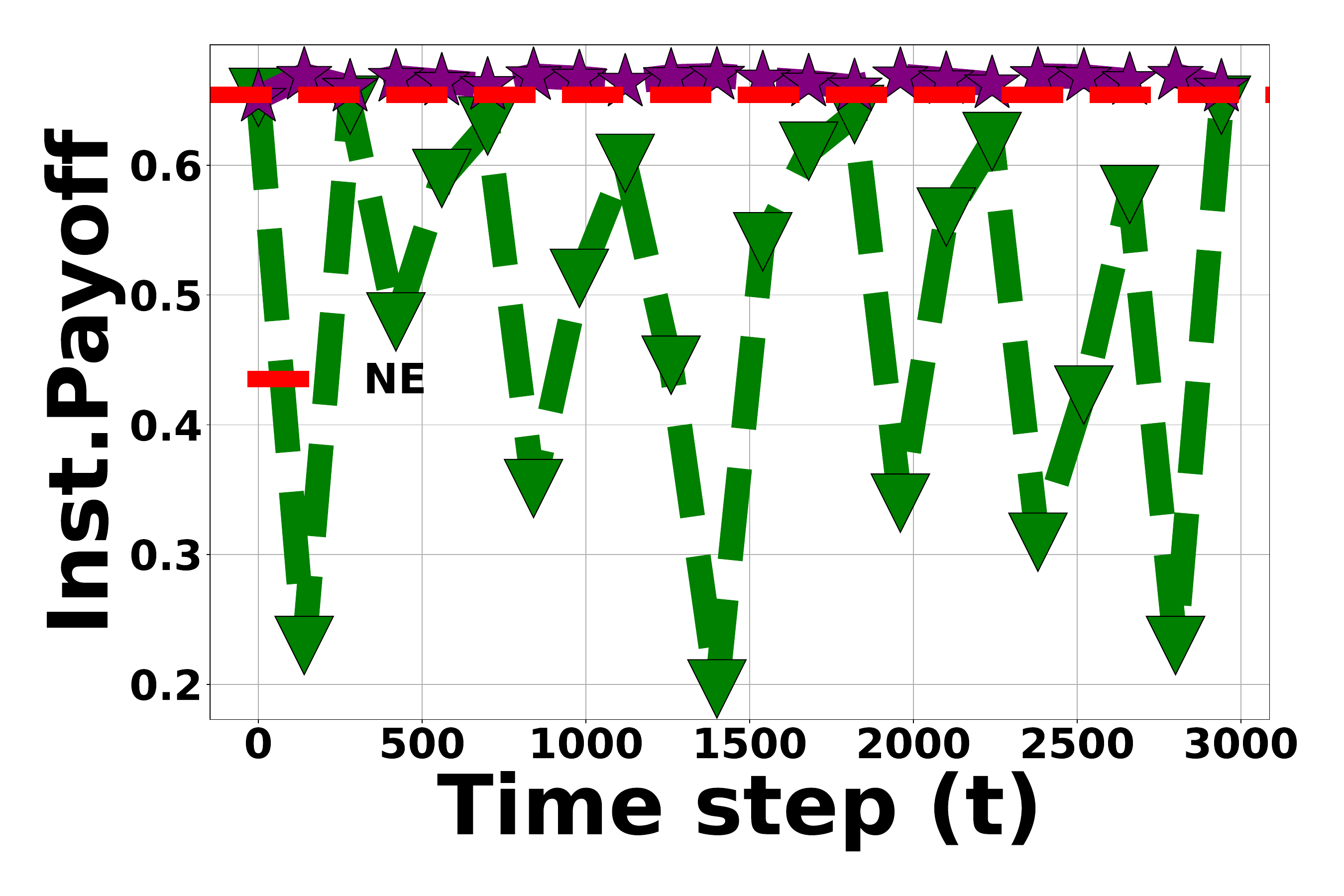}\vspace{-1mm}
  \includegraphics[width=\linewidth]{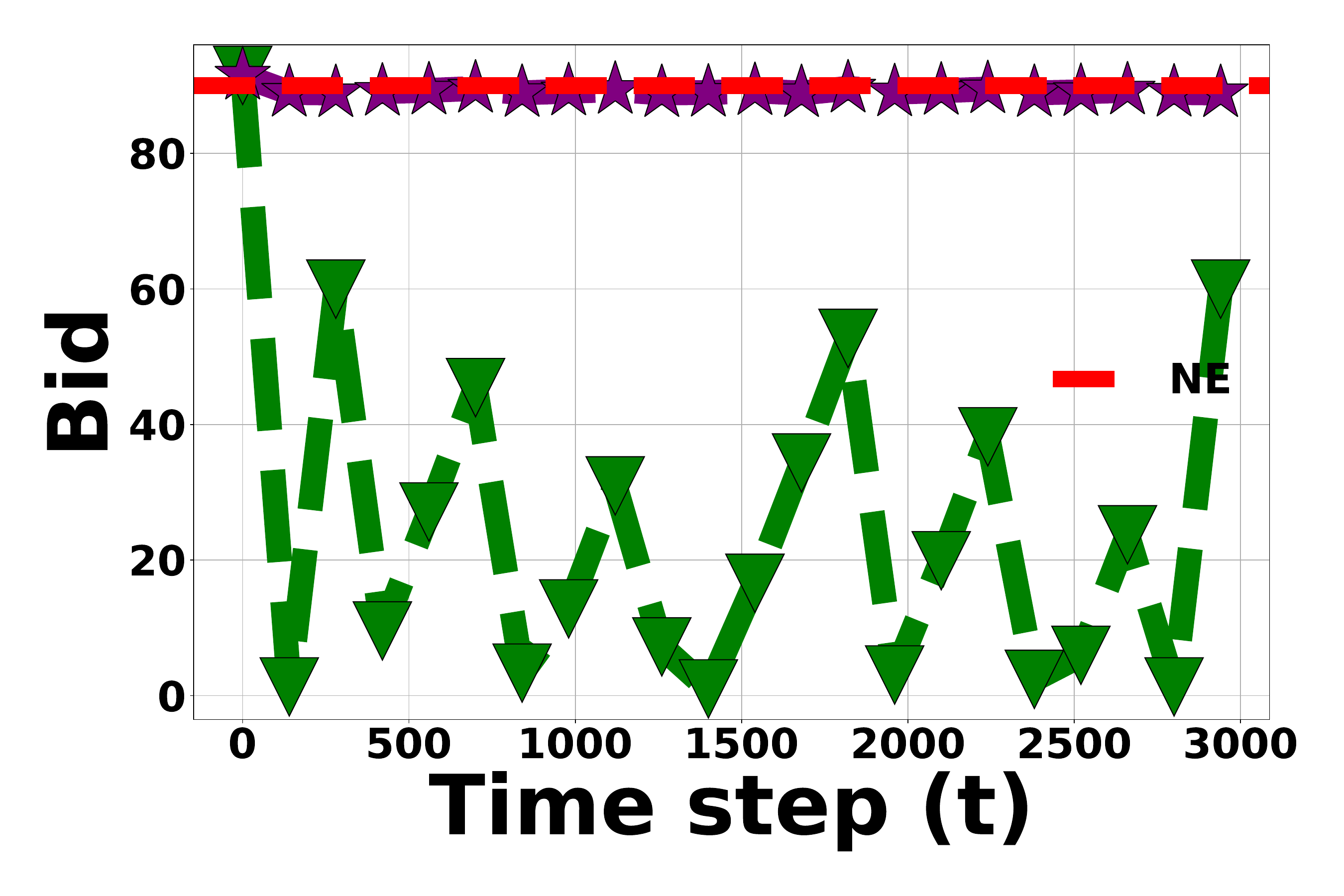}
  \caption{$\BR$ vs. $\OGD_{\mathrm{F}}$ ($\alpha_{\BR}=90\%$)}
  \label{fig:br_ogdF_90_stack}
\end{subfigure}

\caption{Heterogeneous dynamics. In each sub-figure: instantaneous payoff (top) and bids (bottom).}
\label{fig:heter_bid_payoff_time}
\end{figure*}

%%%%%%%%%%%%%%%%%%%

\begin{figure}[t]
    \centering
    %--- Global legend (column width) ---
    % \includegraphics[width=\linewidth]{img/Hybrid/Payoff/DAQ_F_legend.pdf}

    %--- One-row subfigures (fit in one column) ---
    \begin{subfigure}[b]{0.31\linewidth}
        \centering
        \includegraphics[width=\linewidth]{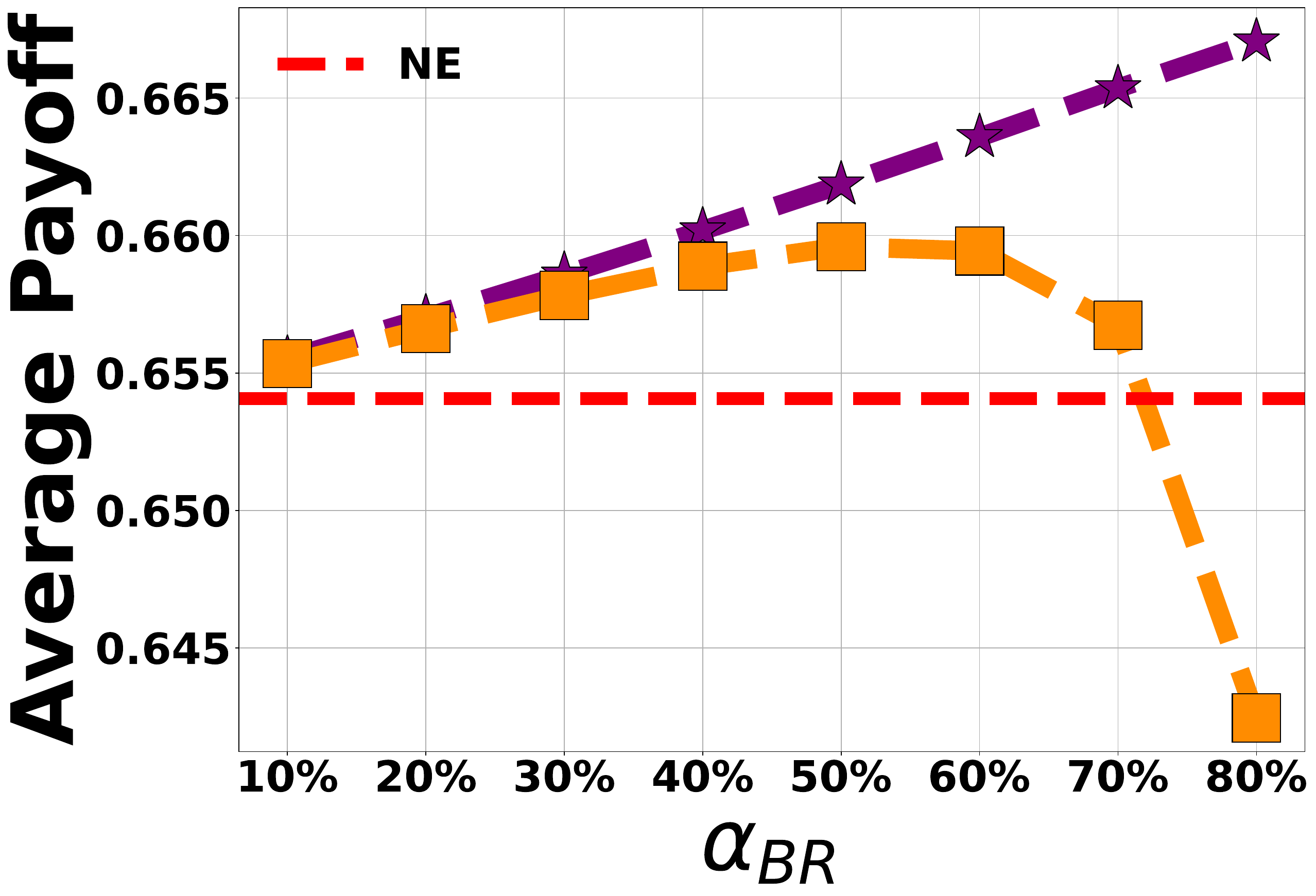}
        \caption{$\DAQ_{\text{F}}$ vs. \BR}
        \label{fig:SBRD_DAQ_F_Payoff}
    \end{subfigure}\hfill
    \begin{subfigure}[b]{0.31\linewidth}
        \centering
        \includegraphics[width=\linewidth]{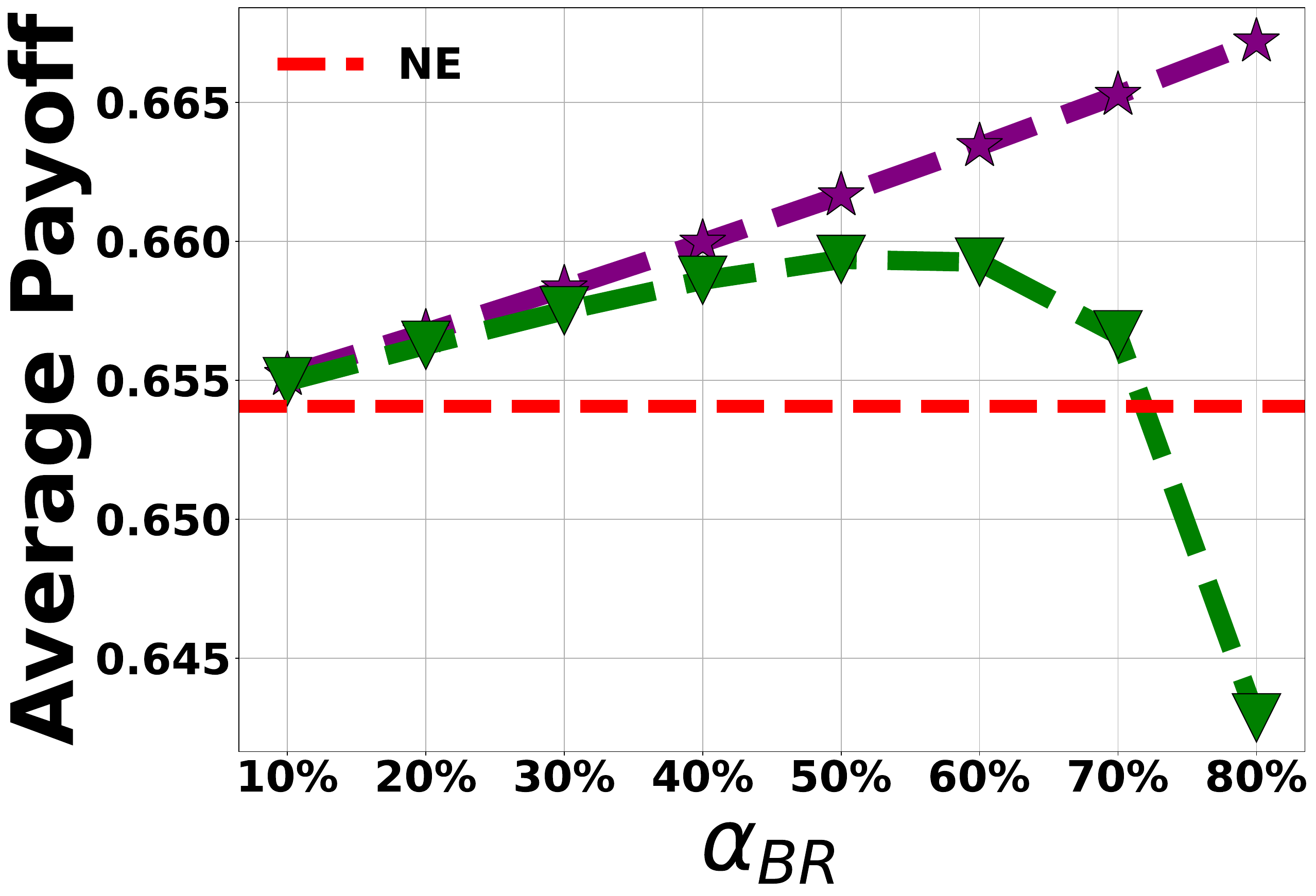}
        \caption{$\OGD_{\text{F}}$ vs. \BR}
        \label{fig:SBRD_OGD_V_Payoff}
    \end{subfigure}\hfill
    %--- Ligne de sous-figures (4 colonnes) ---
    \begin{subfigure}[b]{0.31\linewidth}
        \centering
        \includegraphics[width=\linewidth]{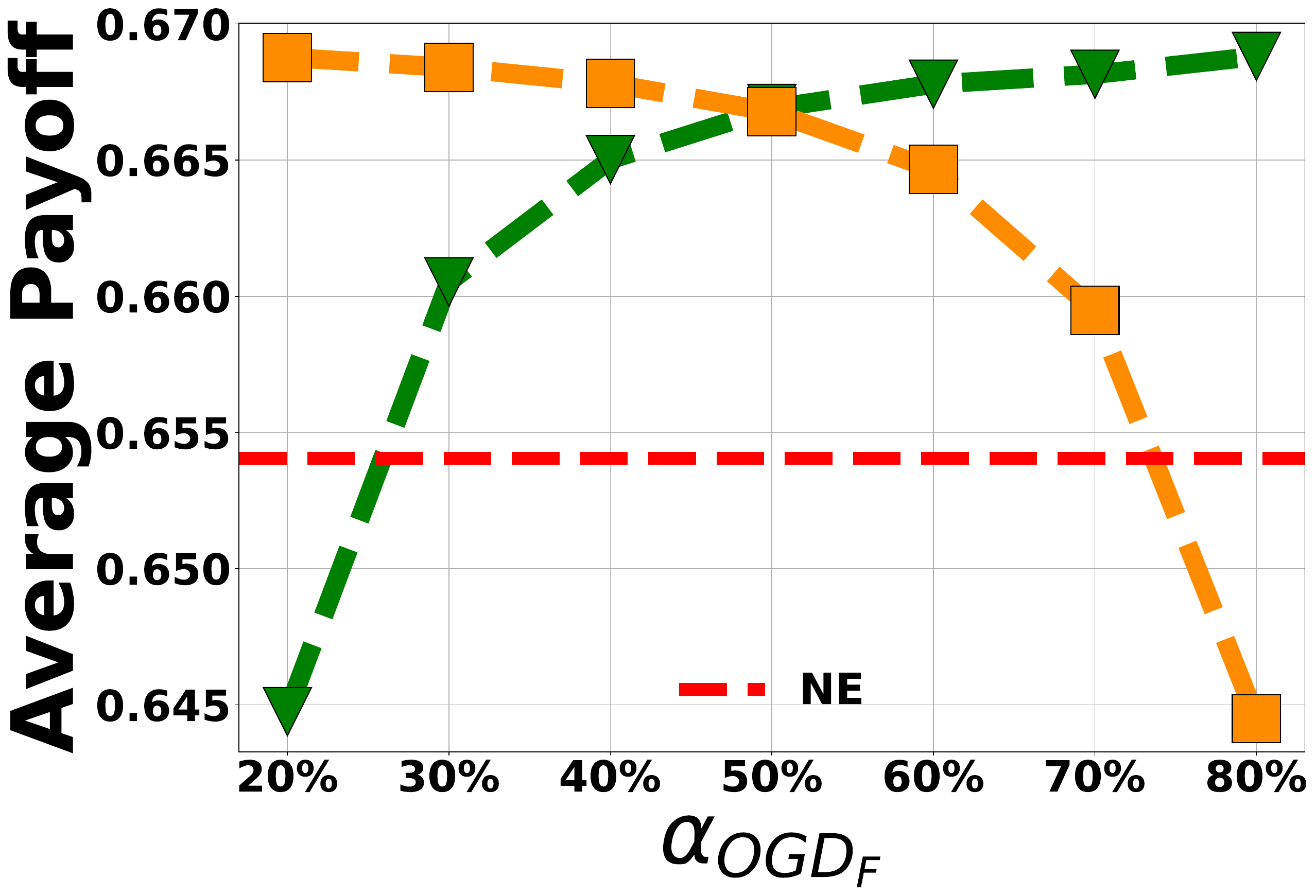}
        \caption{$\DAQ_{\text{F}}$ vs. $\OGD_{\text{F}}$}
        \label{fig:DAQ_F_OGD_V_Payoff}
    \end{subfigure}

    \caption{Heterogeneous dynamics: Average payoff}
    \label{fig:hetero_dyn_avg_payoff_alpha}
\end{figure}

We consider a set of agents~$\cI$. Each agent $i$'s valuation function is $V_i(x)=a_i\ln x$, where $a_i>0$ is an agent-specific parameter, and payment function $p_i(z)=z$. Utilities heterogeneity is determined by $\gamma$ by setting $a_i=\max(a-i\gamma,1)$, with $\gamma \in \{0,5,10\}$, $a=100$, and a budget constraint $c_i=c=400$ and $\epsilon_i = \epsilon=1$.  We set $\delta=0.1$. The number of rounds in the repeated Kelly game is $T=3000$, and results are averaged over $10$ independent runs with random %feasible 
initial bids. Agents follow one of the bidding algorithms described in Section~\ref{ss:Bidding_algorithms}. Namely, $\OGD$, $\DAQ$, $\RRM$, and $\BR$. For $\OGD$, $\DAQ$, and $\RRM$, we consider both fixed and time-varying learning rates: the fixed learning rate is $\eta^{(i)}=\frac{D_i}{G_i\sqrt{T}}$ and the time-varying one is $\eta_t^{(i)}=\frac{D_i}{G_i\sqrt{t}}$. We distinguish these variants by either adding the subscripts $F$ for fixed or $V$ for time-varying. 

Our simulations include both \textit{homogeneous dynamics} settings, where all agents follow the same update rule, and \textit{heterogeneous dynamics} settings, where a fraction $\alpha_{\mathcal{A}_1}$ of agents use algorithm $\mathcal{A}_1$ while the remaining agents use a different algorithm $\mathcal{A}_2$.

The objective is twofold: (1) to verify whether repeated play converges to the Nash equilibrium (NE) of the stage game, while measuring the convergence speed of the bidding algorithms, and (2) to compare their time-average payoff.

To measure convergence to the NE, we use the metric $r_t$, defined as, 
\begin{align}\label{eq:metric_residual}
    r_t\triangleq \|\textbf{BR}(\bm z(t))-\bm z(t)\|_{2} .
\end{align}
Indeed, the unique NE of the stage game is the fixed point of the best-response operator, i.e.,  $\textbf{BR}(\bm{z^{*}})=\bm{z}^{*}$. Accordingly, when the bid profile $\bm{z}(t)$ is near the NE, one expects $\textbf{BR}(\bm{z}(t))\approx\bm{z}(t)$.

To make payoffs comparable across agents, we normalize them to lie in $[0,1]$. Specifically, we define $\overline{\varphi_i}(\bm z)=
\frac{\varphi_i(\bm z)-\varphi_{\min}}{\varphi_{\max}-\varphi_{\min}}$ where $\varphi_{\min}$ and $\varphi_{\max}$ are the smallest and largest values of the payoff functions among all players and across all actions. We then compare the bidding algorithms using the time-average normalized payoff, i.e., $\frac{1}{T}\sum_{t=1}^T \overline{\varphi_i}(\bm z(t))$, which matches each player's objective of maximizing long-run payoff.

The rest of this section is organized as follows; Section~\ref{ss:homogeneous_dyn} addresses these questions under \textit{homogeneous dynamics}, whereas Section~\ref{ss:heterogeneous_dyn} addresses them under \textit{heterogeneous dynamics}.

% The rest of this section is organized as follows. Section~\ref{ss:homogeneous_dyn} evaluates whether the repeated play converges to the Nash equilibrium of the stage game and compares the convergence speed across the bidding algorithms used by the agents. Section~\ref{ss:heterogeneous_dyn} compares 

% \YBnote{\begin{itemize}
%     \item To save some space, we can move the experiments with the $\epsilon$ metric and social welfare to the appendix. The results with $r_t$ are easier to visualize so we can keep them. We can also have this figure for $r_t$ taking a full row of the page.
%     \end{itemize}}

\subsection{Homogeneous dynamics} 
\label{ss:homogeneous_dyn}

%Figure~\ref{fig:Homogeneous_speed_subfigs} reports the evolution of the metric $r_t$ under homogeneous dynamics, for several values of $\gamma$. Repeated play with $\OGD$, $\DAQ_{\mathrm{F}}$, or $\BR$ converges to the Nash equilibrium, which aligns with our theoretical results. In contrast, $\DAQ_{\mathrm{V}}$ fails to converge with a similar precision. In terms of convergence speed, $\BR$ is the fastest, followed by $\OGD$ and $\DAQ_{\mathrm{F}}$, while $\OGD_V$ appears to be more sensitive to player heterogeneity ($\gamma$) than $\OGD_F$.  

%Table~\ref{tab:convergence_residual} reports, for $n=|\cI|\in\{2,10,20\}$, the minimum number of iterations needed to reach the threshold $r_t\leq 10^{-5}$. The results indicate that the convergence of $\BR$ becomes faster as the number of players increases, whereas the other bidding algorithms exhibit the opposite trend. This behavior is consistent with the $\cO(n)$ convergence bound in Theorem~\ref{cor:convergence_DAQ}. The method $\RRM$ converges significantly more slowly than the others. Finally, varying the heterogeneity level through $\gamma$ has only a limited overall effect on convergence rates, though it slightly accelerates $\OGD_F$, $\DAQ_{\mathrm{F}}$, and $\RRM$.

Table~\ref{tab:convergence_residual} reports, for $n=|\cI|\in\{2,10,20\}$, the minimum number of iterations required to reach the threshold $r_t \le 10^{-5}$. 
The results show that repeated play with $\OGD$, $\DAQ_{\mathrm{F}}$, $\RRM$, and $\BR$ converges to the Nash equilibrium, in agreement with our theoretical results. In contrast, $\DAQ_{\mathrm{V}}$ may fail to converge with comparable precision. 
In terms of convergence speed, $\BR$ is the fastest, followed by $\OGD$ and $\DAQ_{\mathrm{F}}$, while $\RRM$ converges significantly more slowly. 
Moreover, the convergence of $\BR$ becomes faster as the number of players increases, whereas the other bidding algorithms exhibit the opposite trend. 
This behavior is consistent with the $\cO(n)$ convergence bound established in Theorem~\ref{cor:convergence_DAQ}. 
Finally, varying the heterogeneity level through $\gamma$ has only a limited overall effect on convergence rates, although it slightly accelerates $\OGD_F$, $\DAQ_{\mathrm{F}}$, and $\RRM$.

In the following experiments, we focus on a population of $n=10$ agents. 
Figure~\ref{fig:Homogeneous_speed_subfigs} illustrates the evolution of the fixed-point residual $r_t$ under homogeneous dynamics ($\OGD$, $\DAQ$, and $\BR$) for the same values of $\gamma$.

Figure~\ref{fig:non-hybrid_avg_payoff} reports, for a representative player, both the instantaneous payoff and the time-average payoff. Since $\gamma=0$, all agents share the same payoff function, so the plotted curves are representative of any player. %The figure shows that the ranking in payoff performance mirrors the ranking in convergence speed: best-response dynamics achieves the highest payoffs, followed by $\OGD$, then $\DAQ$, and finally $\RRM$.

The figure shows that the ranking in payoff performance mirrors the ranking in convergence speed: best-response dynamics achieves faster the limit payoffs, followed by $\OGD$, then $\DAQ$, and finally $\RRM$.

%An additional observation is that, although $\DAQ_{\text{V}}$ does not attain the high-precision threshold in the residual metric $r_t$, it still achieves time-average payoffs comparable to its fixed-learning rate counterpart $\DAQ_{\text{F}}$. Figures~\ref{fig:Payoff_SBRD-DAQ_alpha1_gamma0.0_n_10} and \ref{fig:Homogeneous_speed_subfigs} suggest that, for both $\DAQ_{\text{V}}$ and $\RRM$, reaching a moderately small $r_t$ (e.g., $r_t<1$) is already sufficient for the instantaneous payoff to be close to the Nash-equilibrium payoff; further reductions in $r_t$ bring only marginal payoff improvements. This explains why $\DAQ_{\text{V}}$ performs well in terms of payoff despite not converging to very high precision. Finally, the poor payoff of $\RRM$ indicates that it is not an attractive update rule from a selfish perspective.

It is interesting to observe that that, even though $\DAQ_{\text{V}}$ does not yet converge within the tested horizon (see Table~\ref{tab:convergence_residual}), it still achieves time-average payoffs comparable to its fixed-learning rate counterpart $\DAQ_{\text{F}}$. By comparing Figure~\ref{fig:Homogeneous_speed_subfigs} and Figure~\ref{fig:Payoff_SBRD-DAQ_alpha1_gamma0.0_n_10} we note that, for both $\DAQ_{\text{V}}$ and $\RRM$, reaching a moderately small $r_t$ (e.g., $r_t<1$) is sufficient to render the instantaneous payoff close to one attained at the Nash-equilibrium. Actually, further reductions in $r_t$ bring only marginal payoff improvements. This explains why $\DAQ_{\text{V}}$ performs well in terms of payoff despite not converging to very high precision. Finally, the poor payoff of $\RRM$ indicates that it is not an attractive update rule from a selfish perspective.

% \YBnote{\begin{itemize}
%     \item As you noticed, here I only included one figure for $r_t$ and one table for the same metric. I think we should moved them here. The figure should also take the whole row (no minipage and you add *). 

%     \item We need to figure out what to show for the heterogeneous dynamics. What are the most interesting results we observe there? I assume not converging to the Nash is important. But what else?  

% \end{itemize}}

\subsection{Heterogeneous dynamics}
\label{ss:heterogeneous_dyn}
We consider now \textit{heterogeneous dynamics} where a fraction $\alpha_{\mathcal{A}_1}$ of agents uses algorithm $\mathcal{A}_1$ while the remaining agents use $\mathcal{A}_2$ with $\gamma=0$.

Figure~\ref{fig:heter_bid_payoff_time} shows the evolution over time of the instantaneous bid and payoff of two representative agents---one using $\cA_1$ and the other using $\cA_2$---for $(\cA_1, \cA_2)\in \{ (\BR,\OGD) , (\BR,\DAQ)\}$ and for $\alpha_{\BR}\in \{10\%, 80\%, 90\% \}$. Similar results for the couple~$(\OGD, \DAQ)$ are available in the supplementary material.

Overall, heterogeneous dynamics do not appear to converge to the stage-game NE. For $\alpha_{\BR}\in \{10\%, 80\%\}$, the trajectories nevertheless appear to settle to a steady regime. In the $(\BR,\DAQ)$ regime, $\DAQ$ at first saturates its budget constraint and then exhibits a slow transient regime before stabilizing, whereas $\OGD$ adapts relatively faster. This is consistent with $\DAQ$ aggregating gradients over time and $\OGD$ responding more strongly to recent feedback.
Finally, for $\alpha_{\BR}= 90\%$, both $\OGD$ and $\DAQ$ show persistent bid oscillations, while $\BR$ bids remain comparatively stable. 
Since $\BR$ agents form the large majority, the aggregate bid changes little so best responses vary only little because of the concavity of the payoff function.

In terms of instantaneous payoff, for $\alpha_{\BR}\in \{10\%, 80\%\}$ the $(\BR,\DAQ)$ configuration displays periods of low payoff for $\BR$, but even lower for $\DAQ$, due to the $\DAQ$ agents budget saturation. In contrast, with $(\BR,\OGD)$ the system stabilizes more quickly. In both cases, the $\BR$ agent's payoff converges to a value close (and sometimes slightly above) the stage-game NE value, whereas the $\DAQ$ and $\OGD$ agents can fall below the NE payoff at $\alpha_{\BR}=80\%$. Finally, for $\alpha_{\BR}= 90\%$, the oscillatory bids of $\OGD$ and $\DAQ$ player translate into large payoff fluctuations, with a payoff repeatedly dropping to significantly lower values than their NE payoff. 

Finally, Figure~\ref{fig:hetero_dyn_avg_payoff_alpha} reports the average payoff of two representative agents---one using $\cA_1$ and the other using $\cA_2$---for $(\cA_1, \cA_2)\in \{ (\BR,\OGD) , (\BR,\DAQ), (\OGD,\DAQ) \}$ and multiple values of $\alpha_{\cA_1}$. Overall, heterogeneous play can yield time-average payoffs that are slightly above or below the stage-game NE payoff, sometimes benefiting one group more than the other. However, these deviations remain small: across all mixtures, observed payoffs lie in $[0.64, 0.67]$, while the NE payoff is roughly $0.655$. Both $\OGD$ and $\DAQ$ exhibit similar trends: when they represent less than $30\%$ of the population, their time-average payoff falls below the NE payoff, whereas for larger fractions it can exceed it. In contrast, $\BR$ achieves payoffs consistently above the NE, with peak values for large $\alpha_{\BR}$. Overall, while payoff differences across policies remain small in these experiments, the $\BR$ dynamics appear slightly preferable, despite lacking no-regret guarantees.

% They also indicate that our homogeneous convergence guarantees to the stage-game NE do not directly extend to heterogeneous dynamics. 

% \YBnote{\begin{itemize}
%     \item We need first a figure showing the evolution of the bids, and perhaps also the associated instantaneous payoff. For this figure, we will mainly conclude that convergence to the NE does not occur with the same precision as in the homogeneous dynamics. We also deduce, if the bids converge, that also the time-average payoff will converge to that value. We also deduce how different are the utilities with respect to their value at the NE. 
% \end{itemize}}

\section{Conclusion}
\label{sec:conclusion}

In this paper, we study the game induced by the competition among agents in a proportional allocation auction. We derive a sufficient condition under which the game satisfies Rosen’s \textit{diagonal strict concavity} (DSC). The condition holds in particular when agents have logarithmic utilities in their allocated share. We further relate these utilities to bandwidth allocation problems in which the objective is to balance fairness and throughput. 

We then consider the repeated version of the game, where all agents update their bids using either best-response dynamics or classical no-regret learning algorithms, namely Online Gradient Descent and Dual Averaging. Under DSC, these homogeneous dynamics are proved to converge to the stage-game Nash equilibrium.

Several extensions of this work can be considered. First, it would be desirable to develop a theoretical framework to capture the impact of heterogeneous update rules on system dynamics. Another direction is to extend the analysis to more general utility functions, such as $\alpha$-fair utilities, or to settings in which players bid for multiple heterogeneous resources.

\bibliographystyle{ieeetr}
\bibliography{main.bib} 

\onecolumn

\section{Supplementary Material}

\subsection{Proof of Theorem~\ref{thm:SDSC_log}}\label{sec:SDSC}

Define the functions $f_i$, $g_i$, and $\psi_{\bm{r},\bm{V}}$ as
\begin{align}
        &f_i(x) = (1 - x)^2 \ddot{V}_i(x) - 2(1 - x)\dot{V}_i(x) , \\ 
        &g_i(x) = -  x(1 - x) \ddot{V}_i(x)+  (2x - 1) \dot{V}_i(x),  \\
        & \psi_{\bm{r},\bm{V}}(\bm{x}) \triangleq \left( \sum_{i\in \cI}\frac{r_ig_i(x_i)^2}{k_i(x_i)} \right) \left(\sum_{i\in \cI}\frac{1}{r_ik_i(x_i)} \right),
\end{align}
where $k_i(x) =  g_i(x) - f_i(x) + \delta^2 L_i$ and $L_i= \min_{z_i \in \mathcal{R}_i} \ddot{p}_i(z_i)$. Further define the set $\Delta = \{ \bm{x}>\bm{0}: \;\sum_{i\in \cI} x_{i}\leq \frac{\sum_{k\in \cI} c_k}{\sum_{k\in \cI} c_k + \delta}  \}$.

We prove that, 
\begin{align}\label{e:suff_cond_SDSC_supp}
               \exists \bm{r} > \bm{0} : \; \sup_{\bm{x}\in \Delta} \psi_{\bm{r},\bm{V}}(\bm{x}) < 1 \implies \mathcal{G} \text{ is } \SDSC(\bm{r}),
\end{align}
or equivalently we prove that the matrix~$\bm{H}_{\bm{r}}(\bm{z})$ is negative definite whenever~$\sup_{\bm{x}\in \Delta} \psi_{\bm{r},\bm{V}}(\bm{x}) < 1$. 

The second-order partial derivatives of the payoff functions $\partial_{i,j}^{2} \varphi_i$ can be written as
\begin{align}
    \partial_{i,j}^{2} \varphi_i(\bm{z}) = \partial_{i,j}^{2} U_i(\bm{z}) - \ddot{p}_{i}(z_i),
\end{align}
where $U_i(\bm{z}) = V_i(x_i(\bm{z}))$, and $\ddot{p}_{i}$ is the second derivative of $p_i$. Note that $p_i$ is convex, which implies it contributes to the matrix $\bm{H}_{\bm{r}}(\bm{z})$ as a diagonal matrix whose entries are $-2r_i \ddot{p}_i(z_i)\leq 0$  and thus is negative semi-definite. Consequently, any failure of $\bm{H}_{\bm{r}}(\bm{z})$ to be negative definite would stem from the matrix formed by $\partial_{i,j}^{2} U_i(\bm{z})$. Lemma~\ref{lem:H_computation} shows that this matrix has a particular structure. 

\begin{lemma}\label{lem:H_computation}
    The partial derivatives of $U_i$ verify,  
    \begin{align}
       m(z)  \partial_{i,j}^{2} U_i(\bm{z}) = 
        \begin{cases} 
            f_i(x_i(\bm{z})) , & \text{if } i = j, \\ 
            g_i(x_i(\bm{z})), & \text{otherwise.} 
        \end{cases}
    \end{align}
   
\end{lemma}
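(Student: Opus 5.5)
The plan is a direct chain-rule computation. I take $m(\bm{z}) \triangleq \big(\sum_{k\in\cI} z_k + \delta\big)^2 = S(\bm{z})^2$, the squared denominator of the allocation rule, since this is the normalisation that makes the right-hand sides depend on $\bm{z}$ only through $x_i(\bm{z})$. Work on the interior region where all $z_k>0$, which contains $\mathcal{R}$ because $\epsilon_k>0$. There $x_i(\bm{z}) = z_i/S(\bm{z})$ is smooth, and $U_i = V_i\circ x_i$ is $\mathcal{C}^2$ by Assumption~\ref{assum:V_properties}.

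First, I compute the first-order partials of the allocation. Using $\partial_k S = 1$ for every $k$,
\begin{align*}
\partial_i x_i = \frac{S - z_i}{S^2} = \frac{1-x_i}{S}, \qquad \partial_j x_i = -\frac{z_i}{S^2} = -\frac{x_i}{S} \quad (j\neq i).
\end{align*}
Hence $\partial_i U_i = \dot V_i(x_i)\,(1-x_i)/S$. The useful point is that $\partial_i U_i$ is a function of $x_i$ times $1/S$. So its derivative in any direction $k$ splits into two pieces: the derivative of $h(x_i)\triangleq \dot V_i(x_i)(1-x_i)$ times $\partial_k x_i$, divided by $S$, and $h(x_i)$ times $\partial_k(1/S) = -1/S^2$.

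Second, I apply this with $h'(x) = \ddot V_i(x)(1-x) - \dot V_i(x)$.

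For $k=i$, the derivative is $\big[h'(x_i)(1-x_i) - h(x_i)\big]/S^2$. This expands to $(1-x_i)^2\ddot V_i - (1-x_i)\dot V_i - (1-x_i)\dot V_i = f_i(x_i)$, all over $S^2$.

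For $k=j\neq i$, the derivative is $\big[-x_i h'(x_i) - h(x_i)\big]/S^2$. This expands to $-x_i(1-x_i)\ddot V_i + x_i\dot V_i - (1-x_i)\dot V_i = -x_i(1-x_i)\ddot V_i + (2x_i-1)\dot V_i = g_i(x_i)$, all over $S^2$.

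Multiplying by $m(\bm{z}) = S^2$ gives exactly the claimed dichotomy.

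There is no real obstacle here. The only points needing care are the sign bookkeeping in the product rule, in particular the $-1/S^2$ contribution from differentiating $1/S$, which produces the factor $2$ in $f_i$ and the $2x-1$ in $g_i$. One should also note that the off-diagonal entry depends only on $i$ and not on $j$. This rank-one-plus-diagonal structure is what the subsequent argument for $\bm{H}_{\bm{r}}$ will exploit. The $\delta$ term needs no special treatment because it enters only through $S$.
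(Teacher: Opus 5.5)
Your proposal is correct and takes the same route as the paper: a direct chain-rule computation of $\partial^2_{i,i}U_i$ and $\partial^2_{i,j}U_i$ rewritten in terms of $x_i(\bm{z})$, where your factorisation $\partial_i U_i = h(x_i)/S$ simply organises the same algebra more cleanly. Your choice $m(\bm{z}) = S(\bm{z})^2$ is also the right reading of the statement. The paper's proof writes the result as $f_i/m(z)^2$, implicitly taking $m = S$, but the lemma as stated needs the squared normalisation, and so does its later use $\tilde{k}_i(\bm{z}) \geq k_i(x_i(\bm{z}))$, which relies on $m(\bm{z}) \geq \delta^2$.
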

\begin{IEEEproof}
The calculation of the derivative of $U_i$ with respect to $z_i$ writes 
        % \noindent \( \mathcal{J}_{\bm{1}}(z) \):
        \begin{equation}
        \partial_i U_i(z_i) = \frac{\sum_{j\ne i}^N z_j +\delta}{\left(\sum_{l=1}^N z_l +\delta\right)^2}\dot{V}_i\left(\frac{z_i}{\sum_{l=1}^N z_l +\delta}\right).        \end{equation}
The elements on the diagonal can be written as  
        \begin{align}  \nonumber
             \partial^{2}_{i,i} U_{i} (\bm{z})
&=  \frac{-2\sum_{l\ne i}^N z_i +\delta}{(\sum_{l=1}^N z_l +\delta)^3}\dot{V}_i\left(\frac{z_i}{\sum_{l=1}^N z_l +\delta}\right) \\ \nonumber
& \quad + \left( \frac{\sum_{l \neq i}^N z_l +\delta}{(\sum_{l=1}^N z_l +\delta)^2}\right)^2 \ddot{V}_i\left(\frac{z_i}{\sum_{l=1}^N z_l +\delta}\right)\\ \nonumber
&=\frac{\left [(1-x_i(\bm{z}))^2\ddot{V}_i(x_i(\bm{z}))-2{(1-x_i(\bm{z}))\dot V}_i(x_i(\bm{z})) \right ]}{ \left(\sum_{i=1}^{N} z_j + \delta \right)^{2}}  \\ \label{eq:pseudohessian1}
&= \frac{{f}_i(x_i(\bm{z}))}{m(z)^{2}}. 
        \end{align}
In a similar fashion we obtain the off-diagonal entries 
        \begin{align} \nonumber
\partial^{2}_{i,j} U_i(\bm{z})
&=\frac{\sum_{l=1}^N z_l +\delta - 2\left(\sum_{l \neq i}^N z_l +\delta\right)}{\left(\sum_{l=1}^N z_l +\delta\right)^3}\dot{V}_i\left(\frac{z_i}{\sum_{l=1}^N z_l +\delta}\right)\\ \nonumber 
&\quad \quad -\frac{\sum_{l \neq i}^N z_l +\delta}{\left(\sum_{l=1}^N z_l +\delta\right)^4}\cdot z_i\cdot \ddot{V}_i\left(\frac{z_i}{\sum_{l=1}^N z_l +\delta}\right)\\ \nonumber
&=\frac{\left [ -\ddot{V}_i(x_i(\bm{z})) x_i(\bm{z})(1-x_i(\bm{z}))+{\dot V}_i(x_i(\bm{z})) (2x_i(\bm{z})-1) \right ]}{ \left(\sum_{i=1}^{N} z_j + \delta \right)^{2}}  \\ \label{eq:pseudohessian2}
&=\frac{{g}_i(x_i(\bm{z}))}{ m(z)^{2}}
        \end{align}
which concludes the calculation.  
  \end{IEEEproof}
Lemma~\ref{lem:H_computation} shows that the matrix of second partial derivatives of the functions $U_i$ can be rewritten using a change of variable from $\bm{z}\in \mathcal{R}$ to $\bm{x}(\bm{z})$. Moreover, this reformulated matrix can be decomposed into the sum of a diagonal matrix, with entries $f_i(x_i) - g_i(x_i)$, and a rank-one matrix, whose entries are given by $g_i(x_i)$. Lemma~\ref{thm:SDSC} leverages this particular structure to prove~\eqref{e:suff_cond_SDSC_supp}. 
\begin{lemma}\label{thm:SDSC}
$ \sup_{\bm{x}\in \Delta} \psi_{\bm{r},\bm{V}}(\bm{x}) < 1 \implies \bm{H}_{\bm{r}}(\bm{z})\text{ is negative definite for any $\bm{z}$.} $
\end{lemma}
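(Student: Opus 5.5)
The plan is to reduce the quadratic form of $\bm{H}_{\bm{r}}(\bm{z})$ to a diagonal-plus-rank-one form via Lemma~\ref{lem:H_computation}, and then control the rank-one part with two Cauchy--Schwarz inequalities whose product is exactly $\sqrt{\psi_{\bm{r},\bm{V}}}$.

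\textbf{Step 1: explicit quadratic form.} Fix $\bm{z}\in\mathcal{R}$ and write $m=m(\bm z)=\sum_l z_l+\delta$ and $x_i=x_i(\bm z)$. By Lemma~\ref{lem:H_computation}, together with the fact that $p_i$ contributes only to the diagonal, the entries are
\begin{align*}
(\bm{H}_{\bm{r}})_{ii}&=\frac{2r_if_i(x_i)}{m^2}-2r_i\ddot p_i(z_i),\\
(\bm{H}_{\bm{r}})_{ij}&=\frac{r_ig_i(x_i)+r_jg_j(x_j)}{m^2}\quad (i\neq j).
\end{align*}
Adding and subtracting the missing diagonal term $2r_ig_i/m^2$, and using the symmetry of the off-diagonal part, gives for any $\bm v$
\begin{align*}
\bm v^\intercal\bm{H}_{\bm{r}}\bm v=\frac{2}{m^2}\Big[\sum_i r_i(f_i-g_i)v_i^2+\Big(\sum_i r_ig_iv_i\Big)\Big(\sum_j v_j\Big)\Big]-2\sum_i r_i\ddot p_i(z_i)v_i^2 .
\end{align*}
Since $m\ge\delta$ and $\ddot p_i(z_i)\ge L_i\ge 0$, we have $m^2\ddot p_i(z_i)\ge \delta^2L_i$. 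Therefore, with $Q\triangleq\sum_i r_ik_i(x_i)v_i^2$,
\begin{align*}
\bm v^\intercal\bm{H}_{\bm{r}}\bm v\le -\frac{2}{m^2}\Big[Q-\Big(\sum_i r_ig_iv_i\Big)\Big(\sum_j v_j\Big)\Big].
\end{align*}

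\textbf{Step 2: positivity of $k_i$.} A direct computation gives
\begin{align*}
g_i-f_i=\dot V_i(x)-(1-x)\ddot V_i(x).
\end{align*}
This is strictly positive on $(0,1)$ by Assumption~\ref{assum:V_properties}, since $V_i$ is strictly increasing and concave. Hence $k_i>0$, so $Q>0$ for $\bm v\neq\bm 0$, and $\psi_{\bm r,\bm V}$ is well defined.

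\textbf{Step 3: Cauchy--Schwarz.} Writing $r_ig_iv_i=\big(\sqrt{r_i/k_i}\,g_i\big)\big(\sqrt{r_ik_i}\,v_i\big)$ yields
\begin{align*}
\Big|\sum_i r_ig_iv_i\Big|\le\Big(\sum_i r_ig_i^2/k_i\Big)^{1/2}Q^{1/2}.
\end{align*}
Writing $v_j=\big(1/\sqrt{r_jk_j}\big)\big(\sqrt{r_jk_j}\,v_j\big)$ yields
\begin{align*}
\Big|\sum_j v_j\Big|\le\Big(\sum_j 1/(r_jk_j)\Big)^{1/2}Q^{1/2}.
\end{align*}
Multiplying the two bounds, the cross term is at most $\sqrt{\psi_{\bm r,\bm V}(\bm x)}\,Q$. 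Consequently,
\begin{align*}
\bm v^\intercal\bm{H}_{\bm{r}}\bm v\le -\frac{2}{m^2}\Big(1-\sqrt{\psi_{\bm r,\bm V}(\bm x(\bm z))}\Big)Q<0 .
\end{align*}

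\textbf{Step 4: $\bm x(\bm z)\in\Delta$.} It remains to check that the supremum hypothesis applies. The point $\bm x(\bm z)$ lies in $\Delta$, because every $x_i>0$ (as $z_i\ge\epsilon_i>0$) and $\sum_i x_i=S/(S+\delta)$ with $S=\sum_i z_i$. This ratio is increasing in $S$ and $S\le\sum_k c_k$, so the constraint defining $\Delta$ holds. The hypothesis $\sup_\Delta\psi<1$ then gives strict negativity of the quadratic form for every $\bm z$ and every unit vector $\bm v$. If a uniform margin is wanted, it follows from compactness of $\mathcal{R}$ and $\bm v$ ranging over the unit sphere.

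The main obstacle I expect is Step 1. The cleanest route is to split the symmetric off-diagonal contributions into the rank-one product $\big(\sum_i r_ig_iv_i\big)\big(\sum_j v_j\big)$ while keeping track of the factor $2$ and of the diagonal correction $f_i-g_i$. One must also justify absorbing the payment term through $m\ge\delta$, which is where the $\delta^2L_i$ term in $k_i$ originates. Once this form is written correctly, the positivity identity for $k_i$ and the Cauchy--Schwarz step are routine.
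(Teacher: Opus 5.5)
Your proof is correct and follows the paper's argument essentially step for step. Both use the same decomposition $\bm H_{\bm r}=-2\Lambda(\tilde{\bm k}\odot\bm r)+(\bm g\odot\bm r)\mathbf 1^\intercal+\mathbf 1(\bm g\odot\bm r)^\intercal$ from Lemma~\ref{lem:H_computation}, the bound $m(\bm z)^2\ddot p_i(z_i)\ge\delta^2L_i$ to replace $\tilde k_i$ by $k_i$, positivity of $k_i$, and two Cauchy--Schwarz inequalities whose product is $\sqrt{\psi_{\bm r,\bm V}}$. Your write-up is in fact slightly cleaner than the paper's: you keep the $1/m(\bm z)^2$ factor consistent (the paper mixes $m$ and $m^2$), and you explicitly check $\bm x(\bm z)\in\Delta$, which the paper only uses implicitly.
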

\begin{IEEEproof}
Let $\bm{r}>\bm{0}$. We assume that $ \sup_{\bm{x}\in \Delta} \psi_{\bm{r},\bm{V}}(\bm{x}) < 1$ and prove that the matrix $ \bm{H}_{\bm{r}}(\bm{z}) $ is negative definite for every $ \bm{z} \in \mathcal{R} $. To this end, we introduce the following notation. We define the vectors $ \bm{g} = ( g_i(x_i(\bm{z})) )_{i=1}^{n} $ and $ \bm{f} = ( f_i(x_i(\bm{z})) )_{i=1}^{n} $, and set $ \tilde{k}_i(\bm{z}) = g_i(x_i(\bm{z})) - f_i(x_i(\bm{z})) + m(\bm{z})\ddot{p}_i(z_i) $, so that $ \tilde{\boldsymbol{k}} = ( \tilde{k}_i(\bm{z}) )_{i=1}^{n} $. Here, $ \odot $ denotes the Hadamard (elementwise) product, $ \mathbf{1} $ is the vector of ones, and $ \Lambda(\bm{a}) $ represents the diagonal matrix whose diagonal entries are the components of $\bm{a}$.

Thanks to Lemma~\ref{lem:H_computation}, the matrix can be expressed as
\begin{align}
   \hskip-3mm m(\bm{z})\cdot (\bm{H}_r(\bm{z}))_{i,j} =
    \begin{cases}
        2 r_i \Bigl( f_i(x_i(\bm{z})) - m(\bm{z})\, \ddot{p}_i(z_i) \Bigr), & \text{if } i = j, \\
        r_i\, g_i(x_i(\bm{z})) + r_j\, g_j(x_j(\bm{z})), &  \text{if } i \neq j.
    \end{cases}
\end{align}
Thus, the matrix $ \bm{H}_r(\bm{z}) $ can be written compactly as
\begin{equation}\label{eq:H}
    \bm{H}_{\bm{r}}(\bm{z}) = -2\, \Lambda\Bigl( \tilde{\boldsymbol{k}} \odot \bm{r} \Bigr) + \Bigl( \bm{g} \odot \bm{r} \Bigr) \mathbf{1}^{\intercal} + \mathbf{1} \Bigl( \bm{g} \odot \bm{r} \Bigr)^{\intercal}.
\end{equation}

Notice that $ \tilde{k}_i(\bm{z}) \geq k_i(x_i(\bm{z})) $. Moreover, replacing by the expressions of $g_i$ and $f_i$ in \eqref{e:g_expression} and \eqref{e:f_expression}, we obtain, 
\begin{align}
k_i(x_i) = (x_i-1)\,\ddot{V}_i(x_i) + \dot{V}_i(x_i) + \delta^2\, \min_{z_i \in \mathcal{R}_i} \ddot{p}_i(z_i).
\end{align}
By Assumption~\ref{assum:V_properties}, we have $ \dot{V}_i>0 $, $ \ddot{V}_i<0 $, and $ \ddot{p}_i>0 $. Moreover, since $ x_i(\bm{z})<1 $ (as $ \bm{x}\in \Delta $), it follows that $ k_i(x_i(\bm{z}))>0 $. Define also the vector $\bm{k} = \bigl( k_i(x_i(\bm{z})) \bigr)_{i=1}^{N}$.

Let $ \bm{v} \in \mathbb{R}^{N} $. Using the expression of $\bm{H}_r(\bm{z})$, we have
\begin{align}\nonumber 
    \bm{v}^T \bm{H}_{\bm{r}}(\bm{z}) \bm{v} &= -2\left(\sum_{i=1}^{n} v_i^2\, r_i\, \tilde{k}_i(\bm{z}) - \langle \bm{v},\, \bm{g} \odot \bm{r} \rangle\, \langle \bm{v},\, \mathbf{1} \rangle \right) \\ \label{e:04/02}
    &\le -2\left(\sum_{i=1}^{n} v_i^2\, r_i\, k_i(x_i(\bm{z})) - \langle \bm{v},\, \bm{g} \odot \bm{r} \rangle\, \langle \bm{v},\, \mathbf{1} \rangle \right),
\end{align}
where $ \langle \cdot,\cdot \rangle $ denotes the inner product.

We now rewrite the inner products in \eqref{e:04/02} by noting that, for any vector $ \bm{a} $, the notation $ \sqrt{\bm{a}} $ represents the vector obtained by taking the square root of each component of $\bm{a}$, and the ratio of two vectors is computed elementwise. In particular,
\begin{align}
    \langle \bm{v},\, \bm{g} \odot \bm{r} \rangle &= \left\langle \bm{v}\odot \sqrt{\bm{k}\odot \bm{r}},\, \frac{\sqrt{\bm{r}}\odot \bm{g}}{\sqrt{\bm{k}}} \right\rangle, \\[1mm]
    \langle \bm{v},\, \mathbf{1} \rangle &= \left\langle \bm{v}\odot \sqrt{\bm{k}\odot \bm{r}},\, \frac{\mathbf{1}}{\sqrt{\bm{k}\odot \bm{r}}} \right\rangle.
\end{align}
Applying the Cauchy-Schwarz inequality to these expressions and substituting into \eqref{e:04/02} yields
\begin{align}\nonumber
%   & \bm{v}^T \bm{H}_r(\bm{z}) \bm{v} \\
%&\le -2\sum_{i=1}^{n} v_i^2\, r_i\, k_i(x_i(\bm{z})) \\ \nonumber
& \bm{v}^T \bm{H}_{\bm{r}}(\bm{z}) \bm{v} \le -2\sum_{i=1}^{n} v_i^2\, r_i\, k_i(x_i(\bm{z})) \\ \nonumber
& + 2\left(\sum_{i=1}^{n} v_i^2 r_i k_i(x_i(\bm{z}))\right) \sqrt{\sum_{i=1}^{n} \frac{r_i\, g_i(x_i(\bm{z}))^2}{k_i(x_i(\bm{z}))}} \sqrt{\sum_{i=1}^{n} \frac{1}{r_i k_i(x_i(\bm{z}))}} \\\nonumber
 &= 2\underbrace{\sum_{i=1}^{n} v_i^2 r_i k_i(x_i(\bm{z}))}_{>0} \left( -1 + \sqrt{\sum_{i=1}^{n} \frac{r_i g_i(x_i(\bm{z}))^2}{k_i(x_i(\bm{z}))} \sum_{i=1}^{n} \frac{1}{r_i k_i(x_i(\bm{z}))}} \right),
\end{align}
%Given that $\mathcal{G}\in \mathcal{F}_{\bm{r}}$, we deduce that $\bm{H}_{\bm{r}}(\bm{z})$ is negative definite.  
which concludes the proof.
\end{IEEEproof}

% \footnote{Cauchy-Schwarz inequality :$\langle \bm a, \bm b \rangle \leq \|\bm a\| \cdot \|\bm b\| = \sqrt{\sum_{i=1}^{N} \bm a_i^2}\sqrt{\sum_{i=1}^{N} \bm b_i^2}$}, 

% By Lemma~\ref{lem:H_computation}, 
% \begin{align}
%    m(\bm z)^{2} (\bm{H}_r(\bm{z}))_{i,j} =
%     \begin{cases}
%         2 r_i \left(f_i(x_i(\bm{z})) - m(z)^2 \ddot{p}_i(z_i)\right), & \text{if } i = j, \\
%         r_i g_i(x_i(\bm{z})) + r_j g_j(x_j(\bm{z})), & \text{otherwise.}
%     \end{cases}
% \end{align}

   Assume that $V_i(x) = a_i \log(x)+ d_i$, $\delta>0$, and $r_i =a_i^{-1}$. Straightforward calculations show that $g_i(x)= a_i$ and $k_i(x) = \nicefrac{a_i}{x^2} + \delta^2 L_i$. Thus $\psi_{\bm{r},\bm{V}}(\bm{x})$ verifies, 
    \begin{align}
        \psi_{\bm{r},\bm{V}}(\bm{x})
&= \left( \sum_{i=1}^{n} \frac{a_i}{\nicefrac{a_i}{x_i^2} + \delta^2 L_i) }   \right) \left( \sum_{i=1}^{n} \frac{1}{\nicefrac{1}{x_i^2} + \delta^2 a_i^{-1}L_i} \right) \\ 
& \leq   \left (\sum_{i=1}^{n} x_i^2 \right)^{2}  \leq \left (\sum_{i=1}^{n} x_i \right)^{4} \leq \left( \frac{C}{C+\delta} \right)^4 < 1. 
    \end{align}
Thus the condition $\sup_{\bm{x}\in \Delta} \psi_{\bm{r},\bm{V}}(\bm{x}) < 1$ is satisfied for logarithmic utilities. This finishes the proof.

    % following valuation functions $ V_i(\cdot) $ with a linear cost function $ p_i(\cdot) $ for $i\in[N]$. 
    % \begin{enumerate}
    %     \item $ V_i(x_i) = \log(x_i) $, for $r_i= 1$ .
    %     \item $ V_i(x_i) = a_i \log(x_i) + b_i $, for $ r_i=1/ a_i$, where $ a_i, b_i $ are constants.
    % \end{enumerate}
    % The corresponding game $\mathcal{G}$ belong to $\mathcal{F}_{\bm r}$

% \begin{table}[h!]
% \centering
% \begin{tabular}{c|l}
% \hline 
% \textbf{Notation} & \textbf{Description} \\
% \hline\hline
% $m(\bm z)$ & $\sum_{j=1}^Nz_j + \delta$\\
% &\\
% $\Lambda(\bm a) $ & $\diag(a_i)$\\
% &\\ 
% $f(\bm a)$ & $\left ( f(a_i) \right)_{i=1}^N,$ $\quad$  e.g: $\sqrt{\bm a}=\left(\sqrt{a_i} \right)_{i=1}^N$\\
% &\\
% $\bm a\odot \bm b$ & $\left ( a_i b_i \right)_{i=1}^N, \quad $ e.g: $ \sqrt{\bm r}\odot \bm g = \left(\sqrt{r_i}g_i \right)_{i=1}^N$ 
% \end{tabular}
% \caption{Notations }
% \label{tab:notations}
% \end{table}

%We further investigate a sufficient conditions that satisfies Proposition \ref{prop:sufDSC}

\subsection{Additional experiments} 
\label{appendix:hetero_dyn_daq_br}

%%%%%%%%%%%%%%%%%%%%%% OGD vs DAQ %%%%%%%%%%%%%%%%%%
%\YBnote{We need to add some text here explaining briefly what these figures are showing.}

\begin{figure*}[t]
    \centering

    % --- Optional global legend ---
    \includegraphics[width=0.5\linewidth]{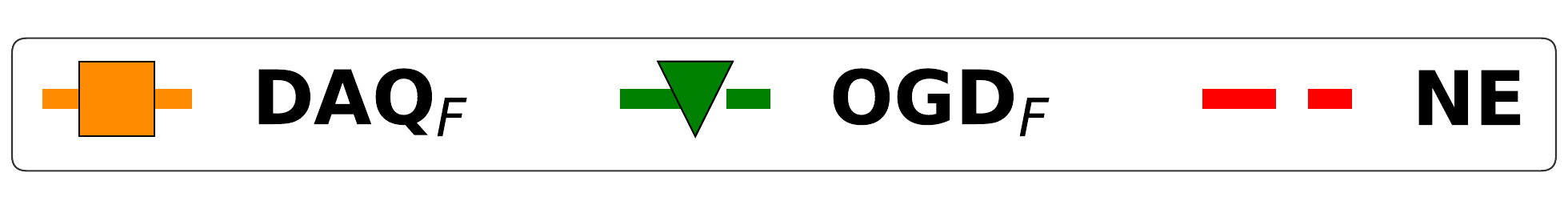}

    % ---------------- Row 1 (20, 50, 80) ----------------
    \begin{subfigure}[t]{0.32\linewidth}
        \centering
        \includegraphics[width=\linewidth]{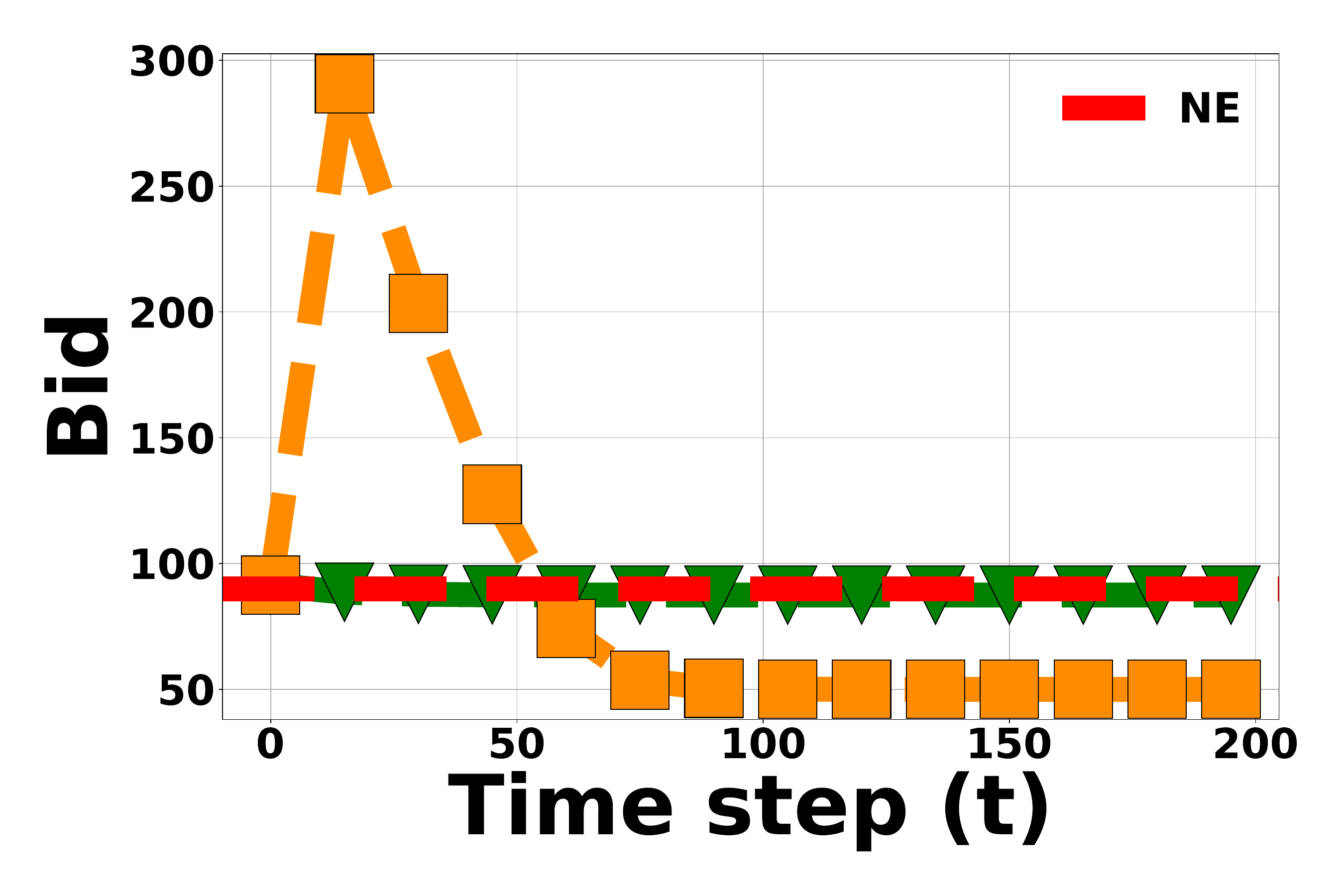}\vspace{-0.2em}
        \includegraphics[width=\linewidth]{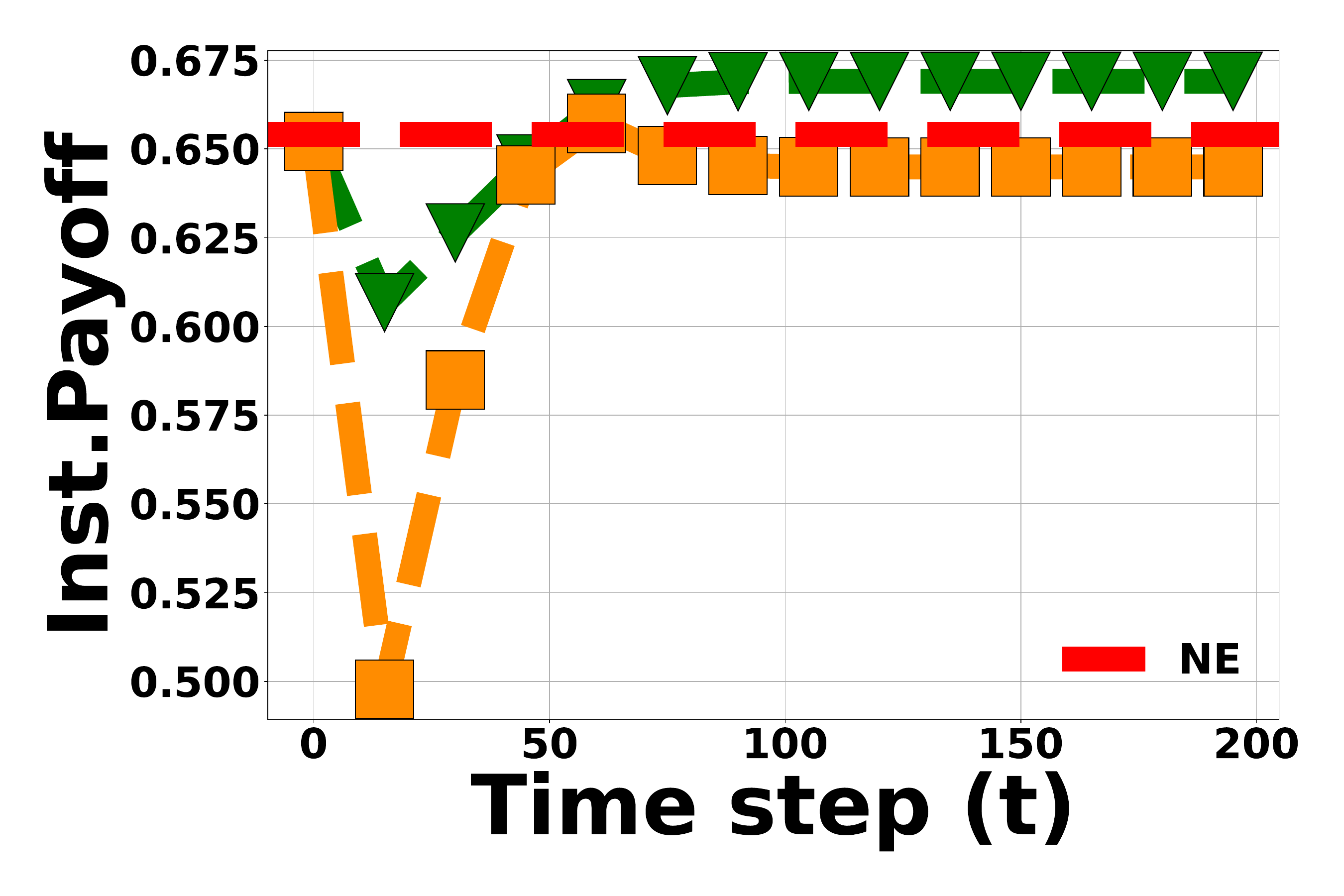}
        \caption{$\alpha_{\DAQ_{\mathrm{F}}}=20\%$}
        \label{fig:DAQ_OGD_20}
    \end{subfigure}\hfill
    \begin{subfigure}[t]{0.32\linewidth}
        \centering
        \includegraphics[width=\linewidth]{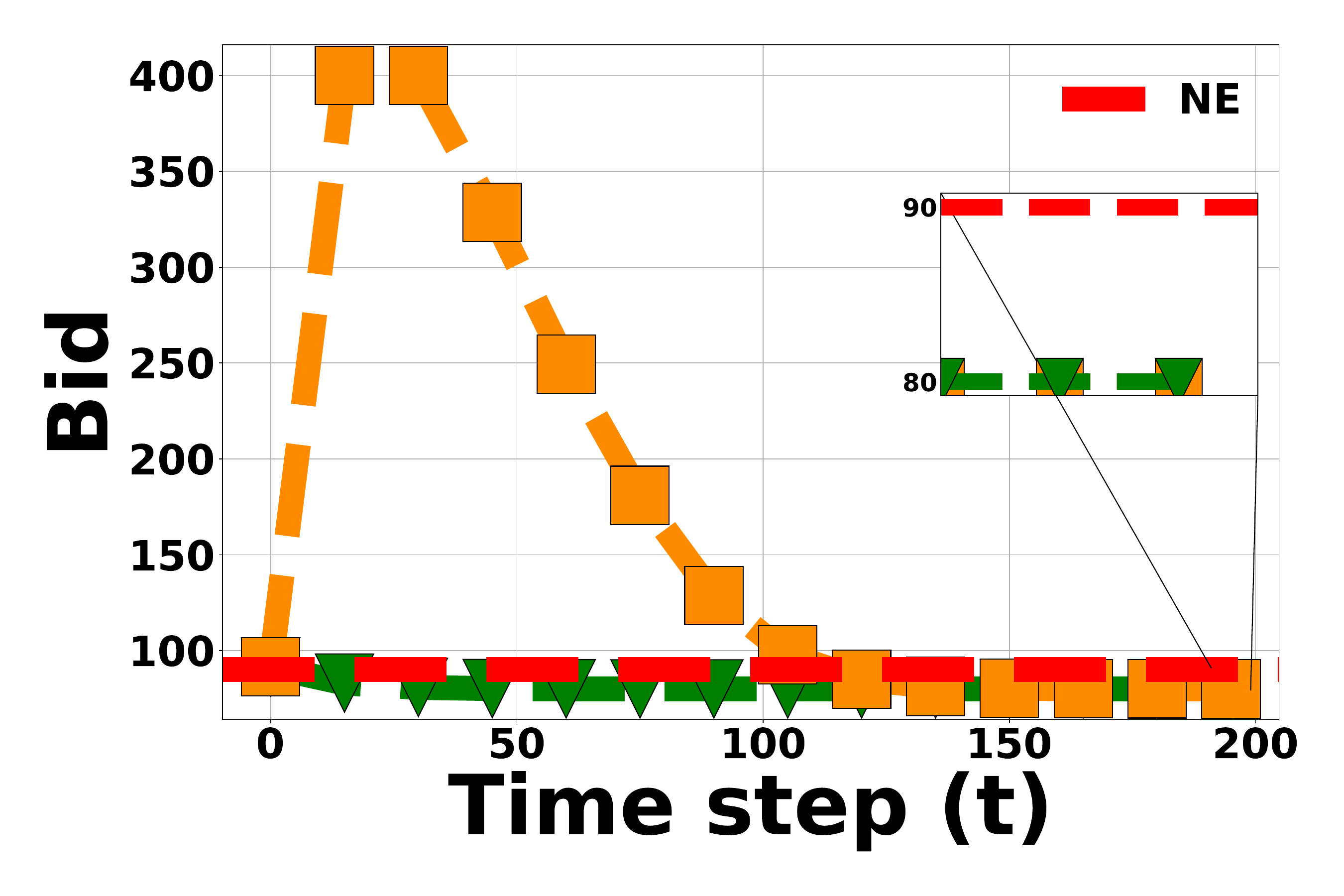}\vspace{-0.2em}
        \includegraphics[width=\linewidth]{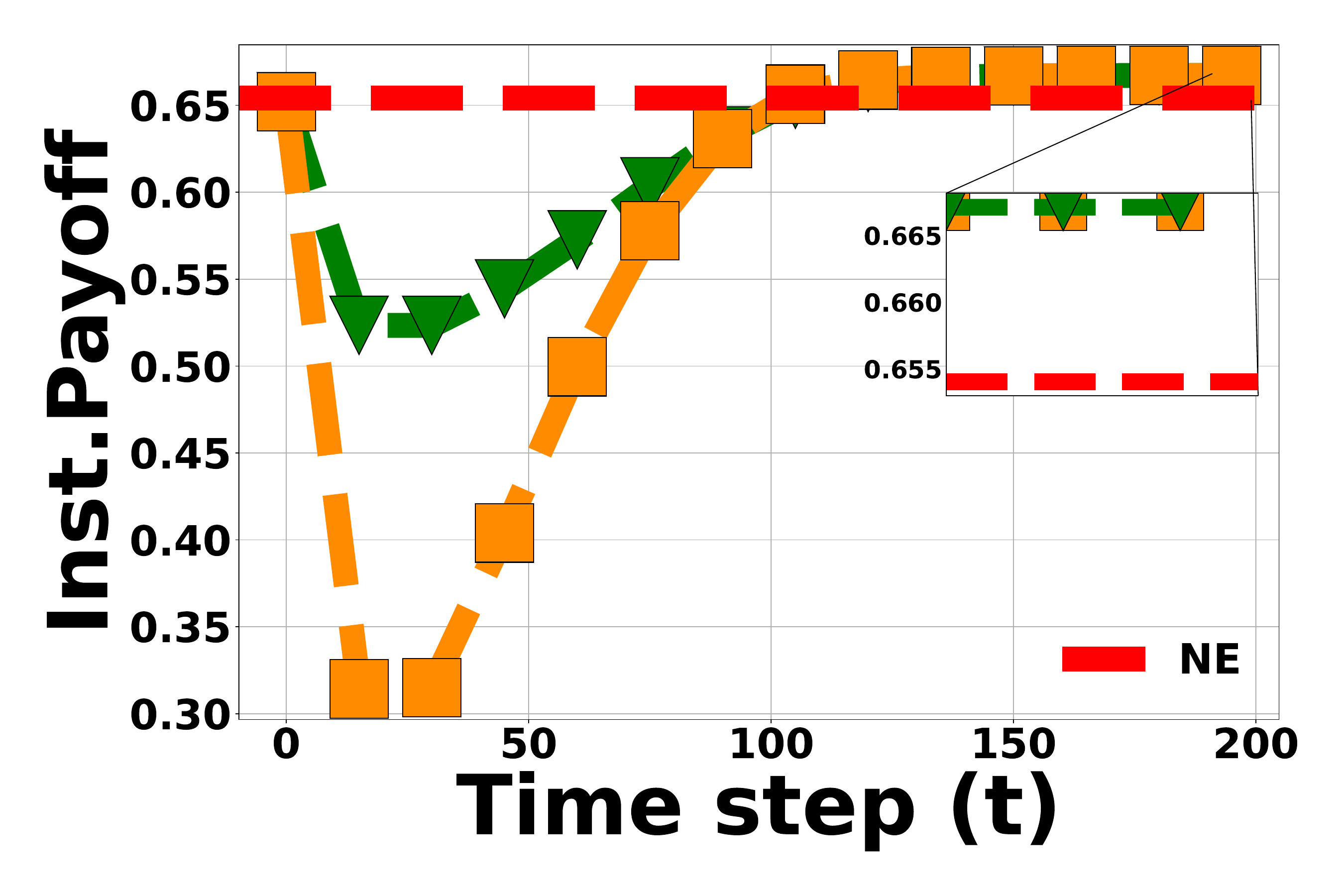}
        \caption{$\alpha_{\DAQ_{\mathrm{F}}}=50\%$}
        \label{fig:DAQ_OGD_50}
    \end{subfigure}\hfill
    \begin{subfigure}[t]{0.32\linewidth}
        \centering
        \includegraphics[width=\linewidth]{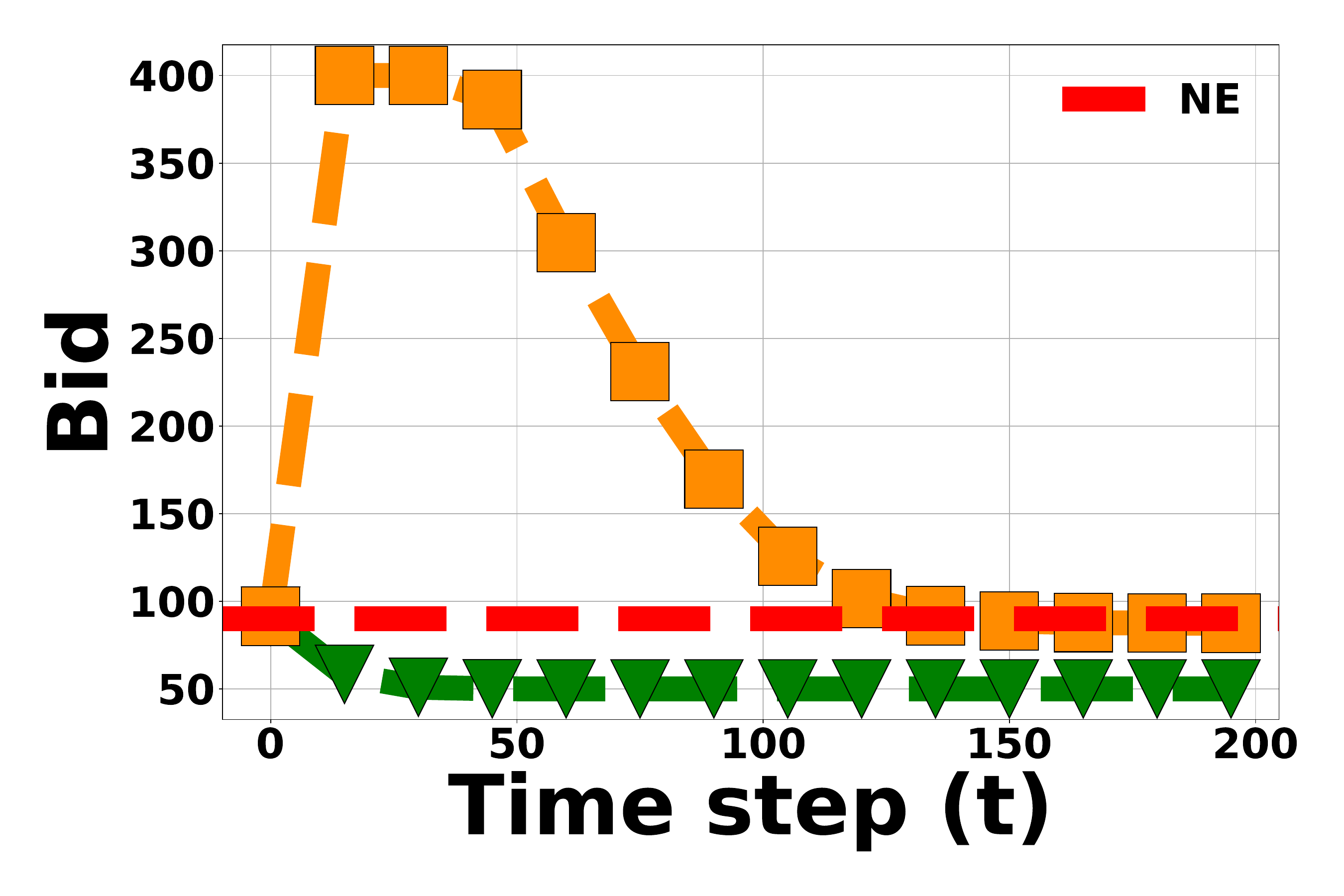}\vspace{-0.2em}
        \includegraphics[width=\linewidth]{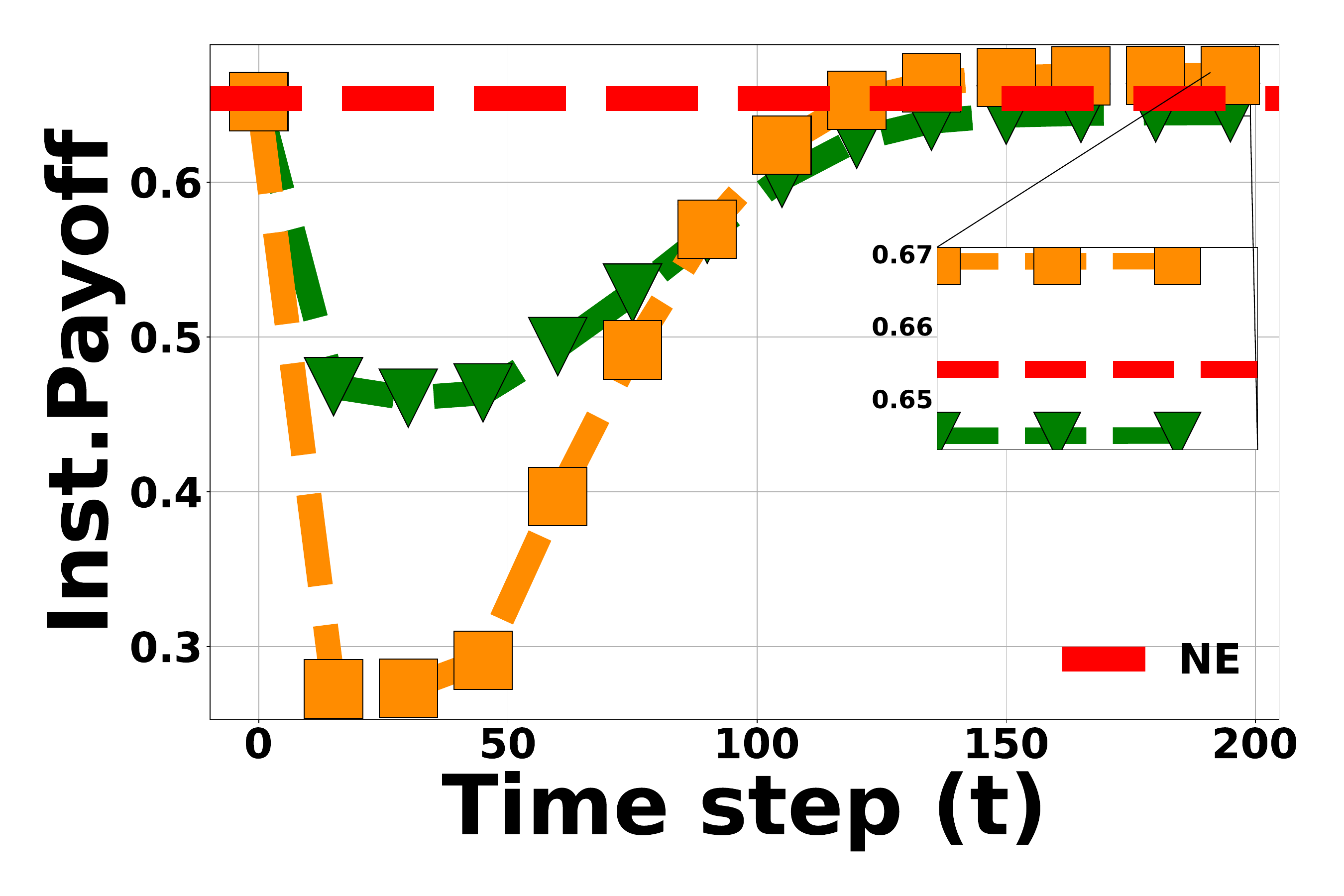}
        \caption{$\alpha_{\DAQ_{\mathrm{F}}}=80\%$}
        \label{fig:DAQ_OGD_80}
    \end{subfigure}

    \vspace{0.6em}

    % ---------------- Row 2 (10, 90) ----------------
    \begin{subfigure}[t]{0.49\linewidth}
        \centering
        \includegraphics[width=\linewidth]{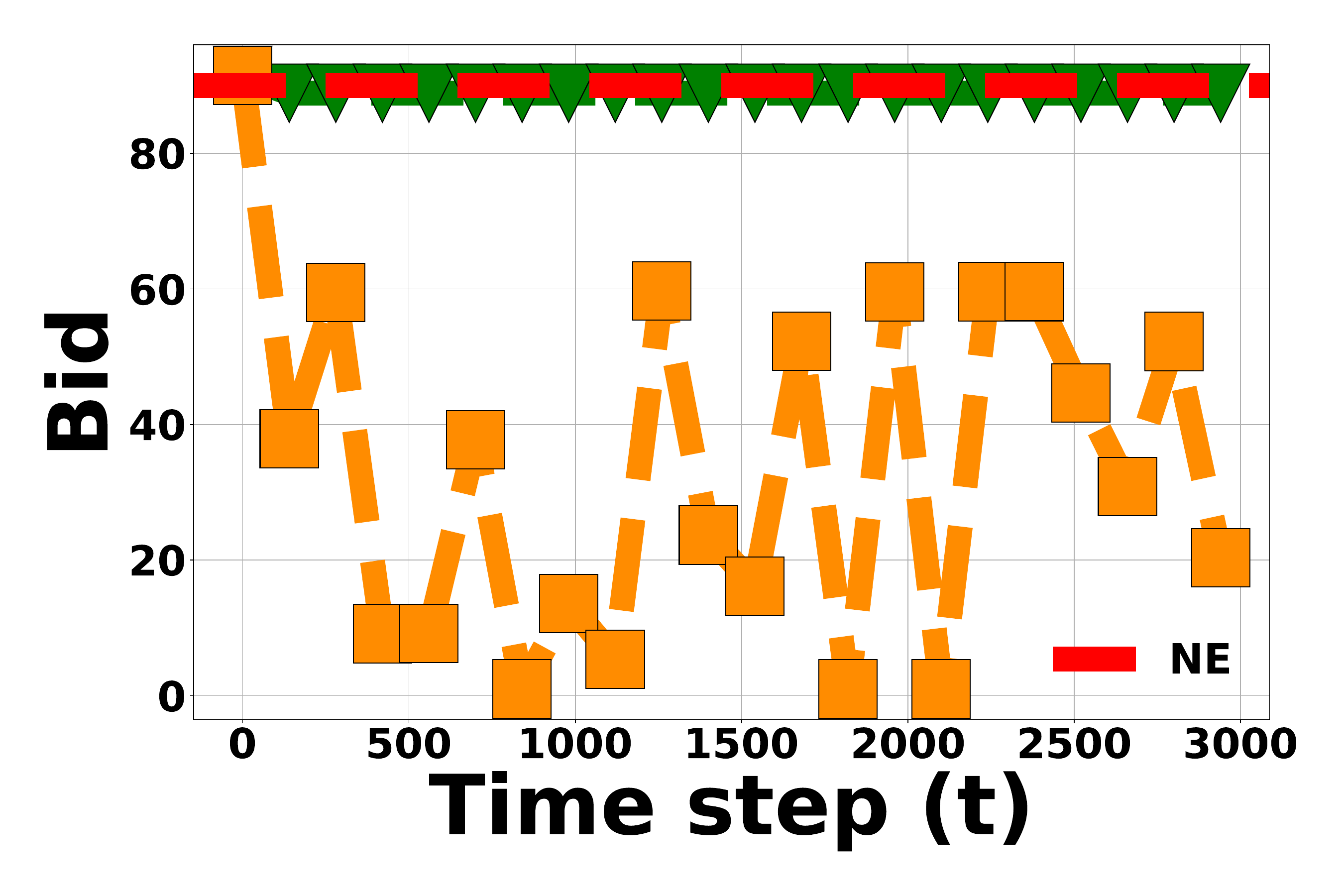}\vspace{-0.2em}
        \includegraphics[width=\linewidth]{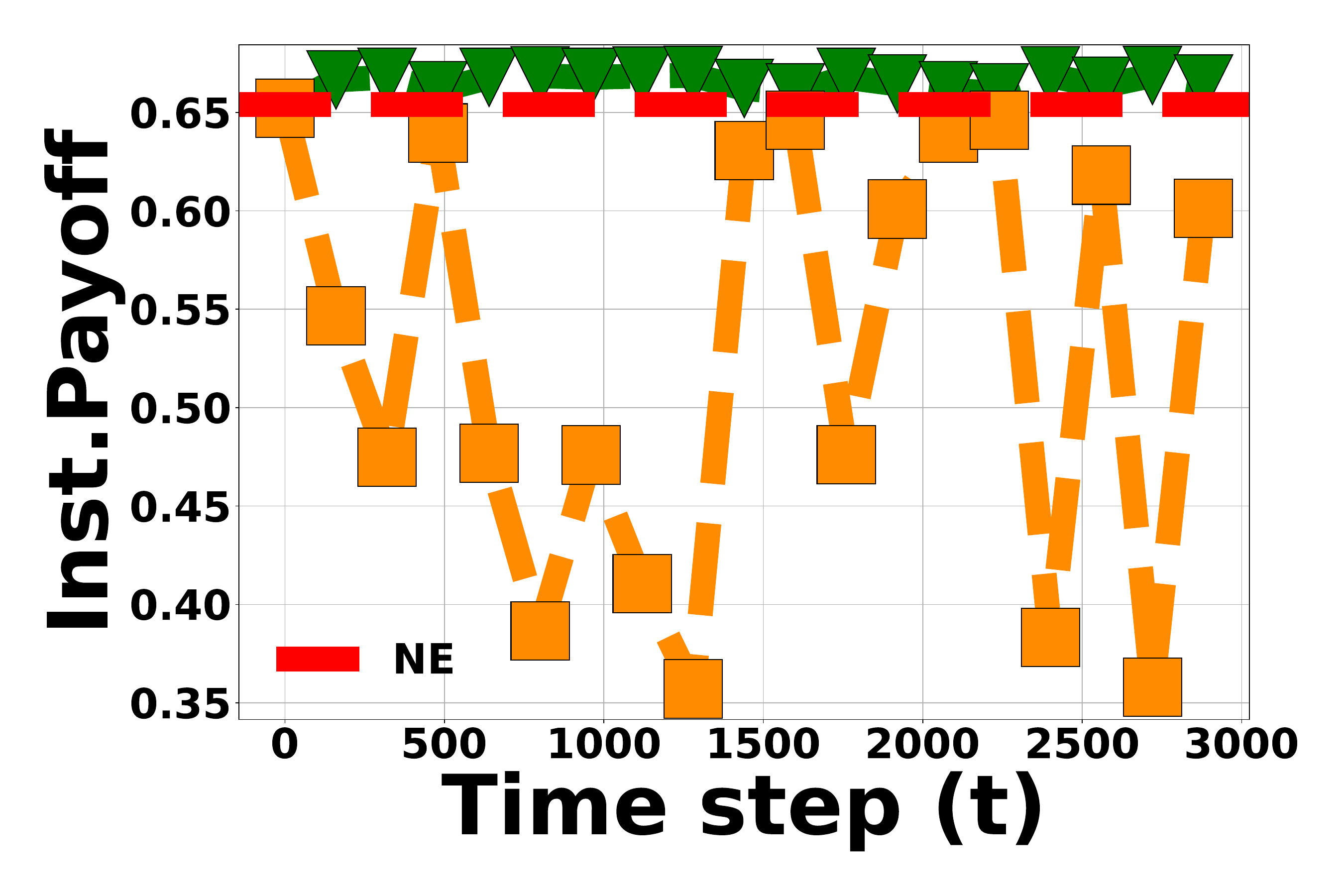}
        \caption{$\alpha_{\DAQ_{\mathrm{F}}}=10\%$}
        \label{fig:DAQ_OGD_10}
    \end{subfigure}\hfill
    \begin{subfigure}[t]{0.49\linewidth}
        \centering
        \includegraphics[width=\linewidth]{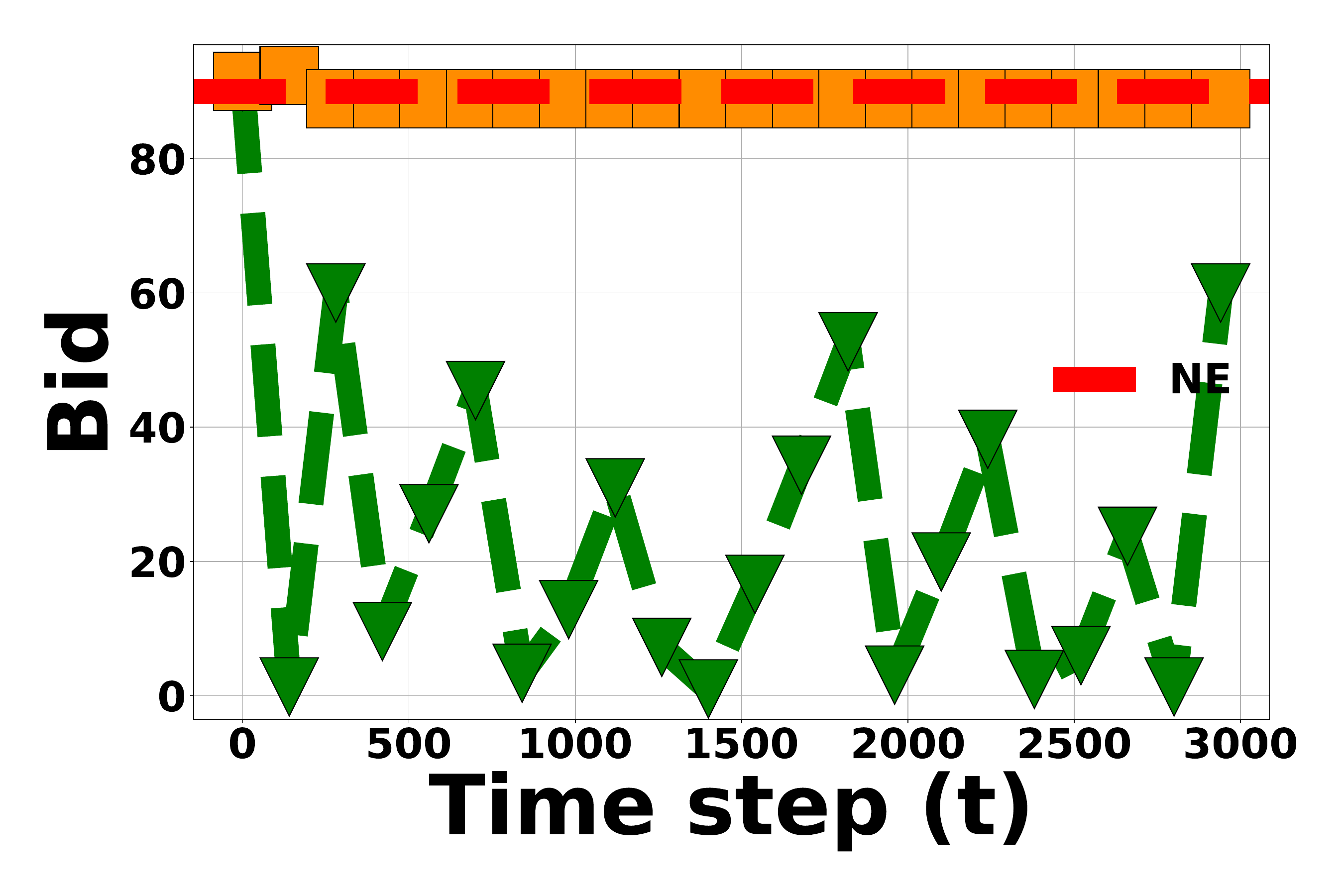}\vspace{-0.2em}
        \includegraphics[width=\linewidth]{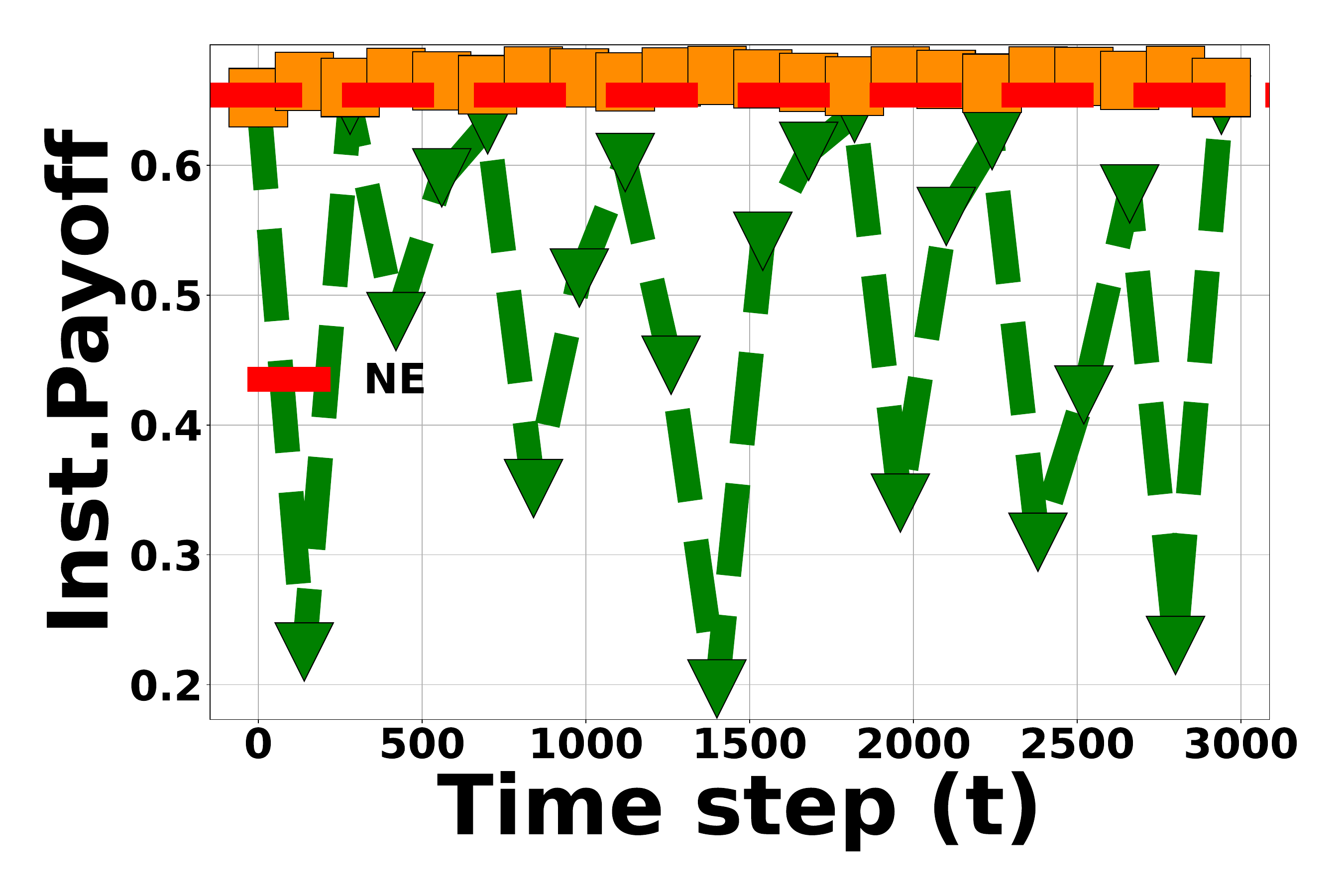}
        \caption{$\alpha_{\DAQ_{\mathrm{F}}}=90\%$}
        \label{fig:DAQ_OGD_90}
    \end{subfigure}

    \caption{Heterogeneous dynamics: $\DAQ_{\mathrm{F}}$ vs.\ $\OGD_{\mathrm{F}}$. In each subfigure: bids (top) and instantaneous payoff (bottom).}
    \label{fig:DAQ_OGD_hetero_all_alpha}
\end{figure*}

We consider now \textit{heterogeneous dynamics} where a fraction $\alpha_{\DAQ}$ of agents uses $\DAQ$ while the remaining agents use $\OGD$ with $\gamma=0$. Figures~\ref{fig:DAQ_OGD_hetero_all_alpha} shows the evolution over time of the instantaneous bid and payoff of two representative agents---one using $\OGD$ and the other using $\DAQ$---for $\alpha_{\DAQ}\in \{10\%, 20\%, 50\%, 80\%, 90\% \}$. Convergence is observed only for $\alpha_{\DAQ}\in[20\%,80\%]$. In this regime, the algorithm used by the majority of agents achieves a utility above the NE-utility, while the minority attains a lower payoff. The dynamics are nearly symmetric for $\alpha_{\DAQ}=50\%$. On the other hand, for extreme splits $\alpha_{\mathcal A_1}\in\{10\%,90\%\}$, we observe oscillations.

% \input{journal/appendix}
% \label{sec:appendix}
\begin{comment}
\newpage 

\subsection{Draft experiments}
\input{journal/draft_experiments}
\end{comment}

%\color{blue}
%\section{$\alpha$-fair functions}
%\input{journal/alpha_fair_functions}
%\color{black}

%\subsection{Best Response Dynamic: Convergence of $\alpha$-fair functions} 
%\label{sec:BR_alpha_fair} 
%\input{journal/BR_alpha_fair}

%\section{Best Response Dynamic: Convergence} 
%\color{blue}
%\label{sec:BRe} 
%\input{journal/BRe}

%%%%%%%%%%%%%%%%%%%%%%%%%%%%%%%%%%%%%%%%%%%%%%%%%%%%%%%%%%%%
%%%%%%%%%%%%%%%%%%%%%%%%%%%%%%%%%%%%%%%%%%%%%%%%%%%%%%%%%%%%%%%%%%%%%%%%%%%%%%%%%%%%%%%%%%%%%%%%%%%%

%%%%%%%%%%%%%%%%%%%%%%%%%%%%%%%%%%%%%%%%%%%%%%%%%%%%%%%%%%%%%%%%%%%%%%%%%%%%%%%%%%%%%%%%%%%%%%%%%%%%
%%%%%%%%%%%%%%%%%%%%%%%%%%%%%%%%%%%%%%%%%%%%%%%%%%%%%%%%%%%%

% \section{Related Work} 
% \label{sec:Related_work}
% \input{journal/related_work}

% \color{red}

% \section{Related work}\label{sec:relatedwork}
% \input{related_work}
\end{document}